\newtheorem{thm}{Theorem}[section]
\newtheorem{lem}[thm]{Lemma}
\newtheorem{cor}[thm]{Corollary}
\newtheorem{pro}[thm]{Proposition}
\theoremstyle{definition}
\newtheorem{ex}[thm]{Example}
\newtheorem{rmk}[thm]{Remark}
\newtheorem{defi}[thm]{Definition}
\newcommand {\emptycomment}[1]{}
\newcommand{\A}{{\mathbb A}}
\newcommand{\lon }{\,\rightarrow\,}
\newcommand{\be }{\begin{equation}}
\newcommand{\ee }{\end{equation}}
\newcommand{\Br}{\mathrm {Br}}
\newcommand{\h}{\mathfrak h}
\newcommand{\G}{\mathbb G}
\newcommand{\GGG}{\mathscr{ G}}
\newcommand{\HHH}{\mathscr{ H}}
\newcommand{\HH}{\mathbb H}
\newcommand{\R}{\mathbb R}
\newcommand{\huaB}{\mathcal{B}}
\newcommand{\huaY}{\mathcal{Y}}
\newcommand{\huaP}{\mathcal{P}}
\newcommand{\frkF}{\mathfrak F}
\newcommand{\frkG}{\mathfrak G}
\newcommand{\Id}{{\rm{Id}}}
\newcommand{\br}[1]{   [ \cdot,    \cdot  ]   }
\newcommand{\End}{\mathrm{End}}
\newcommand{\Perm}{\mathrm{Bij}}
\newcommand{\Map}{\mathrm{Map}}
\newcommand{\pr}{\mathrm{pr}}
\newcommand{\inv}{\mathrm{inv}}
\begin{document}

\title[Braided dynamical groups and dynamical Yang-Baxter equation]{Braided dynamical groups, the dynamical Yang-Baxter equation and related structures}

\author{Chengming Bai}
\address{Chern Institute of Mathematics and LPMC, Nankai University,
Tianjin 300071, China}
\email{baicm@nankai.edu.cn}

\author{Li Guo}
\address{Department of Mathematics and Computer Science,
         Rutgers University,
         Newark, NJ 07102}
\email{liguo@rutgers.edu}

\author{Yunhe Sheng}
\address{Department of Mathematics, Jilin University, Changchun 130012, Jilin, China}
\email{shengyh@jlu.edu.cn}

\author{You Wang}
\address{Department of Mathematics, Jilin University, Changchun 130012, Jilin, China}
\email{wangyou20@mails.jlu.edu.cn}


\begin{abstract}
We introduce the notion of a braided dynamical group which is a matched pair of dynamical groups satisfying extra conditions. It is shown to give a solution of the dynamical Yang-Baxter equation and at the same time a braided groupoid, thereby integrating the approaches of Andruskiewitsch and Matsumoto-Shimizu respectively that use these two notions to produce quiver-theoretical solutions of the Yang-Baxter equation.
We pursue this connection further by relative Rota-Baxter operators on dynamical groups, which give rise to matched pairs of dynamical groups. As the derived structures of relative Rota-Baxter operators on dynamical groups, dynamical post-groups are introduced and are shown to be equivalent to braided dynamical groups. Finally, skew-braces are generalized to dynamical skew-braces as another equivalent notion of braided dynamical groups.
\end{abstract}

\renewcommand{\thefootnote}{}

\footnotetext{2020 Mathematics Subject Classification.
16T25, 
17B38, 
18M15, 
20N02 
}

\keywords{braided dynamical group, dynamical Yang-Baxter equation, Rota-Baxter operator, post-group, skew brace}

\maketitle

\vspace{-1cm}

\tableofcontents

\vspace{-1cm}

\allowdisplaybreaks

\section{Introduction}

In the seminal work \cite{LYZ}, Lu-Yan-Zhu showed that braided groups give rise to solutions of the Yang-Baxter equation. The purpose of this paper is to generalize this result to the dynamical setting, which allows to integrate Andruskiewitsch's construction of quiver-theoretical solutions of the Yang-Baxter equation by braided groupoids with another construction by Matsumoto and Shimizu.

\subsection{Set-theoretical solutions of the Yang-Baxter equation and related structures}

In 1967, Yang derived the Yang-Baxter equation as the consistency condition for factorization in the quantum many-body problem \cite{Yang}. In 1972, the same equation emerged independently in Baxter's analysis of the eight-vertex model \cite{Baxter}. A solution to the Yang-Baxter equation on a vector space \( V \) is an invertible linear map $ R : V \otimes V \to V \otimes V $ satisfying
$$ (R \otimes \mathrm{Id}_V)(\mathrm{Id}_V \otimes R)(R \otimes \mathrm{Id}_V) = (\mathrm{Id}_V \otimes R)(R \otimes \mathrm{Id}_V)(\mathrm{Id}_V \otimes R). $$
The Yang-Baxter equation plays an essential role in broad areas of mathematics and mathematical physics, including integrable systems, quantum groups, Hopf algebras, knot theory, statistical mechanics, quantum field theory and braided categories.

In \cite{Drinfeld}, Drinfeld proposed to study set-theoretical solutions of the Yang-Baxter equation, which later on have emerged as an actively pursued research area, attracting considerable attention from mathematicians and mathematical physicists. Etingof-Schedler-Soloviev \cite{Etingof1} focused on the non-degenerate involutive set-theoretical solutions and the associated structure groups which can be used to classify the set-theoretical solutions. Lu-Yan-Zhu \cite{LYZ} showed that braided groups (as well as  compatible actions and braiding operators on groups) give rise to solutions of the Yang-Baxter equation, and a set-theoretical solution of the Yang-Baxter equation on a set $X$ can be generalized to a set-theoretical solution on its associated structure group $G_X$.

Rump \cite{Rump3,Rump2} introduced the notion of braces as a generalization of radical rings, and showed that braces are equivalent to linear cycle sets and give rise to non-degenerate involutive set-theoretical solutions of the Yang-Baxter equation. Guarnieri and Vendramin \cite{GV} introduced the notion of skew braces as a nonabelian generalization of braces and proved that skew braces produce non-degenerate (not involutive) set-theoretical solutions in different ways. In \cite{Gubarev}, Bardakov and Gubarev
established the relation between skew braces and Rota-Baxter groups introduced in \cite{GLS}. As the derived structure of Rota-Baxter operators on groups, the notion of post-groups was introduced in  \cite{BGST}, which are equivalent to skew braces and thus provide set-theoretical solutions of the Yang-Baxter equation.
\vspace{-.2cm}
\subsection{The dynamical Yang-Baxter equation}

As a significant generalization of the Yang-Baxter equation, the quantum dynamical Yang-Baxter equation was first introduced by Gervais and Neveu in physics \cite{GN} and then studied by Felder \cite{F1,F2} from a mathematical viewpoint. Let $\h$ be a finite-dimensional abelian Lie algebra over $\mathbb{C}$ and $V$ be a semisimple finite-dimensional $\h$-module. Then the quantum dynamical Yang-Baxter equation \cite{Etingof4} with respect to a meromorphic function $R:\h^*\to \End(V \otimes V)$ is given as follows:
$$
R^{12}(\lambda-\gamma h^{(3)})R^{13}(\lambda) R^{23}(\lambda-\gamma h^{(1)})=R^{23}(\lambda)  R^{13}(\lambda-\gamma h^{(2)}) R^{12}(\lambda),
$$
where $\gamma$ is a complex number, $R^{12}(\lambda-\gamma h^{(3)})(v_1\otimes v_2 \otimes v_3):=(R^{12}(\lambda-\gamma \mu)(v_1\otimes v_2))\otimes v_3$ if $v_3$ has weight $\mu$, and
$R^{13}(\lambda-\gamma h^{(2)}),R^{23}(\lambda-\gamma h^{(1)})$ are defined analogously.
A quantum dynamical $R$-matrix is an invertible solution of this equation satisfying the $\h$-invariant condition.
Etingof and Schiffmann \cite{Etingof2} found out that exchange matrices are solutions of the quantum dynamical Yang-Baxter equation using the exchange construction. Etingof and Varchenko also considered the classical dynamical Yang-Baxter equation   as the quasiclassical analogue of the quantum dynamical Yang-Baxter equation \cite{Etingof3}. The quantum dynamical Yang-Baxter equation and the classical dynamical Yang-Baxter equation appeared in many areas of mathematics and mathematical physics, such as conformal field theory \cite{F1}, dynamical quantum groups \cite{Etingof4,KS}, integrable systems \cite{F1,SZ}, Lie superalgebras \cite{Kar}, moduli spaces of flat connections \cite{Xu} and representation theory \cite{Etingof4,Sto}.

Motivated by Drinfeld's proposal to study set-theoretical solutions of the Yang-Baxter equation, it is natural to study the set-theoretical analogue of the quantum dynamical Yang-Baxter equation. Shibukawa \cite{Shibukawa1,Shibukawa2,Shibukawa3} formulated the dynamical Yang-Baxter equation:
\vspace{-.1cm}
$$ R(\lambda)_{12} R(\phi_X(\lambda,X^{1}))_{23} R(\lambda)_{12}=R(\phi_X(\lambda,X^{1}))_{23}R(\lambda)_{12}R(\phi_X(\lambda,X^{1}))_{23}. $$
See Definition \ref{defi:dYBE} for more details. In fact, the Yang-Baxter equation can be defined in any monoidal category. The usual Yang-Baxter equation is defined in the monoidal category of vector spaces. Note that the category of dynamical sets ${\bf DSet_{\Lambda}}$ is also a monoidal category, and the above dynamical Yang-Baxter equation is exactly the Yang-Baxter equation defined in the monoidal category ${\bf DSet_{\Lambda}}$. Up to now, dynamical Yang-Baxter equation and its set-theoretical solutions have been studied extensively.
In \cite{ST10}, Shibukawa and Takeuchi studied an analogue of FRT construction in the dynamical setting.
Shibukawa also discovered that Hopf algebroids and rigid tensor categories can be constructed using dynamical Yang-Baxter maps \cite{Shi16}.
Matsumoto introduced the notion of dynamical braces, which can give rise to unitary solutions of the dynamical Yang-Baxter equation \cite{Matsumoto}.

Andruskiewitsch considered the Yang-Baxter equation in the monoidal category of quivers ${\bf Quiv_{\Lambda}}$ in \cite{Andruskiewitsch}, and proved that braided groupoids give rise to quiver-theoretical solutions of the Yang-Baxter equation. Recently, the notion of post-groupoids was introduced in \cite{STZ}, which are equivalent to braided groupoids, and also give rise to quiver-theoretical solutions of the Yang-Baxter equation.
There is another construction of quiver-theoretical solutions of the Yang-Baxter equation. In \cite{MS}, Matsumoto and Shimizu found an embedding monoidal functor $Q$ from the monoidal category ${\bf DSet_{\Lambda}}$ of dynamical sets to the monoidal category  ${\bf Quiv_{\Lambda}}$ of quivers. They further showed that a solution of the dynamical Yang-Baxter equation produces a quiver-theoretical solution of the Yang-Baxter equation on the induced quiver.
\vspace{-.2cm}
\subsection{Main results and outline of the paper}
The first purpose of this paper is to introduce the notion of a braided dynamical group as a special matched pair of dynamical groups (Definition \ref{def-braid-d-group}), with an application to
integrate the above two approaches to the construction of quiver-theoretical solutions of the Yang-Baxter equation.
On the one hand, this notion allows us to generalize the braided groups approach to the Yang-Baxter equation developed in the seminal work \cite{LYZ} to the dynamical setting.
More precisely, we show that a braided dynamical group gives rise to a solution of the dynamical Yang-Baxter equation (Theorem~\ref{braided-solution}). On the other hand,  we show that the aforementioned functor $Q:{\bf DSet_{\Lambda}}\to {\bf Quiv_{\Lambda}}$ sends dynamical groups to groupoids. We further generalize this construction to the braided setting and show that the functor $Q$ sends braided dynamical groups to braided groupoids (Theorem~\ref{braided-d-group-braided-gpd}). Therefore, the two  approaches to obtain a quiver-theoretical solution of the Yang-Baxter equation, one by Andruskiewitsch's construction from braided groupoids \cite{Andruskiewitsch} and another one by Matsumoto-Shimizu's construction from solution of the dynamical Yang-Baxter equation \cite{MS}, can both be viewed under the lens of braided dynamical groups (Corollary~\ref{c:twoconst}).

Given the critical role of braided dynamical groups in solving the various Yang-Baxter equations, our second purpose is to further understand this notion, by generalizing to the dynamical setting several well-known structures, including (skew) braces, Rota-Baxter operators and post-groups. We start with the notion of relative Rota-Baxter operators on dynamical groups, which lead to the  factorization of dynamical groups. We prove that relative Rota-Baxter operators on dynamical groups lead to matched pairs of dynamical groups (Proposition~\ref{rRBO-mp}). We then introduce the notion of dynamical post-groups, which are equivalent to braided dynamical groups (Theorem \ref{DPG-BDG} and Proposition \ref{BDG-DPG}) and closely related to relative Rota-Baxter operators on dynamical groups (Propositions \ref{rRBO-d-post-group} and \ref{Id-rRBO}).
We finally study dynamical skew-braces and show their equivalence to dynamical post-groups (Propositions~\ref{dpg-dsb} and \ref{dsb-dpg}), and thereby braided dynamical groups.

The main notions and results of the paper can be summarized in the following diagram where the bold-faced terms and dashed arrows are newly obtained.
\vspace{-.1cm}
\begin{equation*}
\vcenter{\hbox{\footnotesize
\xymatrix@C=5.0em@R=2.5em{
*+[F]\txt{\footnotesize \bf relative RBOs \\ \bf on dynamical groups} \ar@{-->}[r]^{\text{Prop~\ref{rRBO-mp}}}
\ar@<1.0ex>@{-->}[d]_{\text{Prop~\ref{rRBO-d-post-group}}~~} &
*+[F]\txt{\footnotesize \bf matched pairs \\ \bf of dynamical groups}   \ar@{-->}[r]^{\text{Prop~\ref{mp-groupoid-double}}}
\ar@{<-_{)}}[d(0.65)] &
*+[F]\txt{\footnotesize matched pairs \\ of groupoids}
\ar@{<-_{)}}[d(0.65)] \\
*+[F]\txt{\footnotesize {\bf dynamical}\\ \bf post-groups} \ar@{-->}[u]_{~~\text{Prop~\ref{Id-rRBO}}}
\ar@{<==>}[r]^{\text{Thm~\ref{DPG-BDG}}}_{\text{Prop~\ref{BDG-DPG}}}
\ar@{<==>}[d]^{\text{Prop~\ref{dpg-dsb}}}_{\text{Prop~\ref{dsb-dpg}}}
& *+[F]\txt{\footnotesize \bf braided\\ \bf dynamical groups}   
  \ar@{-->}[r]^{\text{Thm~\ref{braided-d-group-braided-gpd}}}
  \ar@{-->}[d]^{\text{Thm~\ref{braided-solution}}}
& *+[F]\txt{\footnotesize braided\\ groupoids}
  \ar[d]^{\text{Thm \ref{QYBE}}} \\
*+[F]\txt{\footnotesize  \bf dynamical \\ \bf skew braces}
  \ar@{-->}[r]^{\text{Cor~\ref{d-skew-brace-solution}}}
& *+[F]\txt{\footnotesize solution of\\DYBE}
  \ar[r]^{\text{Thm~\ref{br-dset-br-quiver}}}
& *+[F]\txt{\footnotesize quiver-theoretical \\solution of YBE}
}}}
\end{equation*}

The paper is organized as follows. In Section \ref{sec:braided-d-groups}, we introduce braided dynamical groups, which lead to solutions of the dynamical Yang-Baxter equation. In Section \ref{sec:relative-RBO-mp}, we give the notion of relative Rota-Baxter operators on dynamical groups, which give rise to matched pairs of dynamical groups. In Section \ref{sec:d-post-group}, as the derived structure of relative Rota-Baxter operators on dynamical groups, the dynamical post-groups are defined, which are equivalent to braided dynamical groups. In Section \ref{sec:d-skew-brace}, we introduce the notion of dynamical skew-braces as a generalization of skew braces to the dynamical setting, and show that dynamical skew-braces are equivalent to dynamical post-groups, as well as braided dynamical groups.

\vspace{-.2cm}

\section{Matched pairs of dynamical groups, braided dynamical groups and the dynamical Yang-Baxter equation}\label{sec:braided-d-groups}

In this section, we first recall the notions of dynamical sets and the dynamical Yang-Baxter equation. Then we introduce the notion of a matched pair of dynamical groups. A braided dynamical group is a special matched pair of dynamical groups, and produces a solution of the dynamical Yang-Baxter equation. Moreover, we show that braided dynamical groups can give rise to braided groupoids, which are associated to the quiver-theoretical solutions of the Yang-Baxter equation.

\subsection{Dynamical sets and dynamical Yang-Baxter equation}

Throughout this paper, we denote by $\Lambda$ a fixed set of dynamical parameters. We start with a brief review of the category ${\bf DSet_{\Lambda}}$ of dynamical sets over $\Lambda$, which is a monoidal category introduced in \cite{MS,Rump,Shibukawa3}. A {\bf dynamical set} is a set $X$ together with a {\bf structure map} $\phi_X: \Lambda \times X \to \Lambda $. A dynamical set will be denoted by $\mathscr{X}=(X,\Lambda,\phi_X)$. The component maps $\phi_X(\lambda,-):X\to \Lambda$ are denoted by $\phi_X^\lambda$. $\mathscr{S}=(S,\Lambda,\phi_{S})$ is called a {\bf dynamical subset} of $\mathscr{X}$, if $S$ is a subset of $X$ and $\phi_{S}$ satisfies $\phi_{S}(\lambda,a)=\phi_{X}(\lambda,a)$ for all $\lambda\in \Lambda,a \in S$. A dynamical set
$\mathscr{X}=(X,\Lambda,\phi_X)$ is called {\bf constant}, if $\phi_X(\lambda,x)=\lambda$ for all $\lambda\in \Lambda,x\in X$. A morphism of dynamical sets from $\mathscr{X}=(X,\Lambda,\phi_X)$ to $\mathscr{Y}=(Y,\Lambda,\phi_Y)$ is  a map $f:\Lambda  \times X \to Y $ satisfying
\begin{equation}\label{morphism-d-set}
\phi_Y(\lambda,f(\lambda,x))=\phi_X(\lambda,x),\quad \forall \lambda\in \Lambda, x\in X.
\end{equation}
 In other words, a morphism is a collection of maps $f_\lambda:X\to Y$ with $f_\lambda=f(\lambda,-)$ which makes the following diagram commutative for all $\lambda\in \Lambda$:
\[
        \vcenter{\xymatrix{
  X \ar[d]_{\phi_X^\lambda} \ar[r]^{f_\lambda}
                & Y \ar[d]^{\phi_Y^\lambda}  \\
   \Lambda  \ar[r]_{\Id}
                &  \Lambda
                }}
    \]
The composition $g \circ f: \Lambda  \times X \to Z $ of morphisms $f:\Lambda  \times X \to Y $ and $g:\Lambda  \times Y \to Z $ is defined by
$$(g \circ f)(\lambda,x):=g(\lambda,f(\lambda,x)), \quad \forall \lambda\in \Lambda,$$
which is also expressed as
\begin{eqnarray}\label{composition-maps}
 (g \circ f)_{\lambda}=g_\lambda \circ f_\lambda.
\end{eqnarray}

The category ${\bf DSet_{\Lambda}}$ has a terminal object, given by $\Lambda$ together with the second projection $\phi_{\Lambda}=\Pr_2:\Lambda \times \Lambda \to \Lambda$  as the structure map.

Given two objects $\mathscr{X}=(X,\Lambda,\phi_X),\mathscr{Y}=(Y,\Lambda,\phi_Y) \in {\bf DSet_{\Lambda}}$, we define their tensor product
$$ \mathscr{X} \otimes \mathscr{Y}:=(X\times Y,\Lambda,\phi_{X\times Y}) \in {\bf DSet_{\Lambda}},$$
where $X\times Y$ is the Cartesian product and the structure map $\phi_{X\times Y}:\Lambda \times X\times Y \to \Lambda$ is given by
\begin{equation}\label{phi-tensor}
\phi_{X\times Y}(\lambda,x,y):=\phi_Y (\phi_X(\lambda,x),y), \quad \forall \lambda \in \Lambda, x\in X, y\in Y.
\end{equation}
For morphisms $f:\Lambda \times X \to X'$ and $g:\Lambda \times Y \to Y'$ in ${\bf DSet_{\Lambda}}$, their tensor product $f\otimes g:\Lambda \times (X \times Y) \to X' \times Y'$ is given by
$$ (f\otimes g)(\lambda,x,y)=\Big(f(\lambda,x),g(\phi_X(\lambda,x),y) \Big). $$
The singleton ${\bf 1}:=\{ 1 \}$ is an object of ${\bf DSet_{\Lambda}}$ with the structure map $\phi_{\bf 1}(\lambda,1)=\lambda$ for all $\lambda \in \Lambda.$ There are natural isomorphisms
$$ a_{X,Y,Z}: (\mathscr{X}\otimes \mathscr{Y}) \otimes \mathscr{Z} \to \mathscr{X} \otimes (\mathscr{Y} \otimes \mathscr{Z}),\quad l_X: ({\bf 1}\otimes \mathscr{X})\to \mathscr{X}, \quad r_X:(\mathscr{X} \otimes{\bf 1})  \to \mathscr{X} $$
for objects $\mathscr{X}=(X,\Lambda,\phi_X),\mathscr{Y}=(Y,\Lambda,\phi_Y),\mathscr{Z}=(Z,\Lambda,\phi_Z)\in {\bf DSet_{\Lambda}}$ defined by
$$  a_{X,Y,Z}(\lambda,(x,y),z)=(x,(y,z)),\quad  l_X(\lambda,(1,x))=x,\quad r_X(\lambda,(x,1))=x,$$
for all $\lambda\in \Lambda, x\in X, y\in Y, z\in Z$. Therefore, the category ${\bf DSet_{\Lambda}}$ is a monoidal category with the tensor product $\otimes$, the unit object ${\bf 1}$ and the natural isomorphisms defined as above.

\begin{rmk}
Dynamical sets can be viewed as a discrete dynamical systems. By regarding  $\Lambda$ as the given system, the structural map $\phi_X:\Lambda \times X \to \Lambda$ constitutes a family of iterative functions $\{\phi_x:\Lambda \to \Lambda\}_{x\in X}$ that capture the evolution of the system.
\end{rmk}

Consider the braided equation and braided objects in the monoidal category ${\bf DSet_{\Lambda}}$, one obtains the following notion of dynamical Yang-Baxter equation, originally given in \cite{Shibukawa1}.

\begin{defi}\label{defi:dYBE}
A solution of the {\bf dynamical Yang-Baxter equation} on a dynamical set $\mathscr{X}=(X,\Lambda,\phi_X)$ is an isomorphism of dynamical sets $R: \mathscr{X} \otimes \mathscr{X} \to \mathscr{X} \otimes \mathscr{X} $ satisfying:
\begin{eqnarray}\label{DYBE}
R(\lambda)_{12} R(\phi_X(\lambda,X^{1}))_{23} R(\lambda)_{12}=R(\phi_X(\lambda,X^{1}))_{23}R(\lambda)_{12}R(\phi_X(\lambda,X^{1}))_{23},
\end{eqnarray}
where $R(\lambda)_{12},R(\phi_X(\lambda,X^{1}))_{23}:X \times X \times X  \to X \times X \times X $ are given by
\begin{eqnarray*}
R(\lambda)_{12}(x,y,z)=(R(\lambda,x,y),z), \quad R(\phi_X(\lambda,X^{1}))_{23}(x,y,z)=(x,R(\phi_X(\lambda,x),y,z)),
\end{eqnarray*}
for all $ \lambda\in \Lambda, x,y,z\in X.$ We denote $R(\lambda,x,y)=(\varphi^{\lambda}_{x}(y),\psi^{\lambda}_{y}(x))$. The solution $R$ is called {\bf non-degenerate} if $\varphi^{\lambda}_{x}$ and $\psi^{\lambda}_{y}$ are bijective for all $\lambda \in \Lambda,x,y\in X$.

A dynamical set $\mathscr{X}$ with a solution $R$ of the dynamical Yang-Baxter equation   is called a {\bf braided dynamical set} and denoted by $(\mathscr{X} ,R)$. Denote by $\Br({\bf DSet_{\Lambda}})$ the category of braided dynamical sets over $\Lambda$.

\end{defi}

\begin{rmk}
By \eqref{morphism-d-set} and \eqref{phi-tensor}, $R$ is a morphism of dynamical sets from $\mathscr{X} \otimes \mathscr{X} $ to $\mathscr{X} \otimes \mathscr{X} $ if and only if the following condition is satisfied:
\begin{eqnarray}\label{weight-zero}
\phi_X (\phi_X(\lambda,\varphi^{\lambda}_{x}(y)),\psi^{\lambda}_{y}(x))=
\phi_X (\phi_X(\lambda,x),y), \quad \forall\lambda\in \Lambda,x,y\in X,
\end{eqnarray}
which is called {\bf the weight zero condition} {\rm(\cite{Rump})} or {\bf invariant condition} {\rm(\cite{Shibukawa2})}.
\end{rmk}

\begin{rmk}
Note that $R$ is an isomorphism if and only if the components $R(\lambda):X \times X \to X \times X$ are bijective for all $\lambda \in \Lambda$; while in the original definition given in \cite{MS,Shibukawa1}, $R$ is not required to be an isomorphism.
\end{rmk}

\subsection{Matched pairs of dynamical groups and braided dynamical groups}

We first recall the notion of a dynamical group, which is a group object in the monoidal category ${\bf DSet_{\Lambda}}$, as an important ingredient in the definition of braided dynamical groups.

\begin{defi}{\rm(\cite{Rump})}\label{dynamical-group}
A {\bf dynamical group} structure on a dynamical set $\mathscr{G}=(G, \Lambda,\phi_{G})$ consists of the following data:
\begin{itemize}
\item[{\rm(i)}] a morphism (called the {\bf product}) $\circ:\mathscr{G}\otimes \mathscr{G}\to \mathscr{G}$ such that the following associative law holds:
\begin{equation}\label{product}
  a \circ_\lambda (b \circ_{\phi_{G}(\lambda,a)} c)=
  (a \circ_\lambda b) \circ_\lambda c, \quad \forall \lambda\in \Lambda, a,b,c\in G,
\end{equation}
where  $\circ_\lambda:G\times G\to G$ is the components of $\circ$;
\item[{\rm(ii)}]  an element $e_G\in G$ (called the {\bf unit}) such that
\begin{equation}\label{unit}
a \circ_\lambda e_G=e_G \circ_\lambda a=a,\quad \phi(\lambda,e_G)=\lambda,\quad\forall \lambda\in \Lambda, a\in G;
\end{equation}
\item[{\rm(iii)}] for all $\lambda\in \Lambda, a\in G$, there always exists $\bar{a}^\lambda\in G$ (called the {\bf inverse} of $a$ with respect to the product $\circ_\lambda$) such that $$a \circ_\lambda \bar{a}^\lambda=\bar{a}^\lambda \circ_{\phi_{G}(\lambda,a)} a =e_G.$$
\end{itemize}

 We will denote a dynamical group by $(\mathscr{G},\circ)$.
 If $(\mathscr{G},\circ)$ only satisfies the associative law {\rm(i)}, then it is called a {\bf dynamical semi-group}.
\end{defi}

\begin{rmk}
\begin{itemize}
\item[{\rm(i)}] The associative law in Definition \ref{dynamical-group} comes from the notion of group objects in the monoidal category ${\bf DSet_{\Lambda}}$.
\item[{\rm(ii)}] For a dynamical group $(G, \Lambda,\phi_{G},\{\circ_\lambda\}_{\lambda\in \Lambda})$,  the unit $e_G$ is unique.
\item[{\rm(iii)}] For any $\lambda\in \Lambda, a\in G$, the inverse element $\bar{a}^\lambda$ is also uniquely determined according to the associative law  in Definition \ref{dynamical-group}.
\end{itemize}
\end{rmk}

\begin{rmk}
Since the product $\circ: \mathscr{G} \otimes \mathscr{G} \to \mathscr{G}$ is a morphism from $\mathscr{G} \otimes \mathscr{G}$ to $\mathscr{G}$ in the category ${\bf DSet_{\Lambda}}$,  by \eqref{morphism-d-set} and \eqref{phi-tensor}, it follows that
\begin{eqnarray}\label{phi-asso}
\phi_{G}(\lambda,a\circ_{\lambda} b) =\phi_{G}(\phi_{G}(\lambda,a),b), \quad \forall \lambda\in \Lambda,a,b\in G.
\end{eqnarray}

Note that the condition \eqref{phi-asso} is also deduced in \cite[Proposition 3.4]{Matsumoto} using a different approach.
\end{rmk}

\begin{ex}\label{ex-d-group-R}
Let $(\mathbb{R},+,\cdot)$ be the field of real numbers. Then $(\mathbb{R},\Lambda=\mathbb{R},\phi)$ is a dynamical set, where  $\phi:\mathbb{R} \times \mathbb{R}\to \mathbb{R}$ is given by
$$
\phi(\lambda,a)=\lambda \cdot (\lambda a+1).
$$
Define the product $\circ_{\lambda}:\mathbb{R} \times \mathbb{R} \to \mathbb{R}$ by
\begin{eqnarray*}
a \circ_{\lambda} b&=&a+(\lambda a+1)^{2} \cdot b,
\end{eqnarray*}
for all $\lambda,a,b\in \mathbb{R}$.
Then, we have
\begin{eqnarray*}
(a \circ_{\lambda} b) \circ_{\lambda} c &=& a+(\lambda a+1)^{2} \cdot b+(\lambda a+\lambda(\lambda a+1)^{2}\cdot b+1)^2 \cdot c\\
&=& a+(\lambda a+1)^{2}\cdot (b+(\lambda^2 ab+\lambda b+1)^2 \cdot c)\\
&=& a \circ_{\lambda} (b \circ_{\phi(\lambda,a)} c),
\end{eqnarray*}
which implies that $\circ_{\lambda}$ satisfies the associative law {\rm(i)} in Definition \ref{dynamical-group}. Moreover, we can check that $0$ is the unit and $a^{\lambda}=-\frac{a}{(\lambda a+1)^{2}}$ is the inverse of $a$ with respect to $\circ_{\lambda}$ for all $\lambda,a \in \mathbb{R}.$ Therefore,  $(\mathbb{R},\Lambda=\mathbb{R},\phi,\{\circ_{\lambda}\}_{\lambda\in\R})$ is a dynamical group.
\end{ex}

\begin{ex}\label{ex-d-group}
Let $G=\{0,1,2\}$ be a set containing three elements. Then $(G,\Lambda=\{\lambda_1,\lambda_2,\lambda_3\},\phi)$ is a dynamical set, where $\phi:\Lambda \times G \to \Lambda$ is given by
\begin{center}
\begin{minipage}{0.25\textwidth}
\centering
\(
\begin{array}{c|ccc}
    \phi & 0 & 1 & 2 \\ \hline
    \lambda_1 & \lambda_1 & \lambda_3 & \lambda_2 \\
    \lambda_2 & \lambda_2 & \lambda_3 & \lambda_1 \\
    \lambda_3 & \lambda_3 & \lambda_1 & \lambda_2 \\
\end{array}
\)
\par\vspace{0.5ex}
\end{minipage}%
\end{center}
Define the product $\circ_{\lambda_i}:G \times G \to G$ as follows.
\begin{center}
\begin{minipage}{0.25\textwidth}
\centering
\(
\begin{array}{c|ccc}
    \circ_{\lambda_1} & 0 & 1 & 2 \\ \hline
    0 & 0 & 1 & 2 \\
    1 & 1 & 0 & 2 \\
    2 & 2 & 1 & 0 \\
\end{array}
\)
\end{minipage}%
\hspace{0.5em}
\begin{minipage}{0.25\textwidth}
\centering
\(
\begin{array}{c|ccc}
    \circ_{\lambda_2} & 0 & 1 & 2 \\ \hline
    0 & 0 & 1 & 2 \\
    1 & 1 & 2 & 0 \\
    2 & 2 & 1 & 0 \\
\end{array}
\)
\end{minipage}%
\hspace{0.5em}
\begin{minipage}{0.25\textwidth}
\centering
\(
\begin{array}{c|ccc}
    \circ_{\lambda_3} & 0 & 1 & 2 \\ \hline
    0 & 0 & 1 & 2 \\
    1 & 1 & 0 & 2 \\
    2 & 2 & 0 & 1 \\
\end{array}
\)
\end{minipage}
\end{center}
By a direct verification, we check that $(G=\{0,1,2\}, \Lambda=\{\lambda_1,\lambda_2,\lambda_3\},\phi,\{\circ_{\lambda_i}\}_{\lambda_i\in \Lambda})$ is a dynamical group.
\end{ex}

\begin{defi}
Let $(\mathscr{G},\circ)$ be a dynamical group. A dynamical subset $\mathscr{S}=(S,\Lambda,\phi_S)\subset \mathscr{G}$ is  called a {\bf dynamical subgroup}  if  the following conditions are satisfied.
\begin{itemize}
\item[{\rm(i)}] For all $\lambda\in \Lambda$ and $ a,b\in S,$  we have $ a \circ_{\lambda} b \in S$.
\item[{\rm(ii)}] $e_G \in S$, where $e_G$ is the unit of the dynamical group  $(\mathscr{G},\circ)$.
\item[{\rm(iii)}] For all $\lambda\in \Lambda, a\in S$, the inverse $\bar{a}^\lambda$ is in $S$.
\end{itemize}
\end{defi}

A {\bf homomorphism of dynamical groups} from $(\mathscr{G},\circ)$ to $(\mathscr{H},\circ')$ is a morphism of dynamical sets $\Psi:\mathscr{G} \to \mathscr{H}$ such that
\begin{eqnarray}\label{homo-d-group}
\Psi_{\lambda}(a \circ_\lambda b)= \Psi_{\lambda}(a)\circ'_{\lambda}\Psi_{\phi_G(\lambda,a)}
(b)=\Psi_{\lambda}(a)\circ'_{\lambda}\Psi_{\phi_{H}(\lambda,\Psi(\lambda,a))}(b), \quad \forall \lambda \in \Lambda, a,b\in G.
\end{eqnarray}

Now we introduce the notion of a constant dynamical group.

\begin{defi}
A dynamical group $(\mathscr{H},\cdot)$ is called a {\bf constant dynamical group} if $$\phi_H(\lambda,x)=\lambda,\quad \forall \lambda\in \Lambda, x\in H.$$
\end{defi}

We usually denote by $x^{\lambda}$ the inverse of $x\in H$ with respect to the product $\cdot_\lambda$ in a constant dynamical group $(\mathscr{H},\cdot)$, while we denote by $\bar{a}^{\lambda}$ the inverse of $a\in G$ with respect to the product $\circ_\lambda$ in an arbitrary dynamical group $(\mathscr{G},\circ)$.

\begin{rmk}
Let $(\mathscr{H},\cdot)$ be a constant dynamical group. Since $\phi_H(-,x)=\Id_{\Lambda}$ for all $x\in H$, by \eqref{product}, we have
$x \cdot_\lambda (y \cdot_\lambda z)=(x \cdot_\lambda y) \cdot_\lambda z$    for all
$ \lambda\in \Lambda, x,y,z\in H, $ which implies that a constant dynamical group can be viewed as a collection of groups on the same set  indexed by $\lambda\in \Lambda$ with their units identified.
\label{r:constsyngp}
\end{rmk}

\begin{defi}
A constant dynamical group $(\mathscr{H},\cdot)$ is called a {\bf trivial dynamical group} if groups $(H,\cdot_\lambda)$ are isomorphic for all $\lambda\in \Lambda$.
\end{defi}

\begin{ex}\label{group-trivial}
Every group $(H,+)$ gives rise to a trivial dynamical group $(\mathscr{H},\cdot)$ with $\cdot_\lambda:=+$ for all $\lambda\in \Lambda.$
\end{ex}

\begin{ex}
Let $(\mathbb{R},+,\cdot)$ be the field of real numbers and $\Lambda=\{2k+1|k\in \mathbb{N}\}$.  Define $\cdot_\lambda:\mathbb{R} \times \mathbb{R} \to \mathbb{R}$ by
$$  a\cdot_\lambda b=\sqrt[\lambda]{a^{\lambda}+b^{\lambda} },\quad \forall \lambda\in \Lambda, a,b\in \mathbb{R}.$$
Define bijective maps $f_{\lambda}:\mathbb{R} \to \mathbb{R}$ by $f_{\lambda}(a)=a^{\lambda}$ for all $\lambda\in \Lambda$. Then we have
$$ f_{\lambda}(a\cdot_{\lambda} b)=a^\lambda+b^\lambda=f_{\lambda}(a)\cdot_{1} f_{\lambda}(b),\quad \forall a,b\in \mathbb{R},$$
which implies that $f_{\lambda}:(\mathbb{R},\cdot_{\lambda}) \to (\mathbb{R},\cdot_{1})$ are all isomorphisms of groups for all $\lambda\in \Lambda$. Therefore, $(\mathbb{R},\Lambda,\{\cdot_\lambda\}_{\lambda\in \Lambda})$ is a trivial dynamical group. However, the group structures $(\mathbb{R},\cdot_\lambda)$ are different for all $\lambda\in \Lambda$.
\end{ex}

Based on Remark~\ref{r:constsyngp}, we obtain the following examples.
\begin{ex}
Let $H$ be a set with four elements. Let $(H,\cdot_{\lambda_1})$ be the cyclic group $\mathbb{Z}_4$ and $(H,\cdot_{\lambda_2})$ be the Klein four-group $\mathbb{Z}_2\times \mathbb{Z}_2$. Then $(H, \Lambda=\{\lambda_1,\lambda_2\},\{\cdot_\lambda\}_{\lambda\in \Lambda})$ is a constant dynamical group, which is not trivial.
\end{ex}

\begin{ex}{\rm
Let $H$ be a set with eight elements. Then
$(H, \Lambda=\{\lambda_1,\lambda_2,\lambda_3,\lambda_4,\lambda_5\},\{\cdot_\lambda\}_{\lambda\in \Lambda})$ is a constant dynamical group, where
\begin{itemize}
\item[{\rm(i)}] $(H,\cdot_{\lambda_1})$ is the cyclic group $\mathbb{Z}_8$;
\item[{\rm(ii)}] $(H,\cdot_{\lambda_2})$ is the abelian group $\mathbb{Z}_4\times \mathbb{Z}_2$;
\item[{\rm(iii)}] $(H,\cdot_{\lambda_3})$ is the abelian group $\mathbb{Z}_2\times \mathbb{Z}_2\times \mathbb{Z}_2$;
\item[{\rm(iv)}] $(H,\cdot_{\lambda_4})$ is the dihedral group of order eight $\mathbb{D}_4=\langle r,s~|~r^4=s^2=e,~sr=r^{-1}s \rangle$, which is a non-abelian group;
\item[{\rm(v)}] $(H,\cdot_{\lambda_5})$ is the quaternion group $\mathbb{Q}_8=\langle i,j,k~|~i^2=j^2=k^2=ijk=-1\rangle$, which is a non-abelian group.
\end{itemize}
In general, given a classification of groups with a given order, it can induce a (non-trivial) constant dynamical group with the same order.
}\end{ex}

Recall from \cite{GV}  that a {\bf skew brace} $(H,\cdot_H,\circ_H)$ consists of a group $(H,\cdot_H)$ and a group $(H,\circ_H)$ such that
$$ a\circ_H (b \cdot_H c)=(a\circ_H b)\cdot_H a^{-1} \cdot_H (a\circ_H c),\quad \forall a,b,c\in H, $$
where $a^{-1}$ is the inverse of $a$ in the group $(H,\cdot_H)$. In particular, if $(H,\cdot_H)$ is an abelian group, then $(H,\cdot_H,\circ_H)$ is called a {\bf brace}.

\begin{ex}
Every (skew) brace $(H,\cdot_H,\circ_H)$ can be viewed as a  constant dynamical group $(H, \Lambda=\{\lambda_1,\lambda_2\},\{\cdot_\lambda\}_{\lambda\in \Lambda})$ with $\cdot_{\lambda_1}:=\cdot_H$ and $\cdot_{\lambda_2}:=\circ_H$.
\end{ex}

Let us introduce the notion of a matched pair of dynamical groups.
\begin{defi}
A {\bf matched pair of dynamical groups} is a triple $(\mathscr{G},\mathscr{H},\sigma)$, where $(\mathscr{G},\circ)$ and $(\mathscr{H},\cdot)$ are dynamical groups and
$$ \sigma:\Lambda \times G \times H \to H\times G,\quad (\lambda,a,x)\mapsto (a\overset{\lambda}{\rightharpoonup} x,a\overset{\phi_G(\lambda,a)}{\leftharpoonup\joinrel\relbar\joinrel\relbar\joinrel\relbar} x )   $$
is a morphism of dynamical sets from $\mathscr{G} \otimes \mathscr{H}$ to $\mathscr{H} \otimes \mathscr{G}$ satisfying the following conditions:
\begin{eqnarray}
\label{mp-d-groups-1} e_G \overset{\lambda}{\rightharpoonup} x&=&x;\\
\label{mp-d-groups-2} (a\circ_\lambda b)\overset{\lambda}{\rightharpoonup} x&=& a \overset{\lambda}{\rightharpoonup} (b\overset{\phi_G(\lambda,a)}{\relbar\joinrel\relbar\joinrel\rightharpoonup} x  );\\
\label{mp-d-groups-3} a \overset{\lambda}{\rightharpoonup} (x \cdot_{\phi_G(\lambda,a)} y)&=&
(a \overset{\lambda}{\rightharpoonup} x) \cdot_{\lambda} ( (a\overset{\phi_G(\lambda,a)}{\leftharpoonup\joinrel\relbar\joinrel\relbar\joinrel\relbar} x)         \overset{\phi_H(\lambda,a\overset{\lambda}{\rightharpoonup} x)} {\relbar\joinrel\relbar\joinrel\relbar\joinrel\relbar\joinrel\rightharpoonup}  y );\\
\label{mp-d-groups-4} a\overset{\phi_G(\lambda,a)}{\leftharpoonup\joinrel\relbar\joinrel\relbar\joinrel\relbar} e_H&=&a;\\
\label{mp-d-groups-5} a\overset{\phi_G(\lambda,a)}{\leftharpoonup\joinrel\relbar\joinrel\relbar\joinrel\relbar} (x \cdot_{\phi_G(\lambda,a)} y)&=&(a\overset{\phi_G(\lambda,a)}{\leftharpoonup\joinrel\relbar\joinrel\relbar\joinrel\relbar} x)\overset{\phi_H(\phi_G(\lambda,a),x)}{\leftharpoonup\joinrel\relbar\joinrel\relbar\joinrel\relbar
\joinrel\relbar\joinrel\relbar\joinrel\relbar} y;\\
\label{mp-d-groups-6} (a\circ_{\lambda} b)\overset{\phi_G(\lambda,a\circ_{\lambda}b)}{\leftharpoonup\joinrel\relbar
\joinrel\relbar\joinrel\relbar\joinrel\relbar} x&=& (a \overset{\phi_G(\lambda,a)}{\leftharpoonup\joinrel\relbar\joinrel\relbar\joinrel\relbar} (b\overset{\phi_G(\lambda,a)}{\relbar\joinrel\relbar\joinrel\rightharpoonup} x ))\circ_{\phi_H(\lambda,(a\circ_\lambda b)\overset{\lambda}{\rightharpoonup} x)} (b\overset{\phi_G(\lambda,a\circ_{\lambda}b)}{\leftharpoonup\joinrel\relbar\joinrel\relbar\joinrel\relbar} x),
\end{eqnarray}
for all $\lambda\in\lambda, a,b\in G,x,y\in H.$
\end{defi}

\begin{rmk}\label{mp-bijective}
By the fact that $\sigma: \mathscr{G} \otimes \mathscr{H}\to \mathscr{H} \otimes \mathscr{G}$ is a morphism of dynamical sets,   it follows that $\sigma$ satisfies the following equation:
\begin{eqnarray}\label{mp-d-groups-7}
\phi_G(\phi_H(\lambda,a\overset{\lambda}{\rightharpoonup} x),a\overset{\phi_G(\lambda,a)}{\leftharpoonup\joinrel\relbar\joinrel\relbar\joinrel\relbar} x)&=& \phi_H(\phi_G(\lambda,a),x).
\end{eqnarray}
\end{rmk}

\begin{pro}\label{mp-double}
Let $(\mathscr{G},\circ)$ and $(\mathscr{H},\cdot)$ be two dynamical groups. Then $(\mathscr{G},\mathscr{H},\sigma)$ is a matched pair of dynamical groups if and only if $(H\times G,\Lambda,\phi_{H\times G},\{\bowtie_\lambda\}_{\lambda\in \Lambda})$ is a dynamical group with the unit $(e_H,e_G)$, where the multiplication $\bowtie_\lambda$ and the structure map $\phi_{H\times G}:\Lambda \times H \times G\to \Lambda$ are given by
\begin{eqnarray}\label{double-d-groups}
(x,a) \bowtie_\lambda (y,b)=\Big(x \cdot_{\lambda} (a\overset{\phi_H(\lambda,x)}{\relbar\joinrel\relbar\joinrel\relbar\joinrel\rightharpoonup} y ),      (a\overset{\phi_G(\phi_H(\lambda,x),a)}{\leftharpoonup\joinrel\relbar\joinrel\relbar\joinrel
\relbar\joinrel\relbar\joinrel\relbar} y)\circ_{\phi_H(\lambda,x \cdot_{\lambda} (a\overset{\phi_H(\lambda,x)}{\relbar\joinrel\relbar\joinrel\relbar\joinrel\rightharpoonup} y )     )} b \Big),
\end{eqnarray}
and
\begin{eqnarray}
\phi_{H\times G}(\lambda,x,a)=\phi_G(\phi_H(\lambda,x),a),\quad \lambda\in\lambda, a,b\in G,x,y\in H.
\end{eqnarray}
Then the dynamical group $(H\times G,\Lambda,\phi_{H\times G},\{\bowtie_\lambda\}_{\lambda\in \Lambda})$ is called the {\bf double} of a matched pair of dynamical groups $(\mathscr{G},\mathscr{H},\sigma)$, and denoted by $\mathscr{H}\bowtie \mathscr{G}$.
\end{pro}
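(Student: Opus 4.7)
The strategy is to treat this as a ``matched pair implies double group'' theorem of the classical type, but with the added subtlety that every product, action, and coaction is twisted by a dynamical parameter that itself evolves under $\phi_G$ and $\phi_H$. Both directions are handled by direct computation; the matched-pair axioms \eqref{mp-d-groups-1}–\eqref{mp-d-groups-6} are designed precisely to be what associativity, the unit law, and the morphism condition for $\bowtie_\lambda$ require, so each axiom will turn out to match one coordinate of one identity.

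For the forward direction, assume $(\mathscr{G},\mathscr{H},\sigma)$ is a matched pair. First I would check that $\bowtie$ is a morphism in ${\bf DSet_{\Lambda}}$, which by \eqref{morphism-d-set} and \eqref{phi-tensor} amounts to the identity $\phi_{H\times G}(\lambda,(x,a)\bowtie_\lambda(y,b)) = \phi_{H\times G}(\phi_{H\times G}(\lambda,x,a),y,b)$; this reduces to \eqref{mp-d-groups-7} combined with the compatibility \eqref{phi-asso} for $\phi_H$ and $\phi_G$. Next I would verify the unit law: $(e_H,e_G)\bowtie_\lambda(x,a)=(x,a)$ follows from \eqref{mp-d-groups-1} and the unit axioms of $\mathscr{G},\mathscr{H}$, while $(x,a)\bowtie_\lambda(e_H,e_G)=(x,a)$ follows from \eqref{mp-d-groups-4} in the same way. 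The inverse of $(x,a)$ with respect to $\bowtie_\lambda$ is forced by the identity $(x,a)\bowtie_\lambda(y,b)=(e_H,e_G)$; writing it out coordinate by coordinate and solving using the inverses in $\mathscr{G}$ and $\mathscr{H}$ produces an explicit formula, whose existence is justified by \eqref{mp-d-groups-7}.

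The central computation is associativity. I would expand both sides of
\[
\bigl((x,a)\bowtie_\lambda(y,b)\bigr)\bowtie_\lambda(z,c) \;=\; (x,a)\bowtie_\lambda\bigl((y,b)\bowtie_{\phi_{H\times G}(\lambda,x,a)}(z,c)\bigr)
\]
using \eqref{double-d-groups}, and compare the two coordinates separately. For the first (the $\mathscr{H}$-coordinate), after invoking associativity in $\mathscr{H}$ the equality reduces to the combination of \eqref{mp-d-groups-2} (which handles the product $a\circ b$ acting on an element of $H$) and \eqref{mp-d-groups-3} (which distributes the action of $a$ over a product in $\mathscr{H}$). For the second (the $\mathscr{G}$-coordinate), associativity in $\mathscr{G}$ reduces the equality to \eqref{mp-d-groups-5} (distributing the coaction of $a$ over a product in $\mathscr{H}$) together with \eqref{mp-d-groups-6} (computing the coaction of a product $a\circ b$ on an element of $H$).

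For the reverse direction, assume that $\mathscr{H}\bowtie\mathscr{G}$ is a dynamical group with unit $(e_H,e_G)$. Using the unit law and the prescribed formula \eqref{double-d-groups}, one recovers the action and coaction from $(e_H,a)\bowtie_\lambda(x,e_G)$, and dually identifies $\mathscr{H}$ and $\mathscr{G}$ with the dynamical subgroups $\{(x,e_G)\}$ and $\{(e_H,a)\}$. The morphism condition on $\bowtie$ gives \eqref{mp-d-groups-7}; each matched-pair axiom is then extracted by applying the associative law \eqref{product} to a carefully chosen triple — for instance, $((e_H,a),(e_H,b),(x,e_G))$ yields \eqref{mp-d-groups-2}, $((e_H,a),(x,e_G),(y,e_G))$ yields \eqref{mp-d-groups-3} and \eqref{mp-d-groups-5}, and $((e_H,a),(e_H,b),(x,e_G))$ read in the second coordinate yields \eqref{mp-d-groups-6}; \eqref{mp-d-groups-1} and \eqref{mp-d-groups-4} are consequences of the unit law. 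The main obstacle throughout is the dynamical bookkeeping: every action, coaction, and product must be evaluated at the correct evolved parameter, and the chain of $\phi$-shifts on the two sides of associativity only matches once \eqref{mp-d-groups-7} is used to identify $\phi_G(\phi_H(\lambda,a\rightharpoonup x),a\leftharpoonup x)$ with $\phi_H(\phi_G(\lambda,a),x)$.
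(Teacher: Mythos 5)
Your proposal is correct and follows essentially the same route as the paper: a direct verification in which associativity of $\bowtie_\lambda$ is equivalent to \eqref{mp-d-groups-2}, \eqref{mp-d-groups-3}, \eqref{mp-d-groups-5}, \eqref{mp-d-groups-6} together with the parameter identity \eqref{mp-d-groups-7}, the left and right unit laws are equivalent to \eqref{mp-d-groups-1} and \eqref{mp-d-groups-4} respectively, and the inverse is determined explicitly from \eqref{double-d-groups}. Your coordinate-by-coordinate assignment of the axioms is exactly the bookkeeping the paper's ``direct and elaborate calculation'' leaves implicit.
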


\begin{proof}
By a direct and elaborate calculation, $(H\times G,\Lambda,\phi_{H\times G},\{\bowtie_\lambda\}_{\lambda\in \Lambda})$ satisfies the dynamical associative law for all $\lambda\in\lambda, a,b,c\in G,x,y,z\in H,$
$$ ((x,a)\bowtie_\lambda (y,b))\bowtie_\lambda (z,c)=(x,a) \bowtie_\lambda ((y,b)\bowtie_{\phi_{H\times G}(\lambda,x,a)} (z,c)),   $$
if and only if Eqs.\, \eqref{mp-d-groups-2}-\eqref{mp-d-groups-3} and \eqref{mp-d-groups-5}-\eqref{mp-d-groups-7} hold. Moreover,
$(e_H,e_G) \bowtie_\lambda (x,a)=(x,a)$ if and only if Eq.\,\eqref{mp-d-groups-1} holds.
Similarly, $ (x,a) \bowtie_\lambda (e_H,e_G)=(x,a)$ if and only if Eq\, \eqref{mp-d-groups-4} holds. The inverse element $\overline{(x,a)}^{\lambda}$ of $(x,a)\in H\times G$ can be uniquely determined by \eqref{double-d-groups}.
\end{proof}

\begin{pro}\label{com-solution}
Let $(\mathscr{G},\circ)$ be a dynamical group. If a morphism of dynamical sets $R:\Lambda \times G \times G \to G \times G$ given by $R(\lambda,a,b)=(\varphi^{\lambda}_{a}(b),\psi^{\lambda}_{b}(a))$ satisfies the conditions
\begin{eqnarray}
\label{compatible-action-1} \varphi^{\lambda}_{a \circ_{\lambda} b}(c)&=&\varphi^{\lambda}_{a}~\varphi^{\phi_{G}(\lambda,a)}_{b}(c);\\
\label{compatible-action-2} \psi^{\lambda}_{b \circ_{\phi_{G}(\lambda,a)}  c}(a)&=&
\psi^{\phi_{G}(\lambda,\varphi^{\lambda}_{a}(b))}_{c}   \psi^{\lambda}_{b} (a) ;\\
\label{compatible-action-3} \varphi^{\lambda}_{a}(b)\circ_{\lambda}\psi^{\lambda}_{b}(a)&=&
a \circ_{\lambda} b,
\end{eqnarray}
for all $\lambda \in \Lambda, a,b,c \in G,$ then $R$ is a solution of the dynamical Yang-Baxter equation on the dynamical set $\mathscr{G}=(G,\Lambda,\phi_{G})$.
\end{pro}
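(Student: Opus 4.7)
The plan is to verify the dynamical Yang-Baxter equation \eqref{DYBE} by a direct computation: I will apply both sides to an arbitrary triple $(a,b,c)\in G^3$ and compare the three output slots using the compatibility conditions \eqref{compatible-action-1}--\eqref{compatible-action-3}, together with the associativity \eqref{product} and the derived identity \eqref{phi-asso}. A crucial subtlety is the dynamical parameter bookkeeping: each application of $R(\phi_G(\lambda,X^{1}))_{23}$ feeds in the parameter $\phi_G(\lambda,-)$ evaluated on the \emph{current} first slot, which on the LHS becomes $\phi_G(\lambda,\varphi^\lambda_a(b))$ after the first $R(\lambda)_{12}$, but on the RHS stays as $\phi_G(\lambda,a)$ initially.

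First I would handle the outer slots. The LHS first slot expands to $\varphi^\lambda_{\varphi^\lambda_a(b)}\bigl(\varphi^{\phi_G(\lambda,\varphi^\lambda_a(b))}_{\psi^\lambda_b(a)}(c)\bigr)$, which by \eqref{compatible-action-1} folds into $\varphi^\lambda_{\varphi^\lambda_a(b)\circ_\lambda \psi^\lambda_b(a)}(c)$ and then by \eqref{compatible-action-3} collapses to $\varphi^\lambda_{a\circ_\lambda b}(c)$; the RHS first slot is $\varphi^\lambda_a(\varphi^{\phi_G(\lambda,a)}_b(c))$, which equals the same by \eqref{compatible-action-1} alone. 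Once the first slots agree, the parameters feeding the outermost $R_{23}$ on the RHS and the middle $R_{23}$ on the LHS automatically coincide, and the third slots reduce dually by \eqref{compatible-action-2} and \eqref{compatible-action-3} (applied at the parameter $\phi_G(\lambda,a)$), both to $\psi^\lambda_{b\circ_{\phi_G(\lambda,a)} c}(a)$.

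The main obstacle is the middle slot, where $\varphi$ and $\psi$ occur deeply nested inside one another's indices and no single one of \eqref{compatible-action-1}--\eqref{compatible-action-3} suffices in isolation. My plan is to write both middle expressions out in full and reduce them to a common form by iterated use of \eqref{compatible-action-1} to fold nested $\varphi$'s, \eqref{compatible-action-2} to fold nested $\psi$'s, and \eqref{compatible-action-3} to replace any pattern of the shape $\varphi^\mu_\alpha(\beta)\circ_\mu\psi^\mu_\beta(\alpha)$ appearing in an index or product by $\alpha\circ_\mu\beta$. The hardest part will be the parameter bookkeeping at each intermediate stage: one must repeatedly invoke \eqref{phi-asso} together with the consequence $\phi_G(\phi_G(\lambda,\varphi^\lambda_a(b)),\psi^\lambda_b(a))=\phi_G(\phi_G(\lambda,a),b)$ of \eqref{compatible-action-3} and \eqref{phi-asso} (i.e.\ the weight-zero condition \eqref{weight-zero} for $R$) in order to match the dynamical parameter indexing each nested $\varphi$ or $\psi$ between the two sides. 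Once this parameter matching is pinned down, the three stated conditions together with \eqref{product} drive the slot-by-slot equality to completion.
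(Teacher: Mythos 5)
Your treatment of the first and third output slots is correct and is essentially the paper's own computation: both first slots collapse to $\varphi^{\lambda}_{a\circ_{\lambda}b}(c)$ via \eqref{compatible-action-1} and \eqref{compatible-action-3}, and both third slots reduce to $\psi^{\lambda}_{b\circ_{\phi_{G}(\lambda,a)}c}(a)$ via \eqref{compatible-action-2} and \eqref{compatible-action-3}. (Your parenthetical claim that the parameters feeding the two $R_{23}$'s ``automatically coincide'' once the first slots agree is not right: the LHS middle $R_{23}$ is evaluated at $\phi_G(\lambda,\varphi^{\lambda}_{a}(b))$ while the RHS outer $R_{23}$ is evaluated at $\phi_G(\lambda,\varphi^{\lambda}_{a\circ_{\lambda}b}(c))$, and these differ in general; but this does not damage the third-slot computation, which goes through as you describe.)

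The genuine gap is the middle slot. Write the left-hand middle component as $k=\psi^{\lambda}_{f}(d)$ with $d=\varphi^{\lambda}_{a}(b)$, $e=\psi^{\lambda}_{b}(a)$, $f=\varphi^{\phi_G(\lambda,d)}_{e}(c)$, and the right-hand one as $v=\varphi^{\phi_G(\lambda,s)}_{t}(r)$ with $q=\varphi^{\phi_G(\lambda,a)}_{b}(c)$, $r=\psi^{\phi_G(\lambda,a)}_{c}(b)$, $s=\varphi^{\lambda}_{a}(q)$, $t=\psi^{\lambda}_{q}(a)$. Each is a single, unnested term --- one a $\psi$, the other a $\varphi$ --- so there is no composite $\varphi\varphi$ or $\psi\psi$ left to fold with \eqref{compatible-action-1} or \eqref{compatible-action-2}, and the pattern $\varphi^{\mu}_{\alpha}(\beta)\circ_{\mu}\psi^{\mu}_{\beta}(\alpha)$ needed to invoke \eqref{compatible-action-3} does not occur inside either expression; it only appears after you multiply the middle slot by the already-matched outer slots. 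So the rewriting strategy you outline stalls exactly at the point you flag as hardest. The missing idea, which the paper supplies, is to exploit the group structure of $(\mathscr{G},\circ)$: repeated use of \eqref{compatible-action-3} together with associativity \eqref{product} gives $(h\circ_{\lambda}k)\circ_{\lambda}g=(a\circ_{\lambda}b)\circ_{\lambda}c=(s\circ_{\lambda}v)\circ_{\lambda}w$, where $h,g$ (resp.\ $s,w$) are the outer slots, and then, since $h=s$ and $g=w$, the cancellation law of the dynamical group forces $k=v$. Your proposal appeals only to \eqref{compatible-action-1}--\eqref{compatible-action-3}, \eqref{product} and \eqref{phi-asso}, never to invertibility, so as written it cannot close the middle-slot equality.
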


\begin{proof}
The left-hand-side of \eqref{DYBE} is
\begin{eqnarray*}
R(\lambda)_{12} R(\phi_G(\lambda,G^{1}))_{23} R(\lambda)_{12}(a,b,c)
&=& R(\lambda)_{12} R(\phi_G(\lambda,G^{1}))_{23}(d,e,c)\\
&=& R(\lambda)_{12} (d,f,g)\\
&=& (h,k,g),
\end{eqnarray*}
where $ d=\varphi^{\lambda}_{a} (b),~ e=\psi^{\lambda}_{b} (a),~ f=\varphi^{\phi_{G}(\lambda,d)}_{e} (c),~ g=\psi^{\phi_{G}(\lambda,d)}_{c}(e),~ h=\varphi^{\lambda}_{d} (f)$ and $ k=\psi^{\lambda}_{f} (d)$.
By \eqref{compatible-action-3}, we have
$$ a \circ_{\lambda} b=d \circ_{\lambda} e,\quad e\circ_{\phi_{G}(\lambda,d)} c=f \circ_{\phi_{G}(\lambda,d)} g,\quad d \circ_{\lambda} f=h \circ_{\lambda} k.$$

The right-hand-side of \eqref{DYBE} is
\begin{eqnarray*}
R(\phi_G(\lambda,G^{1}))_{23}R(\lambda)_{12}R(\phi_G(\lambda,G^{1}))_{23}(a,b,c)
&=&R(\phi_G(\lambda,G^{1}))_{23}R(\lambda)_{12}(a,q,r)\\
&=&R(\phi_G(\lambda,G^{1}))_{23}(s,t,r)\\
&=&(s,v,w),
\end{eqnarray*}
where $ q=\varphi^{\phi_{G}(\lambda,a)}_{b} (c),~ r=\psi^{\phi_{G}(\lambda,a)}_{c}(b),~ s=\varphi^{\lambda}_{a} (q),~ t=\psi^{\lambda}_{q} (a),~ v=\varphi^{\phi_{G}(\lambda,s)}_{t} (r)$ and $ w=\varphi^{\phi_{G}(\lambda,s)}_{r} (t)$.
By \eqref{compatible-action-3}, we have
$$ b \circ_{\phi_{G}(\lambda,a)} c=q \circ_{\phi_{G}(\lambda,a)} r,\quad a \circ_{\lambda} q=s \circ_{\lambda} t, \quad t \circ_{\phi_{G}(\lambda,s)} r=v \circ_{\phi_{G}(\lambda,s)} w. $$
We use \eqref{compatible-action-1} to show that $h=s$ as follows:
\begin{eqnarray*}
s=\varphi^{\lambda}_{a} (q)=\varphi^{\lambda}_{a} (\varphi^{\phi_{G}(\lambda,a)}_{b} (c))=\varphi^{\lambda}_{a\circ_{\lambda} b}(c)
=\varphi^{\lambda}_{d\circ_{\lambda} e}(c)=\varphi^{\lambda}_{d} (\varphi^{\phi_{G}(\lambda,d)}_{e} (c))=\varphi^{\lambda}_{d} (f)=h.
\end{eqnarray*}
We use \eqref{compatible-action-2} to show that $w=g$ as follows:
\begin{eqnarray*}
w&=&\psi^{\phi_{G}(\lambda,s)}_{r} (t)=\psi^{\phi_{G}(\lambda,s)}_{r} (\varphi^{\lambda}_{q} (a))=\psi^{\phi_{G}(\lambda,\varphi_{a}^{\lambda}(q))}_{r}(\psi^{\lambda}_{q}(a))
=\psi^{\lambda}_{q\circ_{\phi_G(\lambda,a)} r}(a)\\
&=&\psi^{\lambda}_{b\circ_{\phi_{G}(\lambda,a)} c}(a)=\psi^{\phi_{G}(\lambda,\varphi_{a}^{\lambda}(b))}_{c}(\psi^{\lambda}_{b}(a))=
\psi^{\phi_{G}(\lambda,d)}_{c} (\varphi^{\lambda}_{b} (a))
=\psi^{\phi_{G}(\lambda,d)}_{c} (e)\\
&=&g.
\end{eqnarray*}
Finally, we show that $k=v$ by \eqref{compatible-action-3}. By a direct calculation, we have
\begin{eqnarray*}
(s \circ_{\lambda} v) \circ_{\lambda} w&=&s \circ_{\lambda} (v \circ_{\phi_G(\lambda,s)} w)=
s \circ_{\lambda} \Big(\varphi^{\phi_{G}(\lambda,s)}_{t} (r) \circ_{\phi_G(\lambda,s)}  \varphi^{\phi_{G}(\lambda,s)}_{r} (t) \Big)=s \circ_{\lambda} (t \circ_{\phi_G(\lambda,s)} r)\\
&=&(s \circ_{\lambda} t)\circ_{\lambda} r=(\varphi^{\lambda}_{a} (q) \circ_{\lambda} \psi^{\lambda}_{q} (a))\circ_{\lambda} r=(a \circ_{\lambda} q)\circ_{\lambda} r=a \circ_{\lambda}
(q \circ_{\phi_{G}(\lambda,a)} r)\\
&=&a \circ_{\lambda} (\varphi^{\phi_{G}(\lambda,a)}_{b} (c) \circ_{\phi_G(\lambda,a)} \psi^{\phi_{G}(\lambda,a)}_{c}(b))=a \circ_{\lambda} ( b \circ_{\phi_G(\lambda,a)} c)\\
&=&(a \circ_{\lambda} b)\circ_{\lambda} c,
\end{eqnarray*}
and
\begin{eqnarray*}
(h \circ_{\lambda} k) \circ_{\lambda} g&=&\Big(\varphi^{\lambda}_{d} (f) \circ_{\lambda} \psi^{\lambda}_{f} (d)\Big) \circ_{\lambda} g=(d \circ_{\lambda} f)\circ_{\lambda} g=d \circ_{\lambda} (f \circ_{\phi_G(\lambda,d)} g)\\
&=&d\circ_{\lambda} \Big( \varphi^{\phi_{G}(\lambda,d)}_{e} (c) \circ_{\phi_G(\lambda,d)} \psi^{\phi_{G}(\lambda,d)}_{c}(e) \Big)=d \circ_{\lambda} (e \circ_{\phi_{G}(\lambda,d)} c)
=(d \circ_{\lambda} e) \circ_{\lambda} c\\
&=&\Big(\varphi^{\lambda}_{a} (b) \circ_{\lambda} \psi^{\lambda}_{b} (a) \Big) \circ_{\lambda} c
\\&=&(a \circ_{\lambda} b)\circ_{\lambda} c,
\end{eqnarray*}
which implies that $(h \circ_{\lambda} k) \circ_{\lambda} g=(s \circ_{\lambda} v) \circ_{\lambda} w.$ Since $h=s$ and $g=w$, we can deduce that $k=v$ according to the cancellation law of the dynamical group $(G, \Lambda,\phi_{G},\{\circ_\lambda\}_{\lambda\in \Lambda})$. Therefore,    $R$ is a solution of the dynamical Yang-Baxter equation on the dynamical set $\mathscr{G}=(G,\Lambda,\phi_{G})$.
\end{proof}

\begin{rmk}
Eqs.\, \eqref{compatible-action-1}-\eqref{compatible-action-3} can be viewed as a generalization of compatible actions on groups given in \cite[Theorem 1]{LYZ}.
\end{rmk}

Next we introduce the notion of braided dynamical groups and show that braided dynamical groups give rise to solutions of the dynamical Yang-Baxter equation.

\begin{defi}\label{def-braid-d-group}
A {\bf braided dynamical group} is a pair $(\mathscr{G},\sigma)$, where $(\mathscr{G},\circ)$ is a dynamical group and $\sigma:\Lambda\times G \times G\to G \times G$ is a morphism of dynamical sets from $\mathscr{G} \otimes \mathscr{G}$ to $\mathscr{G} \otimes \mathscr{G}$ such that
\begin{itemize}
\item[{\rm(i)}] The triple $(\mathscr{G},\mathscr{G},\sigma)$ is a matched pair of dynamical groups.
\item[{\rm(ii)}] The following compatibility condition is satisfied:
\begin{eqnarray}
\label{braided-com} (a\overset{\lambda}{\rightharpoonup} b) \circ_{\lambda} (a\overset{\phi_G(\lambda,a)}{\leftharpoonup\joinrel\relbar\joinrel\relbar\joinrel\relbar} b)=
a \circ_{\lambda} b,\quad \forall \lambda\in\Lambda, a,b\in G.
\end{eqnarray}
\end{itemize}

For braided dynamical groups $(\mathscr{G},\sigma)$ and $(\mathscr{H},\sigma')$, a {\bf homomorphism of braided  dynamical groups} from $(\mathscr{G},\sigma)$ to $(\mathscr{H},\sigma')$ is a homomorphism of dynamical groups $\Psi: \mathscr{G} \to \mathscr{H}$ such that
\begin{eqnarray}\label{homo-braided-d-groups}
(\Psi\times \Psi)_{\lambda}\circ \sigma_{\lambda}=\sigma'_{\lambda} \circ (\Psi\times \Psi)_{\lambda}, \quad \forall \lambda\in\Lambda,
\end{eqnarray}
where the composition is given by \eqref{composition-maps}.
\end{defi}
Braided  dynamical groups and homomorphisms between braided  dynamical groups form a category, which is denoted by {\bf BDG}.

\begin{thm}\label{braided-solution}
Let $(\mathscr{G},\sigma)$ be a braided dynamical group. Then $\sigma$ is a non-degenerate solution of the dynamical Yang-Baxter equation on the dynamical set $\GGG=(G,\Lambda,\phi_G)$.
\end{thm}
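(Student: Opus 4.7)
The plan is to reduce the theorem to Proposition \ref{com-solution} by identifying
$\varphi^{\lambda}_{a}(b) := a \overset{\lambda}{\rightharpoonup} b$ and
$\psi^{\lambda}_{b}(a) := a \overset{\phi_G(\lambda,a)}{\leftharpoonup\joinrel\relbar\joinrel\relbar\joinrel\relbar} b$,
so that $\sigma(\lambda,a,b) = (\varphi^{\lambda}_{a}(b),\,\psi^{\lambda}_{b}(a))$. With this translation, the three hypotheses of Proposition \ref{com-solution} should follow almost directly from the matched pair axioms together with the braided compatibility \eqref{braided-com}. Specifically, \eqref{compatible-action-1} is exactly a rewriting of \eqref{mp-d-groups-2}, and \eqref{compatible-action-3} is precisely \eqref{braided-com}.

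The only nontrivial verification is \eqref{compatible-action-2}. Starting from the left-hand side $\psi^{\lambda}_{b \circ_{\phi_G(\lambda,a)} c}(a) = a \overset{\phi_G(\lambda,a)}{\leftharpoonup\joinrel\relbar\joinrel\relbar\joinrel\relbar} (b \circ_{\phi_G(\lambda,a)} c)$, I would apply \eqref{mp-d-groups-5} (specialized to $\mathscr{H} = \mathscr{G}$, so $\cdot = \circ$) to rewrite this as $(a \overset{\phi_G(\lambda,a)}{\leftharpoonup\joinrel\relbar\joinrel\relbar\joinrel\relbar} b) \overset{\phi_G(\phi_G(\lambda,a),b)}{\leftharpoonup\joinrel\relbar\joinrel\relbar\joinrel\relbar\joinrel\relbar\joinrel\relbar\joinrel\relbar} c$, and then invoke the weight-zero identity
$\phi_G(\phi_G(\lambda,a),b) = \phi_G(\phi_G(\lambda, \varphi^{\lambda}_{a}(b)), \psi^{\lambda}_{b}(a))$,
which is the specialization of \eqref{mp-d-groups-7} to the braided setting and records that $\sigma$ is a morphism of dynamical sets, in order to rewrite the outer parameter into the one prescribed by the definition of $\psi^{\phi_G(\lambda,\varphi^{\lambda}_{a}(b))}_{c}$. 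This yields \eqref{compatible-action-2}, and hence by Proposition \ref{com-solution}, $\sigma$ is a solution of the dynamical Yang-Baxter equation.

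For non-degeneracy, I would construct explicit two-sided inverses using the dynamical inverse $\bar{a}^{\lambda}$. For $\varphi^{\lambda}_{a}$, applying \eqref{mp-d-groups-2} with the pair $(a, \bar{a}^{\lambda})$ and using \eqref{mp-d-groups-1} gives $x = e_G \overset{\lambda}{\rightharpoonup} x = a \overset{\lambda}{\rightharpoonup} (\bar{a}^{\lambda} \overset{\phi_G(\lambda,a)}{\relbar\joinrel\relbar\joinrel\rightharpoonup} x)$; applying it again to the pair $(\bar{a}^{\lambda}, a)$ at parameter $\phi_G(\lambda,a)$, and simplifying $\phi_G(\phi_G(\lambda,a), \bar{a}^{\lambda}) = \phi_G(\lambda, e_G) = \lambda$ by means of \eqref{phi-asso} and \eqref{unit}, yields the opposite composition. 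Thus $\varphi^{\phi_G(\lambda,a)}_{\bar{a}^{\lambda}}$ is a two-sided inverse of $\varphi^{\lambda}_{a}$. The bijectivity of $\psi^{\lambda}_{b}$ is obtained by the symmetric argument using \eqref{mp-d-groups-4}, \eqref{mp-d-groups-5}, and the weight-zero identity to invert in the first slot via $\bar{b}^{\phi_G(\lambda,a)}$.

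The main obstacle throughout is the bookkeeping of dynamical parameters: every action shifts the superscript, and the essential content of the proof lies in verifying that the weight-zero identity \eqref{mp-d-groups-7} matches up these shifted parameters correctly, both in the verification of \eqref{compatible-action-2} and in the inverse construction. Once this parameter tracking is handled, each individual step is a direct application of a single matched pair axiom together with the braided compatibility \eqref{braided-com}.
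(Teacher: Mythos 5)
Your reduction to Proposition \ref{com-solution} is exactly the paper's proof. The identifications of $\varphi$ and $\psi$, the derivation of \eqref{compatible-action-1} from \eqref{mp-d-groups-2}, of \eqref{compatible-action-3} from \eqref{braided-com}, and of \eqref{compatible-action-2} from \eqref{mp-d-groups-5} plus a parameter rewriting are all as in the paper; the only cosmetic difference is that you invoke the weight-zero identity \eqref{mp-d-groups-7} to match the outer parameter, whereas the paper obtains the same equality $\phi_G(\phi_G(\lambda,a),b)=\phi_G\big(\phi_G(\lambda,\varphi^{\lambda}_{a}(b)),\psi^{\lambda}_{b}(a)\big)$ from \eqref{phi-asso} together with \eqref{braided-com}. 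These are equivalent, so this part is fine.

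The one genuine gap is in your treatment of the bijectivity of $\psi^{\lambda}_{b}$. Your argument for $\varphi^{\lambda}_{a}$ is correct (and is more than the paper's proof actually records, since the paper asserts non-degeneracy without verifying it). But the situation for $\psi^{\lambda}_{b}$ is \emph{not} symmetric: the axioms \eqref{mp-d-groups-4}--\eqref{mp-d-groups-5} make $\leftharpoonup$ behave like a right action in its \emph{second} slot, whereas $\psi^{\lambda}_{b}$ is the map in the \emph{first} slot, $a\mapsto a\overset{\phi_G(\lambda,a)}{\leftharpoonup\joinrel\relbar\joinrel\relbar\joinrel\relbar} b$, whose dynamical parameter already depends on the variable $a$. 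Running your proposed cancellation gives
$a=\big(a\overset{\phi_G(\lambda,a)}{\leftharpoonup\joinrel\relbar\joinrel\relbar\joinrel\relbar} b\big)\overset{\phi_G(\phi_G(\lambda,a),b)}{\leftharpoonup\joinrel\relbar\joinrel\relbar\joinrel\relbar\joinrel\relbar\joinrel\relbar\joinrel\relbar}\,\bar{b}^{\phi_G(\lambda,a)}$,
but both the superscript $\phi_G(\phi_G(\lambda,a),b)$ and the element $\bar{b}^{\phi_G(\lambda,a)}$ depend on the unknown preimage $a$ (even after applying \eqref{mp-d-groups-7} they depend on $a$ through $a\overset{\lambda}{\rightharpoonup} b$), so this formula does not define a map of $c=\psi^{\lambda}_{b}(a)$ alone and hence does not establish injectivity or surjectivity. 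Establishing non-degeneracy in the first slot requires a different argument (e.g.\ via the unique factorization in the double $\mathscr{G}\bowtie\mathscr{G}$); as it stands this step of your proposal does not go through, although the remainder of the proof is sound and the paper itself leaves this point implicit.
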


\begin{proof}
Define $\varphi,\psi:\Lambda \times G \to G$ by
$$  \varphi^{\lambda}_{a}(b)=a\overset{\lambda}{\rightharpoonup} b,\quad
\psi^{\lambda}_{b}(a)=a\overset{\phi_G(\lambda,a)}{\leftharpoonup\joinrel\relbar\joinrel\relbar
\joinrel\relbar} b, \quad \forall \lambda\in \Lambda, a,b\in G.$$
Consider the conditions \eqref{mp-d-groups-2} and \eqref{mp-d-groups-5} of the matched pair of dynamical groups $(\mathscr{G},\mathscr{G},\sigma)$, we have
\begin{eqnarray*}
\varphi^{\lambda}_{a \circ_{\lambda} b}(c)
&=&(a\circ_\lambda b)\overset{\lambda}{\rightharpoonup} c
= a \overset{\lambda}{\rightharpoonup} (b\overset{\phi_G(\lambda,a)}{\relbar\joinrel\relbar\joinrel\rightharpoonup} c  )
=\varphi^{\lambda}_{a}~\varphi^{\phi_{G}(\lambda,a)}_{b}(c),\\
\varphi^{\lambda}_{a}(b)\circ_{\lambda}\psi^{\lambda}_{b}(a)&=&(a\overset{\lambda}{\rightharpoonup} b)\circ_{\lambda}(a\overset{\phi_G(\lambda,a)}
{\leftharpoonup\joinrel\relbar\joinrel\relbar\joinrel\relbar} b)\overset{\eqref{braided-com}}{=}
a \circ_{\lambda} b,
\end{eqnarray*}
and
\begin{eqnarray*}
\psi^{\lambda}_{b \circ_{\phi_{G}(\lambda,a)}  c}(a)&=&
a\overset{\phi_G(\lambda,a)}{\leftharpoonup\joinrel\relbar\joinrel\relbar\joinrel\relbar} (b \circ_{\phi_G(\lambda,a)} c)=(a\overset{\phi_G(\lambda,a)}{\leftharpoonup\joinrel\relbar\joinrel\relbar\joinrel\relbar} b)\overset{\phi_G(\phi_G(\lambda,a),b)}{\leftharpoonup\joinrel\relbar\joinrel\relbar\joinrel\relbar
\joinrel\relbar\joinrel\relbar\joinrel\relbar} c
\\&=&(a\overset{\phi_G(\lambda,a)}{\leftharpoonup\joinrel\relbar\joinrel\relbar\joinrel\relbar} b)\overset{\phi_G(\lambda,a \circ_{\lambda}b)}
{\leftharpoonup\joinrel\relbar\joinrel\relbar\joinrel\relbar
\joinrel\relbar\joinrel\relbar\joinrel\relbar} c \\
&=&(a\overset{\phi_G(\lambda,a)}{\leftharpoonup\joinrel\relbar\joinrel\relbar\joinrel\relbar} b)\overset{\phi_G(\lambda,\varphi^{\lambda}_{a}(b)\circ_{\lambda}\psi^{\lambda}_{b}(a))}
{\leftharpoonup\joinrel\relbar\joinrel\relbar\joinrel\relbar
\joinrel\relbar\joinrel\relbar\joinrel\relbar} c
\\&=&\psi^{\phi_{G}(\lambda,\varphi^{\lambda}_{a}(b))}_{c}   \psi^{\lambda}_{b} (a),
\end{eqnarray*}
which implies that $\varphi,\psi$ satisfy  \eqref{compatible-action-1}-\eqref{compatible-action-3}. Thus, by Proposition \ref{com-solution}, we deduce that $\sigma$ is a non-degenerate solution of the dynamical Yang-Baxter equation on the dynamical set $(G,\Lambda,\phi_G)$.
\end{proof}

\subsection{Braided dynamical groups and braided groupoids}

In this subsection, we first recall the embedding functor $Q$ from the category of dynamical sets to the category of quivers. Then we prove that this embedding functor $Q$ sends dynamical groups to groupoids. Moreover, we establish the relation between matched pairs of dynamical groups and matched pairs of groupoids using this embedding functor $Q$. As a special case, we show that braided dynamical groups can give rise to braided groupoids, which are associated to the quiver-theoretical solutions of the Yang-Baxter equation.

Let $\Lambda$ be a non-empty set. We denote by ${\bf Quiv_{\Lambda}}$ the category of quivers over $\Lambda$. Now let us recall from \cite{MS} the embedding functor $Q$ from ${\bf DSet_{\Lambda}}$ to ${\bf Quiv_{\Lambda}}$. Given an object $\mathscr{X}=(X,\Lambda,\phi_X) \in {\bf DSet_{\Lambda}}$, we can define the quiver $Q(\mathscr{X})=\Lambda \times X$, where the source and the target maps are  defined by:
$$ \alpha(\lambda,x)=\lambda, \quad \beta(\lambda,x)=\phi_X(\lambda,x),\quad \forall \lambda\in \Lambda,x\in X.$$
For any morphism $f:\mathscr{X}\to \mathscr{Y}$ in ${\bf DSet_{\Lambda}}$, we define
$$ Q(f):Q(\mathscr{X}) \to Q(\mathscr{Y}),\quad Q(f)(\lambda,x)=(\lambda,f(\lambda,x)),\quad \forall \lambda\in \Lambda,x\in X. $$
Then we obtain a functor $Q:{\bf DSet_{\Lambda}} \to {\bf Quiv_{\Lambda}}$. Moreover, the functor $Q$ is actually a fully faithful monoidal functor, which implies that $Q$ embeds the monoidal category ${\bf DSet_{\Lambda}}$ into ${\bf Quiv_{\Lambda}}$. See \cite[Theorem 2.7]{MS} for more details.

Now we show that this embedding functor $Q$ sends a dynamical group to a groupoid. Recall that a groupoid \cite{Mackenzie1} is a small category such that every morphism is invertible.

\begin{defi}
A {\bf groupoid} is a pair $(\G,M)$, where $M$ is the set of objects and $\G$ is the set of morphisms, with the  structure maps
\begin{itemize}
\item two surjective maps $\alpha,\beta: \G\longrightarrow M$, called the source map and target map, respectively;
\item  the multiplication $\star:\G_\beta\times_\alpha\G\longrightarrow \G$, where $\G_\beta\times_\alpha\G=\{(\gamma_1,\gamma_2)\in \G\times \G| \beta(\gamma_1)=\alpha(\gamma_2)\}$, such that
    $\alpha(\gamma_1\star \gamma_2)=\alpha(\gamma_1),~\beta(\gamma_1\star \gamma_2)=\beta(\gamma_2)$, for all $(\gamma_1,\gamma_2)\in\G_\beta\times_\alpha\G;$
\item  the inverse map $\inv:\G\longrightarrow \G$;
\item the inclusion map $\iota: M\longrightarrow \G$, $m\longmapsto \iota_m$ called the identity map;
\end{itemize}
satisfying the following properties:
\begin{enumerate}
\item[{\rm(i)}] {\rm(associativity)} $(\gamma_1\star \gamma_2)\star \gamma_3=\gamma_1\star (\gamma_2\star \gamma_3)$, whenever the multiplications are well-defined;
\item[{\rm(ii)}] {\rm(unitality)} $\iota_{\alpha(\gamma)}\star \gamma=\gamma=\gamma\star \iota_{\beta(\gamma)}$;
\item[{\rm(iii)}]  {\rm(invertibility)} $\gamma\star \inv(\gamma)=\iota_{\alpha(\gamma)}$, $\inv(\gamma)\star \gamma=\iota_{\beta(\gamma)}$.
\end{enumerate}
We also denote a groupoid by $\xymatrix{ (\G \ar@<0.5ex>[r]^{\alpha} \ar[r]_{\beta} & M},\star,\iota,\inv)$ or simply by $\G$. In particular, a {\bf bundle of groups} is defined to be a groupoid in which the source map and the target map are the same; while a {\bf group bundle} is a bundle of groups $\HH\stackrel{\pi}{\longrightarrow}M$ such that {$\HH_m:=\pi^{-1}(m)$} are isomorphic for all $m\in M$, and denoted by $(\HH\stackrel{\pi}{\longrightarrow}M,\star,\iota,\inv)$.
\end{defi}

\begin{pro}
Let $(\mathscr{G},\circ)$ be a dynamical group. Then
$$Q(\mathscr{G})=\xymatrix{ (\Lambda \times G \ar@<0.5ex>[r]^{\quad \alpha=\pr_1} \ar[r]_{\quad\beta=\phi_G} & \Lambda},\star,\iota,\inv)$$ is a groupoid, where the multiplication  $\star$ is given by
$$(\lambda,a)\star (\mu,b)=(\lambda, a\circ_{\lambda} b),\quad \forall \lambda\in \Lambda,a,b\in G~\mbox{satisfying}~\mu=\phi_{G}(\lambda, a),$$
the unit map $\iota $ and the inverse map $\inv $ are respectively given by
$$ \iota(\lambda)=(\lambda,e_G),\quad \inv(\lambda,a)=(\lambda,\bar{a}^\lambda), \quad \forall \lambda\in \Lambda, a\in G.$$

\end{pro}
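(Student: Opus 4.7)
The plan is to verify the five groupoid axioms by unpacking them into conditions on $(\mathscr{G},\circ)$ and discharging each via Definition \ref{dynamical-group} together with the source--target compatibility \eqref{phi-asso}, which expresses that $\circ$ is a morphism in ${\bf DSet_{\Lambda}}$. The whole proof is essentially a routine translation, so I would organize it in the order: well-definedness, associativity, unitality, invertibility.

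First I would check that $\star$ is well-defined and compatible with $\alpha,\beta$. A composable pair has the form $(\lambda,a),(\phi_G(\lambda,a),b)$, and the proposed product $(\lambda, a\circ_\lambda b)$ has $\alpha$-value $\lambda$ and $\beta$-value $\phi_G(\lambda, a\circ_\lambda b)=\phi_G(\phi_G(\lambda,a),b)$ by \eqref{phi-asso}, which matches $\beta$ of the second factor. Associativity then reduces, after stripping off the common first coordinate $\lambda$ on both sides, to the identity $(a\circ_\lambda b)\circ_\lambda c = a\circ_\lambda (b\circ_{\phi_G(\lambda,a)} c)$, which is precisely \eqref{product}.

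For unitality, the fact that $\phi_G(\lambda,e_G)=\lambda$ in \eqref{unit} means $(\lambda,e_G)$ is composable on the left with $(\lambda,a)$, giving $(\lambda,e_G\circ_\lambda a)=(\lambda,a)$, and symmetrically $\iota_{\beta(\lambda,a)}=(\phi_G(\lambda,a),e_G)$ gives $(\lambda,a\circ_\lambda e_G)=(\lambda,a)$. For invertibility, I would use Definition \ref{dynamical-group}(iii): reading the inverse arrow as going from $\phi_G(\lambda,a)$ back to $\lambda$ via $\bar{a}^\lambda$, composability holds since $\phi_G(\phi_G(\lambda,a),\bar{a}^\lambda)=\phi_G(\lambda,a\circ_\lambda \bar{a}^\lambda)=\phi_G(\lambda,e_G)=\lambda$, and the two identities $a\circ_\lambda \bar{a}^\lambda = e_G = \bar{a}^\lambda \circ_{\phi_G(\lambda,a)} a$ yield the two invertibility equations $(\lambda,a)\star\inv(\lambda,a)=\iota_{\alpha(\lambda,a)}$ and $\inv(\lambda,a)\star(\lambda,a)=\iota_{\beta(\lambda,a)}$.

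The only step requiring any attention is the bookkeeping of which parameter indexes each product, since composition shifts $\lambda$ by $\phi_G$; once \eqref{phi-asso} is applied consistently, every axiom collapses to a one-line substitution of an axiom of Definition \ref{dynamical-group}, so I do not anticipate a serious obstacle.
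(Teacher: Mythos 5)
Your proof is correct and follows the same route as the paper, which simply observes that associativity, unitality and invertibility translate directly into Conditions (i)--(iii) of Definition \ref{dynamical-group} once \eqref{phi-asso} is used to track the source and target of each arrow. Your explicit reading of $\inv(\lambda,a)$ as the arrow with source $\phi_G(\lambda,a)$ (so that the composites $(\lambda,a)\star\inv(\lambda,a)$ and $\inv(\lambda,a)\star(\lambda,a)$ are actually composable) is exactly the right bookkeeping and is the only point where the formula in the statement needs this careful interpretation.
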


\begin{proof}
By a direct calculation, the associativity of $\star$ holds according to Condition $\rm{(i)}$ in Definition \ref{dynamical-group}. Similarly, the unitality and invertibility hold by Conditions $\rm{(ii)}$ and $\rm{(iii)}$ in Definition \ref{dynamical-group} respectively. Thus, $Q(\mathscr{G})=\xymatrix{ (\Lambda \times G \ar@<0.5ex>[r]^{\quad \alpha=\pr_1} \ar[r]_{\quad\beta=\phi_G} & \Lambda},\star,\iota,\inv)$ is a groupoid.
\end{proof}

\begin{cor}
Let $(\mathscr{H},\cdot)$ be a constant dynamical group. Then $Q(\mathscr{H})=(\Lambda \times H \stackrel{\pr_1}{\longrightarrow} \Lambda,\star,\iota,\inv)$ is a bundle of groups, where $\star$ is given by
$$(\lambda,x)\star (\lambda,y)=(\lambda, x\cdot_{\lambda} y),\quad\forall \lambda\in \Lambda,x,y\in H.$$

 In addition, if $(\mathscr{H},\cdot)$ is a trivial dynamical group, then $Q(\mathscr{H})=(\Lambda \times H \stackrel{\pr_1}{\longrightarrow} \Lambda,\star,\iota,\inv)$ is a group bundle.
\end{cor}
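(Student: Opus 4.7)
The plan is to reduce this corollary directly to the preceding proposition, which already asserts that $Q(\mathscr{H})$ is a groupoid for any dynamical group $(\mathscr{H},\cdot)$. The only work remaining is to promote this groupoid to a bundle of groups (resp.\ a group bundle) by using the additional hypothesis that $\mathscr{H}$ is constant (resp.\ trivial).

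First I would observe that the previous proposition, applied to the dynamical group $(\mathscr{H},\cdot)$, gives a groupoid $Q(\mathscr{H})=(\Lambda\times H\rightrightarrows \Lambda,\star,\iota,\inv)$ with source map $\alpha=\pr_1$ and target map $\beta=\phi_H$. The constancy hypothesis $\phi_H(\lambda,x)=\lambda$ for all $\lambda\in\Lambda,\,x\in H$ gives $\beta=\pr_1=\alpha$, so $Q(\mathscr{H})$ is a bundle of groups by definition. The compatibility condition $\mu=\phi_H(\lambda,x)$ in the multiplication from the previous proposition then collapses to $\mu=\lambda$, and the multiplication becomes exactly
$$(\lambda,x)\star (\lambda,y)=(\lambda,x\cdot_\lambda y),$$
as claimed. (One could alternatively verify in a few lines that the fibre $\pi^{-1}(\lambda)=\{\lambda\}\times H$ with this operation is a group, using the associativity, unit and inverse already supplied by Definition~\ref{dynamical-group}; but invoking the preceding proposition does all this for free.)

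For the second statement, under the triviality hypothesis the groups $(H,\cdot_\lambda)$ are pairwise isomorphic. The fibre $\pi^{-1}(\lambda)=\{\lambda\}\times H$ of $Q(\mathscr{H})$ at $\lambda$ is isomorphic as a group to $(H,\cdot_\lambda)$ via the obvious bijection $(\lambda,x)\mapsto x$, whence all fibres are isomorphic to one another. This is exactly the definition of a group bundle, finishing the proof.

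No real obstacle is expected here: the statement is essentially a specialization of the previous proposition, and the main content is the translation between the defining identities of a constant (resp.\ trivial) dynamical group in Definition and the structural requirements on the source and target maps (resp.\ the fibres) of a bundle of groups.
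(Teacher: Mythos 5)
Your proposal is correct and follows essentially the same route as the paper: the corollary is deduced from the preceding proposition, with the constancy hypothesis $\phi_H(\lambda,x)=\lambda$ forcing the source and target maps to coincide so that each fibre $\pr_1^{-1}(\lambda)$ is the group $(H,\cdot_\lambda)$, and the triviality hypothesis making all fibres isomorphic. No gaps.
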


\begin{proof}
 Since $\phi_H(\lambda,x)=\lambda$ for all $\lambda\in \Lambda$ in a constant dynamical group, it follows that $Q(\mathscr{H})_{\lambda}:=\pr_1^{-1}(\lambda)=(H,\cdot_{\lambda})$ are all groups, which implies that $Q(\mathscr{H})$ is a bundle of groups. Especially, if $(H,\cdot_{\lambda})$ are isomorphic for all $\lambda\in \Lambda$, then $Q(\mathscr{H})$ is a group bundle.
\end{proof}

\begin{ex}{\rm
Let $(\mathscr{G},\circ)=(\{0,1,2\}, \Lambda=\{\lambda_1,\lambda_2,\lambda_3\},\phi,\{\circ_{\lambda_i}\}_{\lambda_i\in \Lambda})$ be the dynamical group given in Example \ref{ex-d-group}. Then $Q(\mathscr{G})$ is a groupoid illustrated in the following figure:
\begin{figure}[htbp]
\centering
\begin{tikzpicture}[
    node distance=2cm,
    object/.style={
        circle,
        draw,
        fill=white,
        inner sep=1.5pt,
        minimum size=6mm
    },
    arrow/.style={
        ->,
        >=triangle 45,
        shorten >=1pt,
        line width=0.6pt
    },
    loop/.style={
        out=60,
        in=120,
        looseness=8,
        min distance=3mm
    },
    loop2/.style={
        out=150,
        in=210,
        looseness=9,
        min distance=4mm
    },
    loop3/.style={
        out=330,
        in=30,
        looseness=9,
        min distance=4mm
    },
    label/.style={
        midway,
        fill=white,
        inner sep=0.8pt,
        font=\small
    }
]

\node[object] (l1) at (90:2) {$\scriptstyle\lambda_1$};
\node[object] (l2) at (210:2) {$\scriptstyle\lambda_2$};
\node[object] (l3) at (330:2) {$\scriptstyle\lambda_3$};

\draw[arrow] (l1) to [loop, above] node[label] {${\rm (\lambda_1,0)}$} (l1);
\draw[arrow] (l2) to [loop2, left] node[label, xshift=-2pt] {${\rm (\lambda_2,0)}$} (l2);
\draw[arrow] (l3) to [loop3, right] node[label, xshift=2pt] {${\rm (\lambda_3,0)}$} (l3);

\draw[arrow] (l1) to [bend right=20] node[label, below left] {${\rm (\lambda_1,2)}$} (l2);
\draw[arrow] (l2) to [bend right=-5] node[label, above right] {${\rm (\lambda_2,2)} $} (l1);

\draw[arrow] (l2) to [bend right=15] node[label, below right] {${\rm (\lambda_3,2)}$} (l3);
\draw[arrow] (l3) to [bend right=15] node[label, above left] {${\rm (\lambda_2,1)}$} (l2);

\draw[arrow] (l3) to [bend right=20] node[label, below] {${\rm \quad \quad \quad (\lambda_1,1)}$} (l1);
\draw[arrow] (l1) to [bend right=15] node[label, above] {${\rm (\lambda_3,1)}$} (l3);

\node[align=center, font=\footnotesize] at (0,-2.3) {
    \scriptsize a groupoid with three objects $\lambda_1,\lambda_2,\lambda_3$ \\
    \scriptsize and nine arrows induced by $G$ and $\phi$ };
\end{tikzpicture}
\end{figure}
}\end{ex}

Let us recall the notion of matched pairs of groupoids.

\begin{defi}\cite{Mackenzie}
A {\bf matched pair of groupoids} is a triple $(\G,\HH,\sigma)$, where $\xymatrix{ \G \ar@<0.5ex>[r]^{\alpha} \ar[r]_{\beta} & M}$ and $\xymatrix{ \HH \ar@<0.5ex>[r]^{s} \ar[r]_{t} & M}$ are groupoids over the same base $M$ and
$$\sigma:\G_{\beta}\times_{s} \HH\lon \HH_{t}\times_{\alpha} \G,\,\,\,\,(\gamma,\delta)\mapsto(\gamma\rightharpoonup \delta,\gamma\leftharpoonup \delta),\,\,\,\forall(\gamma,\delta)\in\G_{\beta}\times_{s} \HH,$$
is a homomorphism of quivers satisfying the following conditions:
\begin{eqnarray}
\label{MG-1}\iota_{s(\delta)}\rightharpoonup \delta&=&\delta,\\
\label{MG-2}\gamma_1\rightharpoonup(\gamma_2\rightharpoonup \delta)&=&(\gamma_1\star_\G\gamma_2)\rightharpoonup \delta,\\
\label{MG-3}(\gamma_1\star_\G\gamma_2)\leftharpoonup \delta&=&\big(\gamma_1\leftharpoonup(\gamma_2\rightharpoonup \delta)\big)\star_\G(\gamma_2\leftharpoonup \delta),\\
\label{MG-4}\gamma\leftharpoonup \iota_{\beta(\gamma)}&=& \gamma,\\
\label{MG-5}(\gamma\leftharpoonup \delta_1)\leftharpoonup \delta_2&=&\gamma\leftharpoonup(\delta_1\star_\HH\delta_2),\\
\label{MG-6}\gamma\rightharpoonup(\delta_1\star_\HH\delta_2)&=&(\gamma\rightharpoonup \delta_1)\star_\HH\big((\gamma\leftharpoonup \delta_1)\rightharpoonup \delta_2\big),
\end{eqnarray}
for all $\gamma,\gamma_1,\gamma_2\in\G$ and $\delta,\delta_1,\delta_2\in\HH$ such that the compositions are possible.
\end{defi}

\begin{pro}\label{pro:vacant-groupoid}{\rm \cite{Andruskiewitsch,Mackenzie}}
Let $\xymatrix{ (\G \ar@<0.5ex>[r]^{\alpha} \ar[r]_{\beta} & M,}\star_\G,\iota^\G,\inv_\G)$ and $\xymatrix{ (\HH \ar@<0.5ex>[r]^{s} \ar[r]_{t} & M,}\star_\HH,\iota^\HH,\inv_\HH)$ be two groupoids. Then $(\G,\HH,\sigma)$ is a matched pair of groupoids if and only if $\xymatrix{(\HH_{t} \times_{\alpha} \G \ar@<0.5ex>[r]^{\quad s} \ar[r]_{\quad \beta} & M,} \bowtie,  \iota^{\bowtie},\inv_{\bowtie})$ is a groupoid, where the multiplication $\bowtie$ and  $\iota^{\bowtie},\inv_{\bowtie}$ are respectively given by
$$ (\delta_1,\gamma_1)\bowtie (\delta_2,\gamma_2)=\Big(\delta_1 \star_\HH (\gamma_1 \rightharpoonup \delta_2),(\gamma_1 \leftharpoonup \delta_2)\star_\G \gamma_2  \Big),$$
and
$$  \iota^{\bowtie}(m)=(\iota^\HH(m),\iota^\G(m)),\quad \inv_{\bowtie}(\delta_1,\gamma_1)=\Big(
\inv_\G \gamma_1 \rightharpoonup \inv_\HH \delta_1,\inv_\G ( \gamma_1 \leftharpoonup (\inv_\G \gamma_1 \rightharpoonup \inv_\HH \delta_1 )) \Big),  $$
for all $\gamma_1,\gamma_2\in\G$ and $\delta_1,\delta_2\in\HH$.
\end{pro}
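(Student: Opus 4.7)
The plan is to verify this biconditional by aligning each groupoid axiom on $\HH_t\times_\alpha\G$ with one of the matched-pair compatibilities MG-1 through MG-6. The forward direction is a direct, if bookkeeping-heavy, verification, while the converse is obtained by specializing each groupoid identity to carefully chosen entries (most often with $\iota^\G$ or $\iota^\HH$ in one slot).

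First I would check that $\bowtie$ is well-defined: given $(\delta_1,\gamma_1),(\delta_2,\gamma_2)\in\HH_t\times_\alpha\G$ with $\beta(\gamma_1)=s(\delta_2)$, the target of $\delta_1\star_\HH(\gamma_1\rightharpoonup\delta_2)$ must match the source of $(\gamma_1\leftharpoonup\delta_2)\star_\G\gamma_2$, which follows from $\sigma$ being a quiver homomorphism together with the composability hypotheses. Next I would tackle associativity by expanding both $((\delta_1,\gamma_1)\bowtie(\delta_2,\gamma_2))\bowtie(\delta_3,\gamma_3)$ and $(\delta_1,\gamma_1)\bowtie((\delta_2,\gamma_2)\bowtie(\delta_3,\gamma_3))$. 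Equating the $\HH$-components forces MG-2 (to distribute $(\gamma_1\leftharpoonup\delta_2)\star_\G\gamma_2$ through the left action on $\delta_3$) together with MG-6 (to distribute $\gamma_1\rightharpoonup$ through the $\HH$-product $\delta_2\star_\HH(\gamma_2\rightharpoonup\delta_3)$); equating the $\G$-components symmetrically forces MG-5 and MG-3. The unit axioms $\iota^{\bowtie}(\alpha(\gamma))\bowtie(\delta,\gamma)=(\delta,\gamma)$ and the dual identity reduce directly to MG-1 and MG-4. Invertibility is a direct calculation from the stated formula for $\inv_\bowtie$, using MG-1, MG-2, MG-4, MG-5 together with the groupoid laws in $\G$ and $\HH$.

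For the converse, assuming $\HH_t\times_\alpha\G$ with the given operations is a groupoid, one recovers each compatibility individually. Applying the associativity equation with $(\delta_2,\gamma_2)=(\iota^\HH(\cdot),\gamma_2)$ or $(\iota^\HH(\cdot),\iota^\G(\cdot))$ in the middle slot isolates MG-2 (from $\HH$-components) and MG-5 (from $\G$-components); analogous specializations with identities elsewhere yield MG-6 and MG-3. MG-1 and MG-4 come from the unit axioms applied to $(\delta,\iota^\G(\cdot))$ and $(\iota^\HH(\cdot),\gamma)$ respectively, and the quiver-homomorphism property of $\sigma$ is forced by well-definedness of $\bowtie$.

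The main obstacle is the source/target bookkeeping: every intermediate arrow in the associativity expansion must be composable in $\G$ or $\HH$, so each use of MG-2, MG-3, MG-5, MG-6 presupposes that its endpoints in the two groupoids actually align under $\sigma$. Verifying these base/fibre compatibilities is mechanical but pervasive and forms the bulk of the work; once composability is confirmed at every step, the algebraic identities collapse cleanly and the equivalence follows.
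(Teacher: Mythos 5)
Your plan is correct in outline and is exactly the standard verification; note that the paper itself gives no proof of this proposition (it is recalled from the cited references of Andruskiewitsch and Mackenzie), but your decomposition — associativity of $\bowtie$ splitting into \eqref{MG-2}/\eqref{MG-6} on the $\HH$-components and \eqref{MG-3}/\eqref{MG-5} on the $\G$-components, units giving \eqref{MG-1}/\eqref{MG-4}, well-definedness encoding the quiver-homomorphism property of $\sigma$ — mirrors how the paper proves the dynamical analogue in Proposition~\ref{mp-double}. One small imprecision: the unit laws do not reduce \emph{only} to \eqref{MG-1} and \eqref{MG-4}; they also require the auxiliary identities $\gamma\rightharpoonup \iota^\HH_{\beta(\gamma)}=\iota^\HH_{\beta(\gamma)}$ and $\iota^\G_{s(\delta)}\leftharpoonup\delta=\iota^\G_{t(\delta)}$, which in the forward direction must be derived from \eqref{MG-6} and \eqref{MG-3} (an idempotent in a groupoid is an identity) and in the converse are read off from the unit axioms alongside \eqref{MG-1} and \eqref{MG-4} before being fed into the associativity specializations.
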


The groupoid $\xymatrix{(\HH_{t} \times_{\alpha} \G \ar@<0.5ex>[r]^{\quad s} \ar[r]_{\quad \beta} & M}, \bowtie, \iota^{\bowtie},\inv_{\bowtie} )$ is called the {\bf vacant double groupoid}, which is denoted by $\HH_{t} \bowtie_{\alpha} \G$.

\begin{pro}\label{mp-groupoid-double}
Let $(\mathscr{G},\mathscr{H},\sigma)$ be a matched pair of dynamical groups. Assume that $Q(\mathscr{G})=\xymatrix{ (\Lambda \times G \ar@<0.5ex>[r]^{\quad \alpha=\pr_1} \ar[r]_{\quad\beta=\phi_G} & \Lambda},\star_{Q(\mathscr{G})},\iota,\inv)$ and $Q(\mathscr{H})=\xymatrix{ (\Lambda \times H \ar@<0.5ex>[r]^{\quad s=\pr_1} \ar[r]_{\quad t=\phi_H} & \Lambda},\star_{Q(\mathscr{H})},\iota,\inv)$ are the associated groupoids of $\mathscr{G}$ and $\mathscr{H}$ respectively. Then $(Q(\mathscr{G}),Q(\mathscr{H}),\hat{\sigma})$ is a
matched pair of groupoids, where $\hat{\sigma}: (\Lambda \times G)_{\beta}\times_{s} (\Lambda \times H) \lon (\Lambda \times H)_{t} \times_{\alpha} (\Lambda \times G) $ is given by
\begin{eqnarray}\label{mp-mp-groupoid}
\hat{\sigma}(\lambda,a,\lambda',x)=\Big(\lambda, a\overset{\lambda}{\rightharpoonup} x,\phi_H(\lambda,a\overset{\lambda}{\rightharpoonup} x),
a\overset{\phi_G(\lambda,a)}{\leftharpoonup\joinrel\relbar\joinrel\relbar\joinrel\relbar} x \Big),
\end{eqnarray}
for all $(\lambda,a)\in Q(\mathscr{G})$ and $(\lambda',x)\in Q(\mathscr{H})$ satisfying $\lambda'=\phi_G(\lambda,a)$.
\end{pro}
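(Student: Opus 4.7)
The plan is to verify directly that $\hat{\sigma}$, as defined by \eqref{mp-mp-groupoid}, together with the groupoids $Q(\mathscr{G})$ and $Q(\mathscr{H})$, satisfies the defining conditions \eqref{MG-1}--\eqref{MG-6} of a matched pair of groupoids, by translating each of the defining axioms of a matched pair of dynamical groups through the functor $Q$.

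First I would check that $\hat{\sigma}$ is well-defined and is a homomorphism of quivers. The source condition on the fiber product $(\Lambda \times G)_{\beta}\times_{s} (\Lambda \times H)$ reads $\phi_G(\lambda,a)=\lambda'$, which is exactly the hypothesis on the input. For the image to lie in $(\Lambda \times H)_{t}\times_{\alpha}(\Lambda\times G)$, I need the $t$-target of the $H$-component to equal the $\alpha$-source of the $G$-component, which is immediate from the second coordinate $\phi_H(\lambda,a\overset{\lambda}{\rightharpoonup}x)$. Compatibility of $\hat{\sigma}$ with the total source and target maps $s\circ\mathrm{pr}_1=\alpha\circ\mathrm{pr}_1$ and $\beta\circ\mathrm{pr}_2=t\circ\mathrm{pr}_2$ on the product quiver boils down precisely to \eqref{mp-d-groups-7}, which was observed in Remark~\ref{mp-bijective} to encode that $\sigma$ is a morphism of dynamical sets.

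Next I would verify the six axioms \eqref{MG-1}--\eqref{MG-6} one by one by matching them to \eqref{mp-d-groups-1}--\eqref{mp-d-groups-6}. Writing the groupoid actions as $(\lambda,a)\rightharpoonup(\phi_G(\lambda,a),x) = (\lambda, a\overset{\lambda}{\rightharpoonup}x)$ and $(\lambda,a)\leftharpoonup(\phi_G(\lambda,a),x)=(\phi_H(\lambda,a\overset{\lambda}{\rightharpoonup}x), a\overset{\phi_G(\lambda,a)}{\leftharpoonup\joinrel\relbar\joinrel\relbar\joinrel\relbar}x)$, the dictionary is as follows: \eqref{MG-1} unitality of $\rightharpoonup$ against $\iota_{s(\delta)}=(\lambda',e_G)$ follows from \eqref{mp-d-groups-1}; \eqref{MG-2} associativity of $\rightharpoonup$ along $\star_{Q(\mathscr{G})}$ is precisely \eqref{mp-d-groups-2}; \eqref{MG-4} for $\iota_{\beta(\gamma)}$ is \eqref{mp-d-groups-4}; \eqref{MG-5} along $\star_{Q(\mathscr{H})}$ is \eqref{mp-d-groups-5}; the mixed axiom \eqref{MG-6} translates to \eqref{mp-d-groups-3}; and \eqref{MG-3} translates to \eqref{mp-d-groups-6}.

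The main obstacle, and essentially the only non-trivial bookkeeping, is tracking the parameters in the last two translations: the groupoid products $\star_{Q(\mathscr{G})}$ and $\star_{Q(\mathscr{H})}$ carry along their base parameters, and after applying $\hat{\sigma}$ one must recognize that the various occurrences of $\lambda$, $\phi_G(\lambda,a)$, $\phi_H(\lambda,a\overset{\lambda}{\rightharpoonup}x)$, and $\phi_G(\phi_H(\lambda,a\overset{\lambda}{\rightharpoonup}x), a\overset{\phi_G(\lambda,a)}{\leftharpoonup\joinrel\relbar\joinrel\relbar\joinrel\relbar}x)$ appearing on the two sides agree. The equality of the latter two parameters is precisely \eqref{mp-d-groups-7}, which is therefore used repeatedly to rewrite parameters on one side into the form appearing in the dynamical axiom on the other side. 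Once these parameter identifications are in place, each groupoid axiom reduces coordinatewise to its dynamical counterpart, completing the verification.
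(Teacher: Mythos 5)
Your proof is correct in outline but takes a genuinely different route from the paper's. The paper does not verify the axioms \eqref{MG-1}--\eqref{MG-6} at all: it invokes Proposition~\ref{mp-double} to pass from the matched pair $(\mathscr{G},\mathscr{H},\sigma)$ to its double dynamical group $\mathscr{H}\bowtie\mathscr{G}$, applies the functor $Q$ to obtain the groupoid $Q(\mathscr{H}\bowtie\mathscr{G})$, identifies this with the vacant double groupoid $Q(\mathscr{H})_{t}\bowtie_{\alpha}Q(\mathscr{G})$, and then uses Proposition~\ref{pro:vacant-groupoid} to translate back into the matched-pair-of-groupoids language. That argument is three lines long and reuses the two already-established ``matched pair $\Leftrightarrow$ double'' equivalences on each side; its cost is that the identification of $Q(\mathscr{H}\bowtie\mathscr{G})$ with the vacant double groupoid is asserted as ``obvious.'' Your direct verification is more elementary and self-contained, and your dictionary between \eqref{MG-1}--\eqref{MG-6} and \eqref{mp-d-groups-1}--\eqref{mp-d-groups-6} is the right one, including the correct pairing of the two mixed axioms (\eqref{MG-6} with \eqref{mp-d-groups-3} and \eqref{MG-3} with \eqref{mp-d-groups-6}) and the repeated use of \eqref{mp-d-groups-7} to reconcile base parameters. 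One small point you should make explicit: in checking \eqref{MG-4}, the first coordinate of $\gamma\leftharpoonup\iota_{\beta(\gamma)}$ is $\phi_H(\lambda,a\overset{\lambda}{\rightharpoonup}e_H)$, so you need $a\overset{\lambda}{\rightharpoonup}e_H=e_H$ in addition to \eqref{mp-d-groups-4}; this is not one of the listed axioms but follows from \eqref{mp-d-groups-3} and \eqref{mp-d-groups-4} (the authors record this derived identity later, in the proof of Proposition~\ref{BDG-DPG}). With that supplement your argument closes completely.
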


\begin{proof}
Let $(\mathscr{G},\mathscr{H},\sigma)$ be a matched pair of dynamical groups. By Proposition \ref{mp-double}, its double $\mathscr{H}\bowtie \mathscr{G}$ is a dynamical group. We use the embedding functor $Q$ acting on it and obtain the corresponding groupoid $Q(\mathscr{H}\bowtie \mathscr{G})$, which is obviously isomorphic to the vacant double groupoid $\xymatrix{Q(\mathscr{H})_{t} \bowtie_{\alpha} Q(\mathscr{G}) \ar@<0.5ex>[r]^{\quad\quad\quad s} \ar[r]_{\quad\quad\quad \beta} & \Lambda.}$ Then by Proposition \ref{pro:vacant-groupoid}, the latter is equivalent to the matched pair of groupoids $(Q(\mathscr{G}),Q(\mathscr{H}),\hat{\sigma})$, where $\hat{\sigma}$ is defined by \eqref{mp-mp-groupoid}.
\end{proof}

\begin{defi}{\rm (\cite{Andruskiewitsch,MM}) }
A {\bf braided groupoid}  is a pair $(\G,\sigma)$, where $\G$ is a groupoid and $\sigma:\G_\beta\times_\alpha \G\lon \G_\beta\times_\alpha \G$ is a homomorphism of quivers such that
\begin{enumerate}
\item[\rm(i)] the triple $(\G,\G,\sigma)$ is a matched pair of groupoids,
\item[\rm(ii)] $(\gamma\rightharpoonup \delta)\star(\gamma \leftharpoonup \delta)=\gamma\star\delta$, for all $(\gamma,\delta)\in \G_\beta\times_\alpha \G$.
\end{enumerate}
\end{defi}

\begin{thm}\label{braided-d-group-braided-gpd}
Let $(\mathscr{G},\sigma)$ be a braided dynamical group. Then $(Q(\mathscr{G}),\hat{\sigma})$ is a braided groupoid,
where $\hat{\sigma}: (\Lambda \times G)_{\beta}\times_{\alpha} (\Lambda \times G) \lon (\Lambda \times G)_{\beta} \times_{\alpha} (\Lambda \times G) $ is given by
\begin{eqnarray}\label{braided-braided-groupoid}
\hat{\sigma}(\lambda,a,\lambda',b)=(\lambda, a\overset{\lambda}{\rightharpoonup} b,\phi_G(\lambda,a\overset{\lambda}{\rightharpoonup} b),
a\overset{\phi_G(\lambda,a)}{\leftharpoonup\joinrel\relbar\joinrel\relbar\joinrel\relbar} b),
\end{eqnarray}
for all $(\lambda,a),(\lambda',b)\in Q(\mathscr{G})$ satisfying $\lambda'=\phi_G(\lambda,a)$.
\end{thm}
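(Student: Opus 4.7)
My plan is to bootstrap from Proposition \ref{mp-groupoid-double}, which already packages the bulk of the verification, and then check only the single extra compatibility (ii) in the definition of a braided groupoid.

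By Definition \ref{def-braid-d-group}(i), a braided dynamical group $(\mathscr{G},\sigma)$ is precisely a matched pair of dynamical groups of the form $(\mathscr{G},\mathscr{G},\sigma)$, together with the additional condition \eqref{braided-com}. Applying Proposition \ref{mp-groupoid-double} with $\mathscr{H}=\mathscr{G}$ therefore immediately yields that $(Q(\mathscr{G}),Q(\mathscr{G}),\hat{\sigma})$ is a matched pair of groupoids, where $\hat{\sigma}$ is \eqref{mp-mp-groupoid} specialized to $H=G$, i.e.\ precisely \eqref{braided-braided-groupoid}. This establishes item (i) of the definition of a braided groupoid, taking care of the six matched-pair axioms \eqref{MG-1}--\eqref{MG-6} at no extra cost.

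It then remains to verify condition (ii). I would take a composable pair $(\hat{\gamma},\hat{\delta}) \in Q(\mathscr{G})_{\beta}\times_{\alpha} Q(\mathscr{G})$, write $\hat{\gamma}=(\lambda,a)$ and $\hat{\delta}=(\phi_G(\lambda,a),b)$, and unfold both sides of $(\hat{\gamma}\rightharpoonup \hat{\delta})\star(\hat{\gamma}\leftharpoonup \hat{\delta}) = \hat{\gamma}\star\hat{\delta}$. By \eqref{braided-braided-groupoid}, the two components of $\hat{\sigma}(\hat{\gamma},\hat{\delta})$ read $\hat{\gamma}\rightharpoonup \hat{\delta}=(\lambda, a\overset{\lambda}{\rightharpoonup} b)$ and $\hat{\gamma}\leftharpoonup \hat{\delta}=(\phi_G(\lambda, a\overset{\lambda}{\rightharpoonup} b), a\overset{\phi_G(\lambda,a)}{\leftharpoonup\joinrel\relbar\joinrel\relbar\joinrel\relbar} b)$. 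By construction these are composable in $Q(\mathscr{G})$, and their $\star$-product equals
$$\bigl(\lambda,\, (a\overset{\lambda}{\rightharpoonup} b)\circ_\lambda (a\overset{\phi_G(\lambda,a)}{\leftharpoonup\joinrel\relbar\joinrel\relbar\joinrel\relbar} b)\bigr),$$
which by the braided-group compatibility \eqref{braided-com} equals $(\lambda, a\circ_\lambda b) = \hat{\gamma}\star\hat{\delta}$, as required.

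The main obstacle is really only careful bookkeeping of source and target data: one must check that $\hat{\sigma}$ is a well-defined morphism of quivers from $Q(\mathscr{G})_\beta\times_\alpha Q(\mathscr{G})$ to itself (which uses \eqref{mp-d-groups-7} from Remark \ref{mp-bijective}) and that the two arrows appearing in (ii) are indeed composable in $Q(\mathscr{G})$. Once these source/target identifications are in place, no new axiom verification is needed beyond a direct application of \eqref{braided-com}, since Proposition \ref{mp-groupoid-double} has already absorbed all the remaining matched-pair content.
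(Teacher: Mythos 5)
Your proposal is correct and follows essentially the same route as the paper: the paper likewise invokes Proposition \ref{mp-groupoid-double} to obtain the matched pair of groupoids $(Q(\mathscr{G}),Q(\mathscr{G}),\hat{\sigma})$ and then verifies the remaining braided-groupoid compatibility by computing $(\lambda, a\overset{\lambda}{\rightharpoonup} b)\star(\phi_G(\lambda,a\overset{\lambda}{\rightharpoonup} b), a\overset{\phi_G(\lambda,a)}{\leftharpoonup\joinrel\relbar\joinrel\relbar\joinrel\relbar} b)=(\lambda,a\circ_\lambda b)$ via \eqref{braided-com}. Your additional remarks on the source/target bookkeeping are sound but not developed further in the paper's own proof either.
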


\begin{proof}
By Proposition \ref{mp-groupoid-double}, it follows that if $(\mathscr{G},\mathscr{G},\sigma)$ is a matched pair of dynamical groups, then
$(Q(\mathscr{G}),Q(\mathscr{G}),\hat{\sigma})$ is a matched pair of groupoids. Moreover, since $(\mathscr{G},\sigma)$ is a braided dynamical group, by \eqref{braided-com}, we have
\begin{eqnarray*}
(\lambda, a\overset{\lambda}{\rightharpoonup} b)\star
(\phi_G(\lambda,a\overset{\lambda}{\rightharpoonup} b),
a\overset{\phi_G(\lambda,a)}{\leftharpoonup\joinrel\relbar\joinrel\relbar\joinrel\relbar} b)
&=& (\lambda,(a\overset{\lambda}{\rightharpoonup} b)\circ_{\lambda} (a\overset{\phi_G(\lambda,a)}{\leftharpoonup\joinrel\relbar\joinrel\relbar\joinrel\relbar} b))\\
&=& (\lambda, a \circ_{\lambda} b)\\
&=& (\lambda, a)\star (\lambda',b),
\end{eqnarray*}
which implies that $(Q(\mathscr{G}),\hat{\sigma})$ is a braided groupoid.
\end{proof}

At the end of this section, we illustrate the role played by braided dynamical groups in the study of quiver-theoretical solutions of the Yang-Baxter equation.

The following concept was introduced by Andruskiewitsch in \cite{Andruskiewitsch}.

\begin{defi}{\rm (\cite{Andruskiewitsch}) }
Let $\xymatrix{\A \ar@<0.5ex>[r]^{\alpha} \ar[r]_{\beta} & \Lambda}$ be a quiver. A quiver-theoretical solution of the {\bf Yang-Baxter equation} on the quiver $\A$ is a quiver isomorphism $R:\A_{\beta}\times_{\alpha}\A\to \A_{\beta}\times_{\alpha}\A$ satisfying:
\begin{eqnarray}
R_{12}R_{23}R_{12}=R_{23}R_{12}R_{23}:\A_{\beta}\times_{\alpha}\A_{\beta}\times_{\alpha}\A\lon \A_{\beta}\times_{\alpha}\A_{\beta}\times_{\alpha}\A,
\end{eqnarray}
where $R_{12}=R\times\Id_\A,~R_{23}=\Id_\A\times R$. A quiver $\A$ with a quiver-theoretical solution of the Yang-Baxter equation on $\A$ is called a {\bf braided quiver}. Denote by $\Br({\bf Quiv_{\Lambda}})$ the category of  braided quivers over $\Lambda$.
\end{defi}

\begin{rmk}
A quiver-theoretical solution of the Yang-Baxter equation on a quiver over $\Lambda$ is a  solution of the Yang-Baxter equation in the monoidal category ${\bf Quiv_{\Lambda}}$.
\end{rmk}

\begin{thm}{\rm (\cite{Andruskiewitsch}) }\label{QYBE}
Let $(\G,\sigma)$ be a braided groupoid. Then $\sigma:\G_\beta\times_\alpha \G\lon \G_\beta\times_\alpha \G$ is a non-degenerate  quiver-theoretical solution of the  Yang-Baxter equation on the quiver $\xymatrix{\G \ar@<0.5ex>[r]^{\alpha} \ar[r]_{\beta} & \Lambda}$.
\end{thm}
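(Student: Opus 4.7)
The plan is to imitate the strategy used to prove Proposition \ref{com-solution} in the present groupoid setting. Fix a composable triple $(\gamma,\delta,\epsilon)\in \G_\beta\times_\alpha\G_\beta\times_\alpha\G$ and denote $\sigma_{12}\sigma_{23}\sigma_{12}(\gamma,\delta,\epsilon)=(p_1,p_2,p_3)$ and $\sigma_{23}\sigma_{12}\sigma_{23}(\gamma,\delta,\epsilon)=(q_1,q_2,q_3)$. Expanding each composite step by step using $\sigma(\mu,\nu)=(\mu\rightharpoonup\nu,\mu\leftharpoonup\nu)$ produces explicit expressions for the six entries as iterated $\rightharpoonup$ and $\leftharpoonup$; for instance, $p_1=(\gamma\rightharpoonup\delta)\rightharpoonup\bigl((\gamma\leftharpoonup\delta)\rightharpoonup\epsilon\bigr)$ while $q_1=\gamma\rightharpoonup(\delta\rightharpoonup\epsilon)$, and symmetrically for $p_3, q_3$.

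The verification then splits into three parts. First, I would establish $p_1=q_1$ by applying axiom \eqref{MG-2} to rewrite $p_1$ as $\bigl((\gamma\rightharpoonup\delta)\star(\gamma\leftharpoonup\delta)\bigr)\rightharpoonup\epsilon$, then invoke the braiding identity $(\gamma\rightharpoonup\delta)\star(\gamma\leftharpoonup\delta)=\gamma\star\delta$ to collapse this to $(\gamma\star\delta)\rightharpoonup\epsilon$, and finally apply \eqref{MG-2} once more in reverse to identify it with $q_1$. Symmetrically, $p_3=q_3$ follows from \eqref{MG-5} combined with the braiding identity. For the middle components I would not attempt a direct match; instead I would compute the triple products $p_1\star p_2\star p_3$ and $q_1\star q_2\star q_3$. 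Iterated use of the braiding identity together with groupoid associativity reduces both triple products to $\gamma\star\delta\star\epsilon$. Combined with $p_1=q_1$ and $p_3=q_3$, the left- and right-cancellation laws in $\G$ force $p_2=q_2$.

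For non-degeneracy, I would invoke Proposition \ref{pro:vacant-groupoid} applied to the matched pair $(\G,\G,\sigma)$: the vacant double groupoid $\G_\beta\bowtie_\alpha\G$ provides the inverse $\inv_\bowtie$ which, restricted to the appropriate fibres, yields an explicit inverse to $\sigma$ as a quiver homomorphism, hence $\sigma$ is a quiver isomorphism.

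The main obstacle will be the bookkeeping of source and target data at each stage: each application of $\sigma$ must land in $\G_\beta\times_\alpha\G$, so one must repeatedly use that $\sigma$ is a quiver morphism to confirm that all the compositions in $\G$ appearing in $p_1\star p_2\star p_3$ and $q_1\star q_2\star q_3$ are actually well-defined. Once this is organized, the conceptual core — reducing both triple products to $\gamma\star\delta\star\epsilon$ via the braiding relation and concluding by cancellation — exactly parallels the argument of Proposition \ref{com-solution} in the dynamical group setting.
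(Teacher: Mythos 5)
The paper does not prove Theorem \ref{QYBE}; it is quoted from \cite{Andruskiewitsch}, so there is no in-paper proof to compare against. Your reconstruction is correct and is exactly the groupoid version of the argument the paper itself gives for Proposition \ref{com-solution}: $p_1=q_1$ via \eqref{MG-2} and the braiding identity, $p_3=q_3$ via \eqref{MG-5} and the braiding identity (both sides collapse to $\gamma\leftharpoonup(\delta\star\epsilon)$), and the middle components forced to agree by reducing both triple products to $(\gamma\star\delta)\star\epsilon$ and cancelling in the groupoid. The only point I would tighten is the non-degeneracy claim: what is needed is that the partial maps $\gamma\rightharpoonup(-)$ and $(-)\leftharpoonup\delta$ are bijections between the relevant fibres, which follows directly from \eqref{MG-1}--\eqref{MG-2} (inverse given by $\inv(\gamma)\rightharpoonup(-)$) and \eqref{MG-4}--\eqref{MG-5} respectively; invertibility of $\sigma$ itself then follows from these partial inverses rather than from the formula for $\inv_{\bowtie}$ in Proposition \ref{pro:vacant-groupoid}, which is the inverse for the double groupoid multiplication, not for the braiding map.
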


On the other hand, Matsumoto and  Shimizu showed that a braided dynamical set gives rise to a braided quiver.

\begin{thm}{\rm (\cite{MS})}\label{br-dset-br-quiver}
The embedding functor $Q:{\bf DSet_{\Lambda}}\to{\bf Quiv_{\Lambda}}$ induces a fully faithful functor
$$ \Br(Q):\Br({\bf DSet_{\Lambda}}) \to \Br({\bf Quiv_{\Lambda}}), $$
which sends a braided object $(\mathscr{X},\sigma)$ of ${\bf DSet_{\Lambda}}$ to the braided object $(Q(\mathscr{X}),\widetilde{\sigma})$ of ${\bf Quiv_{\Lambda}}$, where
$$ \widetilde{\sigma}:=(Q_{X,X}^{(2)})^{-1} \circ Q(\sigma) \circ Q_{X,X}^{(2)}, $$
and
$$ Q_{X,X}^{(2)}: Q(\mathscr{X})_\beta \times_{\alpha} Q(\mathscr{X}) \to  Q(\mathscr{X} \otimes \mathscr{X}),\quad (\lambda,x,\mu,y) \mapsto (\lambda,x,y), $$
for all $\lambda\in \Lambda,x,y\in X$ and $\mu=\phi_X(\lambda,x)$.
\end{thm}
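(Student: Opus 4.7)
The plan is to exploit the fact, recorded above as \cite[Theorem 2.7]{MS}, that $Q:\mathbf{DSet}_{\Lambda}\to \mathbf{Quiv}_{\Lambda}$ is a fully faithful (strong) monoidal embedding, with the monoidal structure maps given exactly by the isomorphisms $Q^{(2)}_{X,Y}$. Since any strong monoidal functor transports braided objects to braided objects, the theorem should fall out of a general categorical transport principle, once one identifies $\widetilde{\sigma}$ with the image of $\sigma$ under the canonical conjugation by $Q^{(2)}$.

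First I would verify that $\widetilde{\sigma}$ is a well-defined quiver isomorphism. Since $Q^{(2)}_{X,X}$ is an isomorphism of quivers over $\Lambda$ (it just reinterprets a composable pair $(\lambda,x,\mu,y)$ with $\mu=\phi_X(\lambda,x)$ as a tensor $(\lambda,x,y)$ in $Q(\mathscr{X}\otimes\mathscr{X})$), and since $Q(\sigma)$ is a quiver isomorphism because $\sigma$ is an isomorphism in $\mathbf{DSet}_{\Lambda}$, the composite $\widetilde{\sigma}=(Q^{(2)}_{X,X})^{-1}\circ Q(\sigma)\circ Q^{(2)}_{X,X}$ is automatically a quiver isomorphism from $Q(\mathscr{X})_{\beta}\times_{\alpha}Q(\mathscr{X})$ to itself.

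Next, I would show that $\widetilde{\sigma}$ satisfies the quiver Yang-Baxter equation. Here the strategy is purely categorical: rewrite the triple tensor $Q(\mathscr{X})_{\beta}\times_{\alpha}Q(\mathscr{X})_{\beta}\times_{\alpha}Q(\mathscr{X})$ as $Q(\mathscr{X}\otimes\mathscr{X}\otimes\mathscr{X})$ via iterated application of $Q^{(2)}$ (together with the associator, which $Q$ preserves up to coherent isomorphism), and translate each of $\widetilde{\sigma}_{12}$ and $\widetilde{\sigma}_{23}$ into $Q(\sigma\otimes\Id_{\mathscr{X}})$ and $Q(\Id_{\mathscr{X}}\otimes\sigma)$ after conjugation. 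The equation
\[
\widetilde{\sigma}_{12}\widetilde{\sigma}_{23}\widetilde{\sigma}_{12}=\widetilde{\sigma}_{23}\widetilde{\sigma}_{12}\widetilde{\sigma}_{23}
\]
then becomes the image under the functor $Q$ of the dynamical Yang-Baxter equation \eqref{DYBE}, once one notes that the reinterpretations of the ``12'' and ``23'' slots via $Q^{(2)}$ are exactly what produces the shift $\phi_X(\lambda,X^{1})$ appearing in Definition \ref{defi:dYBE}. Concretely, the equality on $(\lambda,x,y,z)$ comes down to unwinding both composites: each outer $Q^{(2)}$ just records $\lambda$ and the intermediate $\mu=\phi_X(\lambda,-)$, so the three-factor equation reduces to \eqref{DYBE} applied at $\lambda$, which holds by hypothesis.

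Finally I would address that $\Br(Q)$ is fully faithful. Given braided dynamical sets $(\mathscr{X},\sigma)$ and $(\mathscr{Y},\sigma')$ and a morphism $g:Q(\mathscr{X})\to Q(\mathscr{Y})$ of braided quivers, full faithfulness of $Q$ provides a unique $f:\mathscr{X}\to\mathscr{Y}$ in $\mathbf{DSet}_{\Lambda}$ with $Q(f)=g$. The intertwining identity $g_{\beta}\times_{\alpha}g\circ\widetilde{\sigma}=\widetilde{\sigma}'\circ g_{\beta}\times_{\alpha} g$, after conjugating by $Q^{(2)}$ and using naturality of $Q^{(2)}$ in both arguments, becomes $Q(f\otimes f)\circ Q(\sigma)=Q(\sigma')\circ Q(f\otimes f)$; faithfulness of $Q$ then yields $(f\otimes f)\circ\sigma=\sigma'\circ(f\otimes f)$, so $f$ is a morphism of braided dynamical sets.

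The only nontrivial step is the second one, which is in essence a piece of bookkeeping about how the isomorphism $Q^{(2)}$ interchanges the ``Cartesian over $\Lambda$'' structure of quiver tensors with the ``iterated structure map $\phi_X$'' structure of dynamical-set tensors; once this is done carefully for the two-fold and three-fold tensor products, the Yang-Baxter identity transfers mechanically. Hence the main obstacle is notational rather than conceptual: checking that under $Q^{(2)}$ the intermediate parameter $\mu$ appearing in the quiver picture is exactly the parameter $\phi_X(\lambda,X^{1})$ used in the dynamical picture.
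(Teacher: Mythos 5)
Your proposal is correct: the theorem is quoted from \cite{MS} and the paper itself gives no proof, but your argument --- transporting the braided-object structure along the fully faithful strong monoidal embedding $Q$ by conjugation with $Q^{(2)}$, checking that the ``12''/``23'' slots match $R(\lambda)_{12}$ and $R(\phi_X(\lambda,X^1))_{23}$ under the tensor-of-morphisms formula, and deducing full faithfulness of $\Br(Q)$ from that of $Q$ plus naturality of $Q^{(2)}$ --- is exactly the standard proof and the one the cited source uses. The only (harmless) omission is that you verify the intertwining transfer only in the backward direction; the forward direction, needed to see that $\Br(Q)$ is well defined on morphisms, is the same naturality computation.
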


The following conclusion shows that braided dynamical groups provide a connection of the above two approaches to construct quiver-theoretical solutions of the Yang-Baxter equation.

\begin{cor} \label{c:twoconst}
Let $(\mathscr{G},\sigma)$ be a braided dynamical group. Then $\hat{\sigma}: (\Lambda \times G)_{\beta}\times_{\alpha} (\Lambda \times G) \lon (\Lambda \times G)_{\beta} \times_{\alpha} (\Lambda \times G) $ is a non-degenerate  quiver-theoretical solution of the  Yang-Baxter equation on the quiver $Q(\mathscr{G})=\xymatrix{ \Lambda \times G \ar@<0.5ex>[r]^{\quad \alpha=\pr_1} \ar[r]_{\quad\beta=\phi_G} & \Lambda}$, where $\hat{\sigma}$ is give by \eqref{braided-braided-groupoid}. Moreover, the following diagram is commutative:\begin{equation*}
\vcenter{\hbox{\footnotesize 
\xymatrix@C=1.5em{  
& *+[F]\txt{\footnotesize braided dynamical group}
  \ar^{ \text{Thm \ref{braided-solution}} }[d]
  \ar^{ \text{Thm \ref{braided-d-group-braided-gpd}} }[rr]
& & *+[F]\txt{\footnotesize braided groupoid}
  \ar^{ \text{Thm \ref{QYBE}} }[d] \\
& *+[F]\txt{\footnotesize solution of DYBE}
  \ar_{ \text{Thm \ref{br-dset-br-quiver}}\quad\quad\quad }[rr]
& & *+[F]\txt{\footnotesize quiver-theoretical solution of YBE}
}}}
\end{equation*}
\end{cor}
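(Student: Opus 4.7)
The plan is to chain together the theorems already established in this section. For the first assertion, I first apply Theorem~\ref{braided-d-group-braided-gpd} to the braided dynamical group $(\mathscr{G},\sigma)$, obtaining the braided groupoid $(Q(\mathscr{G}),\hat{\sigma})$ with $\hat{\sigma}$ given by the explicit formula \eqref{braided-braided-groupoid}. Then Theorem~\ref{QYBE} applied to this braided groupoid yields at once that $\hat{\sigma}$ is a non-degenerate quiver-theoretical solution of the Yang-Baxter equation on $Q(\mathscr{G})$. So the first statement is immediate.

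For the commutativity of the diagram, I need to verify that both ways of producing a quiver-theoretical solution from $(\mathscr{G},\sigma)$ agree. Going clockwise gives $\hat{\sigma}$ as above. Going counterclockwise, Theorem~\ref{braided-solution} yields that $\sigma$ is a (non-degenerate) solution of the dynamical Yang-Baxter equation on $\mathscr{G}$, so $(\mathscr{G},\sigma)$ is a braided dynamical set. Applying Theorem~\ref{br-dset-br-quiver} to it produces the braided quiver $(Q(\mathscr{G}),\widetilde{\sigma})$, where
\[
\widetilde{\sigma}=(Q^{(2)}_{G,G})^{-1}\circ Q(\sigma)\circ Q^{(2)}_{G,G}.
\]
The task reduces to showing $\widetilde{\sigma}=\hat{\sigma}$ as endomorphisms of the pullback quiver $(\Lambda\times G)_\beta\times_\alpha(\Lambda\times G)$.

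This verification is a direct unpacking of the definitions. Starting with a composable pair $(\lambda,a,\mu,b)$ (with $\mu=\phi_G(\lambda,a)$), I compute
\[
Q^{(2)}_{G,G}(\lambda,a,\mu,b)=(\lambda,a,b)\in Q(\mathscr{G}\otimes\mathscr{G}),
\]
then apply $Q(\sigma)$ to get $(\lambda,\sigma(\lambda,a,b))=\bigl(\lambda,\,a\overset{\lambda}{\rightharpoonup}b,\,a\overset{\phi_G(\lambda,a)}{\leftharpoonup\joinrel\relbar\joinrel\relbar\joinrel\relbar}b\bigr)$, and finally invert $Q^{(2)}_{G,G}$, inserting the forced middle entry $\phi_G(\lambda,a\overset{\lambda}{\rightharpoonup}b)$. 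The result matches \eqref{braided-braided-groupoid} on the nose, giving $\widetilde{\sigma}=\hat{\sigma}$.

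I do not anticipate any serious obstacle: the first assertion is a composition of two previously proved theorems, and the commutativity of the diagram is really just a bookkeeping check that the explicit formula \eqref{braided-braided-groupoid} for $\hat{\sigma}$ coincides with the conjugation of $Q(\sigma)$ by $Q^{(2)}_{G,G}$. The only mild subtlety is to keep track of which slot records the base dynamical parameter and which records the target, which is precisely the role of the isomorphism $Q^{(2)}_{G,G}$; once this is matched up, the equality $\widetilde{\sigma}=\hat{\sigma}$ is manifest.
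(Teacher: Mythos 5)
Your proposal is correct and follows the same route as the paper, which simply cites Theorems~\ref{braided-d-group-braided-gpd}, \ref{QYBE}, \ref{braided-solution} and \ref{br-dset-br-quiver} and asserts the conclusion follows directly. You go slightly further by explicitly unpacking $\widetilde{\sigma}=(Q^{(2)}_{G,G})^{-1}\circ Q(\sigma)\circ Q^{(2)}_{G,G}$ and checking it coincides with \eqref{braided-braided-groupoid}, a verification the paper leaves implicit; your computation of that step is accurate.
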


\begin{proof}
  The conclusion follows from Theorem \ref{QYBE}, Theorem  \ref{braided-d-group-braided-gpd}, Theorem \ref{braided-solution} and Theorem \ref{br-dset-br-quiver} directly.
\end{proof}

\vspace{-.4cm}
\section{Relative Rota-Baxter operators and matched pairs of dynamical groups}\label{sec:relative-RBO-mp}

In this section, we introduce the notion of a relative Rota-Baxter operator on a dynamical group, and show that relative Rota-Baxter operators on dynamical groups can give rise to matched pairs of dynamical groups  and thus solutions of the dynamical Yang-Baxter equation.

Let $\mathscr{G}=(G,\Lambda,\phi_\G)$ be a dynamical set and $\mathscr{H}=(H,\Lambda)$ a constant dynamical set. Denote by $\Map(H)$ the set of maps from $H$ to $H$, and by $\Perm(H)$ the set of bijective maps from $H$ to $H$.

\begin{defi}\label{weak-act}
Let $(\mathscr{G},\circ)$ be a dynamical semi-group and $(\mathscr{H},\cdot)$ be a constant dynamical group. A family of maps $\{\Phi_\lambda:G \to \Map(H)\}_{\lambda\in\Lambda}$ is called an {\bf action} of $\mathscr{G}$ on $\mathscr{H}$, if   the following conditions are satisfied:
\begin{eqnarray}
\label{action-1}\Phi_\lambda(a)(x \cdot_{\phi_{G}(\lambda,a)} y)&=&(\Phi_\lambda(a)x)\cdot_{\lambda} (\Phi_\lambda(a)y),\\
\label{action-2}\Phi_\lambda(a\circ_{\lambda} b)(x)&=& \Phi_\lambda(a)\Phi_{\phi_{G}(\lambda,a)}(b)(x),\quad \forall a,b\in G,x,y\in H.
\end{eqnarray}
\end{defi}

The notion of actions of   a dynamical group $(\mathscr{G},\circ)$ on a constant dynamical group $(\mathscr{H},\cdot)$ is naturally obtained from the above definition. Note that in this case, $\Phi_\lambda(a)$ must be invertible, i.e.  $\Phi_\lambda(a)\in\Perm(H)$.

\begin{pro}
Let  $\{\Phi_\lambda:G \to \Perm(H)\}_{\lambda\in\Lambda}$ be an action of a  dynamical group $(\mathscr{G},\circ)$ on a constant dynamical group $(\mathscr{H},\cdot)$. Then $(H\times G,\Lambda,\phi_{H\times G},\{\rtimes_\lambda\}_{\lambda\in \Lambda}) $ is a dynamical group, which is called the {\bf semi-direct product of $\mathscr{G}$ and $\mathscr{H}$} and simply denoted by $\mathscr{H} \rtimes_{\Phi} \mathscr{G}$, where the multiplication $\rtimes_\lambda$ is given by
\begin{eqnarray}\label{semi-direct-product}
(x,a)\rtimes_\lambda (y,b):=(x\cdot_\lambda (\Phi_\lambda(a)y),a \circ_\lambda b),\quad \forall \lambda\in \Lambda, a,b\in G,x,y\in H,
\end{eqnarray}
and the map $\phi_{H\times G}:\Lambda \times (H\times G)\to \Lambda$ is given by
\begin{eqnarray}\label{semi-direct-phi}
  \phi_{H\times G}(\lambda,x,a):=\phi_{G}(\lambda,a).
\end{eqnarray}
\end{pro}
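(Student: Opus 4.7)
The plan is to verify items (i)--(iii) of Definition~\ref{dynamical-group} for the candidate structure on $H\times G$. As a preliminary step, I would check that $\rtimes$ defines a morphism of dynamical sets, i.e., that $\phi_{H\times G}(\lambda,(x,a)\rtimes_\lambda(y,b)) = \phi_{H\times G}(\phi_{H\times G}(\lambda,x,a),y,b)$. Using \eqref{semi-direct-phi}, this reduces to $\phi_G(\lambda,a\circ_\lambda b) = \phi_G(\phi_G(\lambda,a),b)$, which is exactly \eqref{phi-asso} for $(\mathscr{G},\circ)$.

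For item (i), the associative law $(x,a)\rtimes_\lambda\bigl((y,b)\rtimes_{\phi_G(\lambda,a)}(z,c)\bigr)=\bigl((x,a)\rtimes_\lambda(y,b)\bigr)\rtimes_\lambda(z,c)$ splits into two components. The $G$-component reduces to the associativity \eqref{product} of $\circ$. For the $H$-component, I would expand the left-hand side by using \eqref{action-1} to distribute $\Phi_\lambda(a)$ over $\cdot_{\phi_G(\lambda,a)}$ and then \eqref{action-2} to rewrite $\Phi_\lambda(a)\,\Phi_{\phi_G(\lambda,a)}(b)$ as $\Phi_\lambda(a\circ_\lambda b)$; the remaining identity in $H$ then follows from associativity of $\cdot_\lambda$.

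For item (ii), I would take $(e_H,e_G)$ as the unit. The first ingredient is $\Phi_\lambda(e_G)=\Id_H$, obtained by applying \eqref{action-2} with $a=b=e_G$ to deduce $\Phi_\lambda(e_G)^2=\Phi_\lambda(e_G)$ in $\Perm(H)$, which forces $\Phi_\lambda(e_G)=\Id_H$ by bijectivity. The second ingredient is that each $\Phi_\lambda(a)\colon(H,\cdot_{\phi_G(\lambda,a)})\to(H,\cdot_\lambda)$ is a group homomorphism by \eqref{action-1} and therefore preserves $e_H$. Both unit laws then follow directly, and $\phi_{H\times G}(\lambda,e_H,e_G)=\phi_G(\lambda,e_G)=\lambda$ by the unit axiom for $\mathscr{G}$. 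For item (iii), I would propose as the inverse of $(x,a)$ with respect to $\rtimes_\lambda$ the pair $\bigl(\Phi_{\phi_G(\lambda,a)}(\bar a^\lambda)(x^\lambda),\,\bar a^\lambda\bigr)$, where $x^\lambda$ denotes the inverse of $x$ in $(H,\cdot_\lambda)$, and then verify both $(x,a)\rtimes_\lambda(\cdot)=(e_H,e_G)$ and $(\cdot)\rtimes_{\phi_G(\lambda,a)}(x,a)=(e_H,e_G)$.

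The main obstacle I expect is the bilateral inverse check in (iii). The crucial algebraic input is the identity $\Phi_\lambda(a)\circ\Phi_{\phi_G(\lambda,a)}(\bar a^\lambda)=\Id_H$, which is derived by applying \eqref{action-2} to $a\circ_\lambda\bar a^\lambda=e_G$ and then invoking $\Phi_\lambda(e_G)=\Id_H$. Combined with the observation that $\Phi_{\phi_G(\lambda,a)}(\bar a^\lambda)$ is a group isomorphism $(H,\cdot_\lambda)\to(H,\cdot_{\phi_G(\lambda,a)})$ (after noting $\phi_G(\phi_G(\lambda,a),\bar a^\lambda)=\phi_G(\lambda,a\circ_\lambda\bar a^\lambda)=\lambda$), this ensures that the proposed inverse intertwines the inversion operations in the two $\lambda$-indexed group structures on $H$, so that both one-sided inverse equations are satisfied by the same element.
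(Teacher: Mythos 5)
Your proposal is correct and follows essentially the same route as the paper's proof: direct verification of the three axioms, with the same unit $(e_H,e_G)$ and the same inverse formula $\bigl(\Phi_{\phi_G(\lambda,a)}(\bar a^\lambda)(x^\lambda),\bar a^\lambda\bigr)$. You supply some details the paper leaves implicit (notably $\Phi_\lambda(e_G)=\Id_H$, $\Phi_\lambda(a)(e_H)=e_H$, and the computation $\phi_G(\phi_G(\lambda,a),\bar a^\lambda)=\lambda$ needed for the left-inverse check), but the argument is the same.
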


\begin{proof}
By \eqref{semi-direct-product} and \eqref{semi-direct-phi}, for all $x,y,z\in H,a,b,c\in G$, we have
\begin{eqnarray*}
&&(x,a) \rtimes_{\lambda} ( (y,b) \rtimes_{\phi_{H\times G}(\lambda,x,a)} (z,c) )\\
&=& (x,a) \rtimes_{\lambda} \Big( (y\cdot_{\phi_G(\lambda,a)}(\Phi_{\phi_G(\lambda,a)}(b)z)),b\circ_{\phi_G(\lambda,a)} c \Big)\\
&=& \Big(x \cdot_{\lambda} \Phi_{\lambda}(a)(y \cdot_{\phi_G(\lambda,a)}(\Phi_{\phi_G(\lambda,a)}(b)z)),a \circ_{\lambda}  (b \circ_{\phi_G(\lambda,a)} c) \Big)\\
&=& \Big(  (x \cdot_{\lambda} (\Phi_{\lambda}(a)y))\cdot_{\lambda} (\Phi_{\lambda}(a \circ_{\lambda} b)z), (a\circ_{\lambda} b)  \circ_{\lambda} c   \Big)\\
&=& (x\cdot_\lambda (\Phi_\lambda(a)y),a \circ_\lambda b) \rtimes_{\lambda} (z,c)\\
&=& ((x,a)\rtimes_\lambda (y,b)) \rtimes_{\lambda} (z,c),
\end{eqnarray*}
which implies that $\{\rtimes_\lambda\}_{\lambda\in \Lambda}$ satisfies Condition {\rm(i)} in  Definition \ref{dynamical-group}. By a direct calculation, we can check that $(e_H,e_G)$ is the unit element satisfying $\phi_{H\times G}(\lambda,e_H,e_G)=\phi_{G}(\lambda,e_G)=\lambda$ and $ (e_H,e_G) \rtimes_{\lambda} (x,a)=(x,a) \rtimes_{\lambda} (e_H,e_G)=(x,a),$ for all $x\in H,a\in G$ and $\lambda\in \Lambda$. Moreover,
for all $\lambda\in \Lambda,(x,a)\in H\times G$, there exists an element $\Big(\Phi_{\phi_{G}(\lambda,a)}(\bar{a}^{\lambda})(x^{\lambda}),\bar{a}^{\lambda} \Big)\in H\times G$ such that
$$(x,a) \rtimes_{\lambda} \Big(\Phi_{\phi_{G}(\lambda,a)}(\bar{a}^{\lambda})(x^{\lambda}),\bar{a}^{\lambda} \Big)
=\Big(\Phi_{\phi_{G}(\lambda,a)}(\bar{a}^{\lambda})(x^{\lambda}),\bar{a}^{\lambda} \Big)
\rtimes_{\phi_{H\times G}(\lambda,x,a)}(x,a)=(e_H,e_G).$$
Thus, $(H\times  G,\Lambda,\phi_{H\times  G},\{\rtimes_\lambda\}_{\lambda\in \Lambda}) $ is a dynamical group.
\end{proof}

\begin{defi}
Let  $\{\Phi_\lambda:G \to \Perm(H)\}_{\lambda\in\Lambda}$ be an action of a  dynamical group $(\mathscr{G},\circ)$ on a constant dynamical group $(\mathscr{H},\cdot)$. A map $\huaB:H \to G$ is called a {\bf relative Rota-Baxter operator} on $\mathscr{G}$   if the following equality holds for all $\lambda\in \Lambda,x,y\in H$,
\begin{eqnarray}\label{rRBO}
\huaB(x) \circ_{\lambda} \huaB(y)=\huaB( x\cdot_{\lambda} \Phi_{\lambda}(\huaB (x)) (y)).
\end{eqnarray}
\end{defi}

\begin{lem}
  Let $\huaB:H \to G$ be a relative Rota-Baxter operator on a dynamical group $(\mathscr{G},\circ)$ with respect to the action $\{\Phi_\lambda:G \to \Perm(H)\}_{\lambda\in\Lambda}$. Then  $\huaB(e_H)=e_G.$
\end{lem}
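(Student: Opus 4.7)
The plan is to specialize the defining identity \eqref{rRBO} at $y=e_H$, reduce the statement to the idempotency $\huaB(e_H)\circ_\lambda \huaB(e_H)=\huaB(e_H)$, and then invoke the left cancellation law, which is available in any dynamical group from the inverse axiom together with the mixed-parameter associativity \eqref{product}.

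First I would verify that every $\Phi_\lambda(a)\colon H\to H$ preserves the unit. Setting $y=e_H$ in \eqref{action-1} yields
$$\Phi_\lambda(a)(x)=\Phi_\lambda(a)(x)\cdot_\lambda \Phi_\lambda(a)(e_H)\qquad \forall\,x\in H,$$
so uniqueness of the unit in $(H,\cdot_\lambda)$ forces $\Phi_\lambda(a)(e_H)=e_H$. Substituting $y=e_H$ into \eqref{rRBO} then gives
$$\huaB(x)\circ_\lambda \huaB(e_H)=\huaB\bigl(x\cdot_\lambda \Phi_\lambda(\huaB(x))(e_H)\bigr)=\huaB(x\cdot_\lambda e_H)=\huaB(x).$$
Specializing to $x=e_H$ produces $\huaB(e_H)\circ_\lambda \huaB(e_H)=\huaB(e_H)$, which combined with $\huaB(e_H)=\huaB(e_H)\circ_\lambda e_G$ from \eqref{unit} gives
$$\huaB(e_H)\circ_\lambda \huaB(e_H)=\huaB(e_H)\circ_\lambda e_G.$$

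It remains to left-cancel $\huaB(e_H)$. For any $a,b,c\in G$ satisfying $a\circ_\lambda b=a\circ_\lambda c$, I would compose both sides on the left by $\bar a^\lambda$ using $\circ_{\phi_G(\lambda,a)}$. The key parameter match making \eqref{product} applicable is
$$\phi_G(\phi_G(\lambda,a),\bar a^\lambda)=\phi_G(\lambda,a\circ_\lambda \bar a^\lambda)=\phi_G(\lambda,e_G)=\lambda,$$
which comes from \eqref{phi-asso} and \eqref{unit}. With this in place, associativity yields
$$\bar a^\lambda\circ_{\phi_G(\lambda,a)}(a\circ_\lambda b)=(\bar a^\lambda\circ_{\phi_G(\lambda,a)}a)\circ_{\phi_G(\lambda,a)}b=e_G\circ_{\phi_G(\lambda,a)}b=b,$$
and similarly for $c$, giving $b=c$. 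Applying this cancellation with $a=\huaB(e_H)$, $b=\huaB(e_H)$, $c=e_G$ delivers the desired conclusion $\huaB(e_H)=e_G$.

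The only (minor) subtlety, and the step to watch most carefully, is the bookkeeping of dynamical parameters: because $\circ_\lambda$ is not itself a group law for each fixed $\lambda$, one must check at each associativity move that the parameter produced by \eqref{phi-asso} matches the parameter originally appearing under the outer $\circ$. Once that check is made, all the manipulations are immediate.
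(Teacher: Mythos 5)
Your proof is correct and follows the same route as the paper: specialize \eqref{rRBO} at the unit to get $\huaB(e_H)\circ_\lambda\huaB(e_H)=\huaB(e_H)$ and then cancel. You merely make explicit two details the paper leaves implicit, namely that $\Phi_\lambda(a)(e_H)=e_H$ (needed to simplify the right-hand side of \eqref{rRBO}) and the parameter bookkeeping behind left cancellation in a dynamical group; both checks are carried out correctly.
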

\begin{proof}
  By \eqref{rRBO}, taking $x,y$ to be the identity element $e_H$, we have $\huaB(e_H) \circ_{\lambda} \huaB(e_H)=\huaB(e_H)$ for all $\lambda\in \Lambda$, which implies that $\huaB(e_H)=e_G.$
\end{proof}

\begin{pro}
Let $\{\Phi_\lambda:G \to \Perm(H)\}_{\lambda\in\Lambda}$ be an action of a  dynamical group $(\mathscr{G},\circ)$ on a constant dynamical group $(\mathscr{H},\cdot)$. Then a map $\huaB:H \to G$ is a relative Rota-Baxter operator on $\mathscr{G}$ with respect to the action $\{\Phi_\lambda:G \to \Perm(H)\}_{\lambda\in\Lambda}$ if and only if the graph ${\rm Gr} (\huaB)=\{(x,\huaB(x))|x\in H\}$ is a dynamical subgroup of $\mathscr{H}\rtimes_{\Phi} \mathscr{G}$.
\end{pro}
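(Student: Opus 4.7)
The plan is to verify directly the three defining conditions of a dynamical subgroup for $\mathrm{Gr}(\huaB) \subset \mathscr{H} \rtimes_\Phi \mathscr{G}$, exploiting that closure under the semi-direct product law translates literally into the Rota-Baxter identity \eqref{rRBO}. First I would unpack the multiplication \eqref{semi-direct-product} on two elements $(x,\huaB(x))$ and $(y,\huaB(y))$ of the graph; the result is $(x \cdot_\lambda \Phi_\lambda(\huaB(x))y,\, \huaB(x)\circ_\lambda \huaB(y))$, and this lies in $\mathrm{Gr}(\huaB)$ precisely when $\huaB(x)\circ_\lambda \huaB(y) = \huaB(x\cdot_\lambda \Phi_\lambda(\huaB(x))y)$. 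This single observation simultaneously delivers the ``if'' direction (closure forces the RBO equation, and since $y$ ranges over all of $H$, this is the full identity \eqref{rRBO}) and the closure half of the ``only if'' direction.

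To finish the ``only if'' direction I would check the remaining subgroup axioms. Containment of the unit $(e_H,e_G)$ follows immediately from the preceding lemma $\huaB(e_H)=e_G$. For the inverse axiom I would first note that the action axiom \eqref{action-2}, applied to $e_G = a\circ_\lambda \bar a^\lambda$ together with the easy observation $\Phi_\lambda(e_G)=\mathrm{Id}_H$, yields the useful identity $(\Phi_\lambda(a))^{-1}=\Phi_{\phi_G(\lambda,a)}(\bar a^\lambda)$. Using this, the inverse of $(x,\huaB(x))$ in the semi-direct product is $\big(\Phi_{\phi_G(\lambda,\huaB(x))}(\overline{\huaB(x)}^\lambda)(x^\lambda),\, \overline{\huaB(x)}^\lambda\big)$. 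Denoting its $H$-component by $y$ and applying the RBO identity with this choice of $y$, the left-hand side becomes $\huaB(x)\circ_\lambda \huaB(y)$ while the right-hand side collapses to $\huaB(e_H)=e_G$; uniqueness of inverses in the dynamical group $(\mathscr{G},\circ)$ then forces $\huaB(y)=\overline{\huaB(x)}^\lambda$, so the inverse lies in $\mathrm{Gr}(\huaB)$.

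The main obstacle I anticipate is bookkeeping the dynamical parameters in the inverse step: the inverse element in the semi-direct product involves the shifted parameter $\phi_G(\lambda,\huaB(x))$, and one must verify carefully that the substitution matches the indices appearing in \eqref{rRBO} and in the formula for $(\Phi_\lambda(a))^{-1}$. Everything else is a routine translation between the subgroup axioms and the defining equation \eqref{rRBO}, with the lemma already supplying the unit-preservation property needed for the identity axiom.
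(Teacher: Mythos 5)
Your proposal is correct and follows essentially the same route as the paper: identify closure of $\mathrm{Gr}(\huaB)$ under $\rtimes_\lambda$ with the Rota--Baxter identity \eqref{rRBO}, then check the unit via $\huaB(e_H)=e_G$ and the inverse axiom using the explicit semi-direct-product inverse. The only difference is that you spell out the inverse verification (via $\Phi_\lambda(e_G)=\mathrm{Id}_H$, the identity $(\Phi_\lambda(a))^{-1}=\Phi_{\phi_G(\lambda,a)}(\bar a^\lambda)$, and cancellation in $(\mathscr{G},\circ)$), which the paper leaves as ``a direct calculation''.
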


\begin{proof}
For all $x,y\in H$, we have
$(x,\huaB(x))\rtimes_{\lambda}(y,\huaB(y))=(x\cdot_{\lambda}\Phi_{\lambda}(\huaB (x)) (y) , \huaB(x) \circ_{\lambda} \huaB(y))  $, which implies that ${\rm Gr} (\huaB)=\{(x,\huaB(x))|x\in H\}$ is a dynamical subgroup of $\mathscr{H}\rtimes_{\Phi} \mathscr{G}$ if and only if
$$\huaB(x) \circ_{\lambda} \huaB(y)=\huaB( x\cdot_{\lambda} \Phi_{\lambda}(\huaB (x)) (y)).$$
This means that $\huaB:H \to G$ is a relative Rota-Baxter operator on $\mathscr{G}$ with respect to the action $\{\Phi_\lambda:G \to \Perm(H)\}_{\lambda\in\Lambda}$. Actually, $(e_H,B(e_H))=(e_H,e_G)$ is the unit of the graph ${\rm Gr} (\huaB)=\{(x,\huaB(x))|x\in H\}$ and for any element $(x,B(x))\in {\rm Gr} (\huaB)$, its inverse $\big(\Phi_{\phi_{G}(\lambda,\huaB(x))}(\overline{\huaB(x)}^{\lambda})(x^{\lambda}),\overline{\huaB(x)}^{\lambda})$
belongs to the graph ${\rm Gr} (\huaB)$ by a direct calculation.
\end{proof}

\begin{pro}\label{pro-descendant}
Let $\huaB:H \to G$ be a relative Rota-Baxter operator on a  dynamical group $(\mathscr{G},\circ)$ with respect to the action $\{\Phi_\lambda:G \to \Perm(H)\}_{\lambda\in\Lambda}$. Define the multiplication $\circ^{\huaB}_{\lambda}:H \times H \to H$ by
\begin{eqnarray}\label{descendant-product}
x \circ^{\huaB}_{\lambda} y:=x\cdot_{\lambda} \Phi_{\lambda}(\huaB (x)) (y) \quad \forall \lambda\in \Lambda, x\in H,
\end{eqnarray}
and the map $\phi_{\huaB}:\Lambda\times H\to H$ by
\begin{eqnarray}\label{descendant-map}
\phi_{\huaB}(\lambda,x):=\phi_G(\lambda,\huaB (x)).
\end{eqnarray}
Then $(H,\Lambda,\phi_{\huaB},\{\circ^{\huaB}_\lambda\}_{\lambda\in \Lambda})$ is a dynamical group, which is called the {\bf descendant dynamical group} of $\huaB$, and denoted by $\mathscr{H}_{\huaB}$.

Moreover, the map $\widetilde{\huaB}:\Lambda \times H \to G$ defined by
$$\widetilde{\huaB}(\lambda,x)=\huaB(x),\quad \forall \lambda\in \Lambda,x\in H,$$
is a homomorphism from the dynamical group $(H,\Lambda,\phi_{\huaB},\{\circ^{\huaB}_\lambda\}_{\lambda\in \Lambda})$ to $(\mathscr{G},\circ)$.
\end{pro}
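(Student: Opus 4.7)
The plan is to leverage the preceding proposition, which shows that $\mathrm{Gr}(\huaB) = \{(x,\huaB(x)) \mid x \in H\}$ is a dynamical subgroup of the semidirect product $\mathscr{H} \rtimes_\Phi \mathscr{G}$, and hence itself a dynamical group. The obvious bijection $\pi\colon \mathrm{Gr}(\huaB) \to H$, $(x,\huaB(x)) \mapsto x$, will transport this structure to $H$; it will then suffice to verify that the transported product and structure map coincide with $\circ^{\huaB}_\lambda$ and $\phi_{\huaB}$ defined in \eqref{descendant-product}--\eqref{descendant-map}.

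Concretely, I would compute the product on $\mathrm{Gr}(\huaB)$ from the definition \eqref{semi-direct-product} of $\rtimes_\lambda$:
\begin{equation*}
(x, \huaB(x)) \rtimes_\lambda (y, \huaB(y)) = \bigl(x \cdot_\lambda \Phi_\lambda(\huaB(x))(y),\; \huaB(x) \circ_\lambda \huaB(y)\bigr).
\end{equation*}
Applying the Rota-Baxter identity \eqref{rRBO} to the second coordinate, the right-hand side equals $\bigl(x \circ^{\huaB}_\lambda y,\; \huaB(x \circ^{\huaB}_\lambda y)\bigr)$, so $\pi$ intertwines the products. Similarly, the restriction of $\phi_{H \times G}$ from \eqref{semi-direct-phi} to $\mathrm{Gr}(\huaB)$ yields $\phi_{H\times G}(\lambda, x, \huaB(x)) = \phi_G(\lambda, \huaB(x)) = \phi_{\huaB}(\lambda, x)$, so $\pi$ also intertwines the structure maps. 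Associativity, the unit $e_H$ (inherited from $(e_H, e_G)$, which belongs to $\mathrm{Gr}(\huaB)$ since $\huaB(e_H) = e_G$), and the existence of inverses then follow immediately from the corresponding properties of the dynamical subgroup $\mathrm{Gr}(\huaB)$.

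For the homomorphism claim, I would first check that $\widetilde{\huaB}$ is a morphism of dynamical sets from $\mathscr{H}_{\huaB}$ to $\mathscr{G}$: indeed $\phi_G(\lambda, \widetilde{\huaB}(\lambda, x)) = \phi_G(\lambda, \huaB(x)) = \phi_{\huaB}(\lambda, x)$, matching \eqref{morphism-d-set}. The multiplicativity condition \eqref{homo-d-group} then reduces, using the fact that $\widetilde{\huaB}_\mu(y) = \huaB(y)$ for every $\mu$, to the identity $\huaB(x \circ^{\huaB}_\lambda y) = \huaB(x) \circ_\lambda \huaB(y)$, which is precisely \eqref{rRBO}. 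The only step that would look delicate if approached head-on is the verification of inverses, since the left and right inverse identities in Definition \ref{dynamical-group} involve different dynamical parameters and so a direct computation would require careful bookkeeping with $\phi_{\huaB}$ and with $\Phi_\lambda(\huaB(x))^{-1}$; this is exactly what the transport via $\pi$ sidesteps.
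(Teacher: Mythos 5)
Your proposal is correct, but it takes a genuinely different route from the paper. The paper proves the proposition by direct verification: associativity of $\circ^{\huaB}_\lambda$ is established by an explicit chain of equalities using the action axioms \eqref{action-1}--\eqref{action-2} together with \eqref{descendant-product}--\eqref{descendant-map}, the unit and the inverse $\Phi_{\phi_{\huaB}(\lambda,x)}(\overline{\huaB(x)}^{\lambda})(x^{\lambda})$ are checked by hand, and the homomorphism property of $\widetilde{\huaB}$ is verified directly from \eqref{rRBO}. You instead transport the dynamical group structure from the graph ${\rm Gr}(\huaB)$, already known to be a dynamical subgroup of $\mathscr{H}\rtimes_{\Phi}\mathscr{G}$ by the preceding proposition, along the bijection $(x,\huaB(x))\mapsto x$; the only computations needed are that the restricted product, via \eqref{rRBO}, projects to $\circ^{\huaB}_\lambda$ and that the restricted structure map projects to $\phi_{\huaB}$, after which associativity, unit and inverses come for free. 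This is a valid and arguably cleaner argument: it makes the conceptual point that the descendant group is just the graph in disguise, and it sidesteps the bookkeeping with mismatched dynamical parameters in the inverse axiom, exactly as you note. What the paper's direct computation buys in exchange is self-containedness and an explicit display of where each action axiom enters (in particular, the associativity computation shows precisely how \eqref{action-1} and \eqref{action-2} combine), plus an explicit formula for the inverse that is reused later. Your treatment of the homomorphism claim coincides with the paper's. One small point worth making explicit in a final write-up: a dynamical subgroup, with the restricted product and structure map, is itself a dynamical group (associativity is inherited because the structure map of the subgroup agrees with that of the ambient group on elements of the subgroup); this is implicit in the paper's definition of dynamical subgroup and is what licenses the transport.
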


\begin{proof}
Let $\lambda\in \Lambda, x,y,z\in H$. By \eqref{action-1}-\eqref{action-2} and  \eqref{descendant-product}-\eqref{descendant-map}, we have
\begin{eqnarray*}
(x \circ^{\huaB}_\lambda y)\circ^{\huaB}_\lambda z
&=& (x\cdot_{\lambda} \Phi_{\lambda}(\huaB (x)) (y)) \circ^{\huaB}_\lambda z\\
&=& (x\cdot_{\lambda} \Phi_{\lambda}(\huaB (x)) (y)) \cdot_{\lambda} \Phi_{\lambda}(\huaB (x\circ^{\huaB}_{\lambda} y)) (z)\\
&=& (x\cdot_{\lambda} \Phi_{\lambda}(\huaB (x)) (y)) \cdot_{\lambda} \Phi_{\lambda}((\huaB x)\circ_\lambda (\huaB y)) (z)\\
&=& x\cdot_{\lambda} \Phi_{\lambda}(\huaB (x)) (y) \cdot_{\lambda} \Phi_{\lambda}(\huaB (x)) \Phi_{\phi_G(\lambda,\huaB x)}(\huaB (y)) (z)\\
&=& x\cdot_{\lambda} \Phi_{\lambda}(\huaB (x)) (y \cdot_{\phi_G(\lambda,\huaB x)}   \Phi_{\phi_G(\lambda,\huaB x)}(\huaB y) z) \\
&=& x\circ^{\huaB}_{\lambda} (y \cdot_{\phi_G(\lambda,\huaB (x))} \Phi_{\phi_G(\lambda,\huaB(x))}(\huaB (y)) (z) )\\
&=& x \circ^{\huaB}_\lambda (y\circ^{\huaB}_{\phi_G(\lambda,\huaB (x))} z)\\
&=& x \circ^{\huaB}_\lambda (y\circ^{\huaB}_{\phi_{\huaB}(\lambda,x)} z).
\end{eqnarray*}
Moreover, by a direct calculation, $e_H$ is the unit, and $\Phi_{\phi_\huaB(\lambda, x)} (\overline{\huaB (x)}^\lambda)(x^\lambda)$ is the inverse of $x$. Thus, $(H,\Lambda,\phi_{\huaB},\{\circ^{\huaB}_\lambda\}_{\lambda\in \Lambda})$ is a dynamical group.

Obviously we have
\begin{eqnarray*}
 \phi_{G} (\lambda, \widetilde{\huaB}_{\lambda}(x))&=&\phi_{G} (\lambda, \huaB(x))=\phi_{ {\huaB}}(\lambda,x);\\
\widetilde{\huaB}_{\lambda} (x \circ^{\huaB}_\lambda y)&=&\huaB(x \circ^{\huaB}_\lambda y)= \huaB (x \cdot_{\lambda} \Phi_{\lambda}(\huaB(x))(y))=
\huaB(x) \circ_{\lambda} \huaB(y)=
\widetilde{\huaB}_{\lambda}(x) \circ_{\lambda} \widetilde{\huaB}_{\phi_{{\huaB}}(\lambda,x)}(y),
\end{eqnarray*}
for all $\lambda\in\Lambda$ and $x,y\in H$,  which implies that $\widetilde{\huaB}$ is a homomorphism of dynamical groups from $(H,\Lambda,\phi_{\huaB},\{\circ^{\huaB}_\lambda\}_{\lambda\in \Lambda})$ to $(\GGG,\{\circ_\lambda\}_{\lambda\in \Lambda})$.
\end{proof}

Let $(\mathscr{G},\circ)$  be a dynamical group and
$(\mathscr{H},\cdot)$ be a constant dynamical group. For a map $\huaB:H \to G$, the map
\begin{eqnarray}
\xi^{\lambda}_{\huaB} :H \times G \to H \times G,\quad  \xi^{\lambda}_{\huaB} (x,a)=(x, \huaB(x) \circ_{\lambda} a), \quad \forall x\in H,a\in G,
\end{eqnarray}
is invertible. Indeed, the inverse map $(\xi^{\lambda}_{\huaB})^{-1}: H \times G \to H \times G$ is given by
\begin{eqnarray}
(\xi^{\lambda}_{\huaB})^{-1} (x,a)=(x, \overline{\huaB(x)}^{\lambda} \circ_{\phi_{\huaB}(\lambda,x)} a), \quad \forall x\in H,a\in G.
\end{eqnarray}

Pulling back the dynamical group structure on $\mathscr{H} \rtimes_{\Phi} \mathscr{G}$, we obtain a dynamical group $(H \times G,\Lambda,\widetilde{\phi_{\huaB}},\{\ast^{\huaB}_\lambda\}_{\lambda\in\Lambda})$, where the product $\ast^{\huaB}_\lambda$ is given by
\begin{eqnarray*}\label{d-group-fac}
&&(x,a)\ast^{\huaB}_{\lambda} (y,b)\\
&=& (\xi^{\lambda}_{\huaB})^{-1} (\xi^{\lambda}_{\huaB}(x,a) \rtimes_{\lambda} \xi^{\lambda}_{\huaB}(y,b) )\\
&=&  (\xi^{\lambda}_{\huaB})^{-1}((x, \huaB(x) \circ_{\lambda} a)\rtimes_{\lambda} (y, \huaB(y) \circ_{\lambda} b))\\
&=& (\xi^{\lambda}_{\huaB})^{-1} ( x \cdot_{\lambda}  \Phi_{\lambda} (\huaB (x)\circ_{\lambda} a)(y),(\huaB(x)\circ_{\lambda} a)         \circ_{\lambda}  (\huaB(y)\circ_{\lambda} b) )\\
&=&\Big( x \cdot_{\lambda} \Phi_{\lambda} (\huaB (x)\circ_{\lambda} a)(y), \overline{\huaB(x \cdot_{\lambda}  \Phi_{\lambda} (\huaB (x)\circ_{\lambda} a)(y))}^{\lambda} \circ_{\phi_{\huaB}(\lambda, x \cdot_{\lambda} \Phi_{\lambda} (\huaB (x)\circ_{\lambda} a)(y))} ((\huaB(x)\circ_{\lambda} a)         \circ_{\lambda}  (\huaB(y)\circ_{\lambda} b)  )    \Big),
\end{eqnarray*}
 the structure map $\widetilde{\phi_{\huaB}}:\Lambda\times H \times G \to \Lambda$ is given by
\begin{eqnarray*}
\widetilde{\phi_{\huaB}}(\lambda,x,a)= \phi_{G}(\lambda,\huaB (x)\circ_{\lambda} a),
\end{eqnarray*}
the unit is given by
$$  e=\xi_{\huaB}^{-1}(e_H,e_G)=(e_H,\overline{\huaB(e_H)}^{\lambda}), $$
and the inverse of $(x,a)$ is given by
\begin{equation}\label{d-group-inverse}
(x,a)^{\lambda}=\Big( \Phi_{\lambda}(\huaB (x)\circ_{\lambda} a)^{-1}(x^\lambda),  \overline{\huaB(\Phi_{\lambda}(\huaB (x)\circ_{\lambda} a)^{-1}(x^\lambda)) }^{\lambda} \circ_{\phi_{\huaB}(\lambda,\Phi_{\lambda}(\huaB (x)\circ_{\lambda} a)^{-1}(x^\lambda))}       \overline{\huaB (x)\circ_{\lambda} a}^{\lambda}   \Big).
\end{equation}

\begin{thm}\label{rRBO-fac}
With above notations, $(H \times G,\Lambda,\widetilde{\phi_{\huaB}},\{\ast^{\huaB}_\lambda\}_{\lambda\in\Lambda})$ is a dynamical group factorization into dynamical subgroups $H \times \{e_G\}$ and $\{e_H\} \times G$ if and only if $\huaB:H \to G$ is a relative Rota-Baxter operator on the dynamical group $(\mathscr{G},\circ)$ with respect to the action $\{\Phi_\lambda:G \to \Perm(H)\}_{\lambda\in\Lambda}$.
\end{thm}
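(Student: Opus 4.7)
The key idea is that the family of bijections $\xi^{\lambda}_{\huaB}$ identifies the candidate subgroup $H \times \{e_G\}$ inside the pullback with the graph ${\rm Gr}(\huaB) = \{(x, \huaB(x)) \mid x \in H\}$ inside the semi-direct product $\mathscr{H} \rtimes_{\Phi} \mathscr{G}$, and identifies $\{e_H\} \times G$ with $\{(e_H, \huaB(e_H) \circ_\lambda a) \mid a \in G\}$. The second set coincides with $\{e_H\} \times G \subset \mathscr{H} \rtimes_\Phi \mathscr{G}$ precisely when $\huaB(e_H) = e_G$. I plan to transport the factorization condition through $\xi^\lambda_\huaB$ to statements about these two subsets of the semi-direct product, where the relative Rota-Baxter identity surfaces naturally.

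For the forward direction, assume $\huaB$ is a relative Rota-Baxter operator. Then $\huaB(e_H) = e_G$, so the pullback unit $\xi_{\huaB}^{-1}(e_H,e_G) = (e_H, \overline{\huaB(e_H)}^\lambda)$ equals $(e_H, e_G)$ independently of $\lambda$, making the dynamical group data well-defined. Since ${\rm Gr}(\huaB)$ is a dynamical subgroup of $\mathscr{H} \rtimes_\Phi \mathscr{G}$ (already established from the RBO identity), its preimage $H \times \{e_G\}$ is a dynamical subgroup of the pullback; the subset $\{e_H\} \times G$ is transparently a dynamical subgroup. Unique factorization is then reduced to the single computation
\begin{equation*}
(x, e_G) \ast^{\huaB}_{\lambda} (e_H, a) \;=\; \xi_{\huaB}^{-1}\bigl((x, \huaB(x)) \rtimes_\lambda (e_H, a)\bigr) \;=\; \xi_{\huaB}^{-1}\bigl(x, \huaB(x) \circ_\lambda a\bigr) \;=\; (x, a),
\end{equation*}
where the last step uses associativity \eqref{product} and $\overline{\huaB(x)}^\lambda \circ_{\phi_G(\lambda,\huaB(x))} \huaB(x) = e_G$. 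Because the assignment $(x,a) \mapsto (x,e_G) \ast^{\huaB}_\lambda (e_H,a)$ is the identity, uniqueness is automatic.

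For the backward direction, assume the pullback is a dynamical group factorizing into $H \times \{e_G\}$ and $\{e_H\} \times G$ as dynamical subgroups. Containment of the unit in both subgroups forces $\huaB(e_H) = e_G$, as observed above. Closure of $H \times \{e_G\}$ under $\ast^{\huaB}_\lambda$ means that for all $x, y \in H$ the second coordinate of
\begin{equation*}
(x, e_G) \ast^{\huaB}_\lambda (y, e_G) \;=\; \xi_{\huaB}^{-1}\Bigl(x \cdot_\lambda \Phi_\lambda(\huaB(x))(y),\; \huaB(x) \circ_\lambda \huaB(y)\Bigr)
\end{equation*}
must equal $e_G$. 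Setting $z := x \cdot_\lambda \Phi_\lambda(\huaB(x))(y)$, this second coordinate is $\overline{\huaB(z)}^\lambda \circ_{\phi_G(\lambda,\huaB(z))} \bigl(\huaB(x) \circ_\lambda \huaB(y)\bigr)$; applying cancellation in $\mathscr{G}$ yields $\huaB(z) = \huaB(x) \circ_\lambda \huaB(y)$, which is precisely the relative Rota-Baxter identity \eqref{rRBO}.

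\textbf{Main obstacle.} The computations are largely bookkeeping, but the delicate point is tracking the various occurrences of $\phi_G$ in the generalized associative law $a \circ_\lambda (b \circ_{\phi_G(\lambda,a)} c) = (a \circ_\lambda b) \circ_\lambda c$ so that each inverse is taken at the correct dynamical parameter. In particular, showing that the pulled-back unit is a \emph{single} element (not a family indexed by $\lambda$) is the crucial compatibility that, in the backward direction, must be extracted before invoking the cancellation argument; in the forward direction this is handled by the identity $\huaB(e_H) = e_G$.
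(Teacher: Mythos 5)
Your proposal is correct and follows essentially the same route as the paper: both directions hinge on the computation of $(x,e_G)\ast^{\huaB}_{\lambda}(y,e_G)$, whose second coordinate vanishes precisely when the identity \eqref{rRBO} holds, together with the factorization identity $(x,e_G)\ast^{\huaB}_{\lambda}(e_H,a)=(x,a)$. The only (harmless) difference is that you establish closure of $H\times\{e_G\}$ in the forward direction by transporting the graph ${\rm Gr}(\huaB)$ through $\xi^{\lambda}_{\huaB}$ and citing the earlier graph characterization, whereas the paper verifies closure under $\ast^{\huaB}_{\lambda}$ and under inverses by direct computation.
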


\begin{proof}
Since $\huaB(e_H)=e_G$, for all $(e_H,a),(e_H,b)\in \{e_H\}\times G$, we have
$$ (e_H,a)\ast^{\huaB}_{\lambda} (e_H,b)= (e_H,a\circ_\lambda b). $$
For all $(e_H,a)\in \{e_H\}\times G$, we have
$$  (e_H,a)^{\lambda}=(e_H,\overline{a}^{\lambda}).     $$
Therefore, we deduce that $\{e_H\} \times G$ is a dynamical subgroup of $(H \times G,\Lambda,\widetilde{\phi_{\huaB}},\{\ast^{\huaB}_\lambda\}_{\lambda\in\Lambda})$.

For all $x,y\in H$, we have
\begin{equation}\label{H-eG}
(x,e_G)\ast^{\huaB}_{\lambda} (y,e_G)
=\Big(x\cdot_{\lambda} \Phi_{\lambda}(\huaB (x))(y), \overline{\huaB(x\cdot_{\lambda} \Phi_{\lambda}(\huaB (x))(y))}^{\lambda} \circ_{\phi_{\huaB}(\lambda,x\cdot_{\lambda} \Phi_{\lambda}(\huaB (x))(y))} ((\huaB x)\circ_{\lambda}(\huaB y))      \Big).
\end{equation}
Thus, if $H \times \{e_G\}$ is a dynamical subgroup of $(H \times G,\Lambda,\widetilde{\phi_{\huaB}},\{\ast^{\huaB}_\lambda\}_{\lambda\in\Lambda})$, then we get
$$  \overline{\huaB(x\cdot_{\lambda} \Phi_{\lambda}(\huaB (x))(y))}^{\lambda} \circ_{\phi_{\huaB}(\lambda,x\cdot_{\lambda} \Phi_{\lambda}(\huaB (x))(y))} ((\huaB x)\circ_{\lambda}(\huaB y))=e_G, $$
which implies that $\huaB:H \to G$ is a relative Rota-Baxter operator on the dynamical group $(\mathscr{G},\circ)$ with respect to the action $\{\Phi_\lambda:G \to \Perm(H)\}_{\lambda\in\Lambda}$.

Conversely, let $\huaB:H \to G$ be a relative Rota-Baxter operator. We have $\huaB(e_H)=e_G$.
By \eqref{H-eG}, we obtain $(x,e_G)\ast^{\huaB}_{\lambda} (y,e_G)\in H\times \{e_G\}$. For $(h,e_G)\in H\times e_G$, by \eqref{d-group-inverse}, we have
\begin{eqnarray*}
(x,e_G)^{\lambda}&=&\Big( \Phi_{\lambda}(\huaB (x))^{-1}(x^\lambda),  \overline{\huaB(\Phi_{\lambda}(\huaB (x))^{-1}(x^\lambda)) }^{\lambda} \circ_{\phi_{\huaB}(\lambda,\Phi_{\lambda}(\huaB (x))^{-1}(x^\lambda))}  \overline{\huaB ( x)}^{\lambda}   \Big)\\
&=& \Big( \Phi_{\lambda}(\huaB (x))^{-1}(x^\lambda), e_G  \Big).
\end{eqnarray*}
The last equality follows from
$$ \huaB (x)\circ_{\lambda} \huaB(\Phi_{\lambda}(\huaB (x))^{-1}(x^\lambda))=\huaB(x\cdot_{\lambda} \Phi_{\lambda}(\huaB (x))\Phi_{\lambda}(\huaB (x))^{-1} (x^\lambda))=\huaB(x \cdot_{\lambda} x^\lambda)=\huaB(e_H)=e_G. $$
Thus,  $(x,e_G)^{\lambda} \in H \times \{e_G\}$ and $H \times \{e_G\}$ is a dynamical subgroup of $(H \times G,\Lambda,\widetilde{\phi_{\huaB}},\{\ast^{\huaB}_\lambda\}_{\lambda\in\Lambda})$.

For all $\lambda\in \Lambda,x\in H, a \in G$, we have
$$ (x,e_G)\ast^{\huaB}_{\lambda} (e_H,a)=(x,a). $$
It is obvious that $e=(e_H, e_G)$ is the unit of the dynamical group $(H \times G,\Lambda,\widetilde{\phi_{\huaB}},\{\ast^{\huaB}_\lambda\}_{\lambda\in\Lambda})$ and $H \times \{e_G\} \cap \{e_H\} \times G=\{(e_H,e_G)\} .$ Therefore, $(H \times G,\Lambda,\widetilde{\phi_{\huaB}},\{\ast^{\huaB}_\lambda\}_{\lambda\in\Lambda})$ is a dynamical group factorization into dynamical subgroups $H \times \{e_G\}$ and $\{e_H\} \times G$.
\end{proof}

Relative Rota-Baxter operators on dynamical groups naturally give rise to matched pairs of dynamical groups.

\begin{pro}\label{rRBO-mp}
Let $\huaB:H \to G$ be a relative Rota-Baxter operator on a  dynamical group $(\mathscr{G},\circ)$ with respect to the action $\{\Phi_\lambda:G \to \Perm(H)\}_{\lambda\in\Lambda}$. Define $\rightharpoonup:\Lambda \times G \times H \to H$ and
$\leftharpoonup:\Lambda \times G \times H \to G$ respectively by
\begin{eqnarray}
a\overset{\lambda}{\rightharpoonup} x&=&\Phi_{\lambda}(a)(x);\\
a\overset{\phi_G(\lambda,a)}{\leftharpoonup\joinrel\relbar\joinrel\relbar\joinrel\relbar} x
&=& \overline{\huaB(\Phi_{\lambda}(a)(x)) }^{\lambda} \circ_{\phi_{G}(\lambda,\huaB(\Phi_{\lambda}(a)(x))} (a\circ_{\lambda}   \huaB (x)),
\end{eqnarray}
for all $\lambda\in \Lambda$ and $a\in G,x\in H$. Then $(\mathscr{G},\mathscr{H}_{\huaB},\sigma)$ is a matched pair of dynamical groups, where $ \sigma(\lambda,a,x)=(a\overset{\lambda}{\rightharpoonup} x,a\overset{\phi_G(\lambda,a)}{\leftharpoonup\joinrel\relbar\joinrel\relbar\joinrel\relbar}x). $
\end{pro}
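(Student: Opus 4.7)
My plan is to avoid verifying the seven axioms \eqref{mp-d-groups-1}--\eqref{mp-d-groups-7} one by one, and instead exploit the factorization from Theorem~\ref{rRBO-fac} together with the bidirectional Proposition~\ref{mp-double}. The philosophy is that a matched pair of dynamical groups is equivalent datum to a dynamical group double, and Theorem~\ref{rRBO-fac} already produces a dynamical group $(H\times G,\widetilde{\phi_{\huaB}},\{\ast^{\huaB}_\lambda\})$ that factors through $H\times\{e_G\}$ and $\{e_H\}\times G$; one has only to identify these two dynamical subgroups with $\mathscr{H}_{\huaB}$ and $\mathscr{G}$ respectively, and then match the cross multiplication with the candidate $\sigma$.

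First I would verify that the inclusion $x\mapsto(x,e_G)$ identifies $\mathscr{H}_{\huaB}$ with the dynamical subgroup $H\times\{e_G\}$: by \eqref{H-eG} combined with the defining relation \eqref{rRBO} of a relative Rota-Baxter operator, the residual $G$-factor in $(x,e_G)\ast^{\huaB}_\lambda(y,e_G)$ collapses to $e_G$, leaving $x\circ^{\huaB}_\lambda y$ on the $H$-factor; the structure maps agree because $\widetilde{\phi_{\huaB}}(\lambda,x,e_G)=\phi_{\huaB}(\lambda,x)$. Similarly, $a\mapsto(e_H,a)$ identifies $\mathscr{G}$ with $\{e_H\}\times G$, using $\huaB(e_H)=e_G$.

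Next comes the key cross-term computation. Using the definition of $\ast^{\huaB}_\lambda$ in terms of $\xi^\lambda_{\huaB}$ and the semidirect product $\rtimes_\lambda$,
\[
(e_H,a)\ast^{\huaB}_\lambda (x,e_G)
=(\xi^\lambda_{\huaB})^{-1}\bigl(\Phi_\lambda(a)(x),\, a\circ_\lambda \huaB(x)\bigr),
\]
which, after applying the formula for $(\xi^\lambda_{\huaB})^{-1}$ and the definition $\phi_{\huaB}(\lambda,-)=\phi_G(\lambda,\huaB(-))$, equals precisely $(a\overset{\lambda}{\rightharpoonup} x,\, a\overset{\phi_G(\lambda,a)}{\leftharpoonup\joinrel\relbar\joinrel\relbar\joinrel\relbar} x)$ as defined in the statement. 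Since every element decomposes uniquely as $(x,a)=(x,e_G)\ast^{\huaB}_\lambda(e_H,a)$, this cross-term identity propagates to show that the full multiplication $\ast^{\huaB}_\lambda$ reproduces exactly the double multiplication \eqref{double-d-groups} built from the candidate $\sigma$. Invoking the converse direction of Proposition~\ref{mp-double} then yields that $(\mathscr{G},\mathscr{H}_{\huaB},\sigma)$ is a matched pair; the morphism-of-dynamical-sets condition \eqref{mp-d-groups-7} is automatic from the well-definedness of $\widetilde{\phi_{\huaB}}$ as the structure map of the double.

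The main obstacle is the careful bookkeeping of dynamical parameters throughout the cross-term computation: one must verify that the parameter $\phi_{\huaB}(\lambda,\Phi_\lambda(a)(x))$ produced by applying $(\xi^\lambda_{\huaB})^{-1}$ coincides with $\phi_G(\lambda,\huaB(\Phi_\lambda(a)(x)))$, and that subsequent shifts in parameters propagate correctly when extending from the cross-term to the general product. If one prefers the direct axiomatic route, the easy conditions \eqref{mp-d-groups-1}, \eqref{mp-d-groups-2}, and \eqref{mp-d-groups-4} follow immediately from $\Phi_\lambda(e_G)=\Id$, \eqref{action-2}, and $\huaB(e_H)=e_G$ respectively, while the hardest would be the mixed cocycle \eqref{mp-d-groups-6}, requiring repeated expansion via \eqref{rRBO} and the dynamical associativity \eqref{product}.
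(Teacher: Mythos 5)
Your proposal is correct and follows essentially the same route as the paper: the paper's own proof is a one-line appeal to Theorem~\ref{rRBO-fac}, deducing the matched pair from the factorization of $(H\times G,\widetilde{\phi_{\huaB}},\{\ast^{\huaB}_\lambda\})$ into the dynamical subgroups $H\times\{e_G\}$ and $\{e_H\}\times G$ via Proposition~\ref{mp-double}. Your write-up simply makes explicit the details the paper leaves implicit (the identifications of the two subgroups with $\mathscr{H}_{\huaB}$ and $\mathscr{G}$, and the cross-term computation recovering $\sigma$), and these details check out.
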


\begin{proof}
It follows from the fact that $(H \times G,\Lambda,\widetilde{\phi_{\huaB}},\{\ast^{\huaB}_\lambda\}_{\lambda\in\Lambda})$ is a dynamical group factorization into dynamical subgroups $H \times \{e_G\}$ and $\{e_H\} \times G$ by Theorem \ref{rRBO-fac}.
\end{proof}
\vspace{-.3cm}
\section{Dynamical post-groups}\label{sec:d-post-group}

In this section, we introduce the notion of dynamical post-groups. A relative Rota-Baxter operator on a dynamical group naturally induces a dynamical post-group structure. Conversely, a dynamical post-group gives rise to a relative Rota-Baxter operator on its sub-adjacent dynamical group. Finally, we show that the category of dynamical post-groups is isomorphic to the category of braided dynamical groups.

In this section, as before, $\mathscr{G}=(G,\Lambda,\phi_G)$ is a dynamical set.

\subsection{Dynamical post-groups and relative Rota-Baxter operators}
\begin{defi}\label{d-skew-brace}

A {\bf weak dynamical post-group} structure on a dynamical set $\mathscr{G}$ consists of two families of maps $\{\cdot_\lambda\}_{\lambda\in \Lambda}$ and $\{\vartriangleright_\lambda\}_{\lambda\in \Lambda}$ from $G\times G$ to $G$ such that
\begin{itemize}
\item[{\rm(i)}] $(G,\Lambda,\{\cdot_\lambda\}_{\lambda\in \Lambda})$ is a constant dynamical group;

\item[{\rm(ii)}] $\{\cdot_\lambda\}_{\lambda\in \Lambda}$ and $\{\vartriangleright_\lambda\}_{\lambda\in \Lambda}$ satisfy the following compatibility conditions:
  \begin{eqnarray}
  \label{d-post-group1} a\vartriangleright_\lambda (b \cdot_{\phi_{G}(\lambda,a)} c)&=&(a \vartriangleright_\lambda b)\cdot_\lambda (a \vartriangleright_\lambda c), \\
  \label{d-post-group2}(a\cdot_{\lambda } (a\vartriangleright_\lambda b))\vartriangleright_\lambda c&=&a \vartriangleright_\lambda (b \vartriangleright_{\phi_G(\lambda,a)} c), \quad \forall \lambda\in \Lambda, a,b,c\in G.
  \end{eqnarray}
\end{itemize}
We denote a weak dynamical post-group by $(\mathscr{G},\{\cdot_\lambda\}_{\lambda\in \Lambda},\{\vartriangleright_\lambda\}_{\lambda\in \Lambda})$.

A {\bf dynamical post-group} is a weak dynamical post-group in which $L^{\rhd_\lambda}_a:G\to G$ is invertible for  all $\lambda\in \Lambda$ and $a\in G$, where  $L^{\rhd_\lambda}_a:G\to G$ is defined by
$$
L^{\rhd_\lambda}_a (b)=a \rhd_\lambda b.
$$

  A {\bf dynamical pre-group} is a  dynamical post-group $(\GGG,  \{\cdot_\lambda\}_{\lambda\in \Lambda},\{\vartriangleright_\lambda\}_{\lambda\in \Lambda})$ in which the groups $(G,\cdot_\lambda)$ are abelian for all $\lambda\in \Lambda.$
\end{defi}

\begin{ex}\label{d-post-group-R}
{\rm
Let $(\mathbb{R},+,\cdot)$ be the field of real numbers. Then $(\mathbb{R}, \Lambda=\mathbb{R},\phi,\{\cdot_\lambda\}_{\lambda\in \mathbb{R}},\{\vartriangleright_\lambda\}_{\lambda\in \mathbb{R}})$ is
a dynamical post-group, where $\cdot_\lambda:=+$ for all $\lambda\in \mathbb{R}$ and $\phi:\mathbb{R} \times \mathbb{R} \to \mathbb{R}$, $\vartriangleright_\lambda:\mathbb{R} \times \mathbb{R} \to \mathbb{R}$ are given by
$$ a \vartriangleright_\lambda b:= (\lambda a+1)^2 \cdot b, \quad \phi(\lambda,a):= \lambda\cdot (\lambda a+1),\quad \forall \lambda,a,b\in \mathbb{R}. $$
}
\end{ex}

\begin{ex}\label{ex-d-post-group}
{\rm
Let $G=(\mathbb{Z}_3,+)$ be a cyclic group of order $3$. Let $\Lambda=\{\lambda_1,\lambda_2,\lambda_3\}$. Then $\mathscr{G}=(G,\Lambda,\phi)$ is a dynamical set, where $\phi:\Lambda \times G \to \Lambda$ is given by
\begin{center}
\begin{minipage}{0.23\textwidth}
\centering
\(
\begin{array}{c|ccc}
    \phi & 0 & 1 & 2 \\ \hline
    \lambda_1 & \lambda_1 & \lambda_3 & \lambda_2 \\
    \lambda_2 & \lambda_2 & \lambda_3 & \lambda_1 \\
    \lambda_3 & \lambda_3 & \lambda_1 & \lambda_2 \\
\end{array}
\)
\par\vspace{0.25ex} 
\end{minipage}%
 \end{center}
Then $(\mathscr{G},\{\cdot_{\lambda_i}:=+ \}_{\lambda_i \in \Lambda},\{\vartriangleright_{\lambda_i}\}_{\lambda_i \in \Lambda})$ is a dynamical post-group, where $\vartriangleright_{\lambda_i}:G \times G \to G$ is given by
\begin{center}
\begin{minipage}{0.23\textwidth}
\centering
\(
\begin{array}{c|ccc}
    \vartriangleright_{\lambda_1} & 0 & 1 & 2 \\ \hline
    0 & 0 & 1 & 2 \\
    1 & 0 & 2 & 1 \\
    2 & 0 & 2 & 1 \\
\end{array}
\)
\par\vspace{0.25ex} 
\end{minipage}%
\hspace{1em} 
\begin{minipage}{0.23\textwidth}
\centering
\(
\begin{array}{c|ccc}
    \vartriangleright_{\lambda_2} & 0 & 1 & 2 \\ \hline
    0 & 0 & 1 & 2 \\
    1 & 0 & 1 & 2 \\
    2 & 0 & 2 & 1 \\
\end{array}
\)
\par\vspace{0.25ex} 
\end{minipage}%
\hspace{1em} 
\begin{minipage}{0.23\textwidth}
\centering
\(
\begin{array}{c|ccc}
    \vartriangleright_{\lambda_3} & 0 & 1 & 2 \\ \hline
    0 & 0 & 1 & 2 \\
    1 & 0 & 2 & 1 \\
    2 & 0 & 1 & 2 \\
\end{array}
\)
\par\vspace{0.25ex} 
\end{minipage}
\end{center}
}\end{ex}

\begin{lem}
Let $e_G$ be the unit for $\{\cdot_\lambda\}_{\lambda\in \Lambda}$ in a dynamical post-group $(\GGG,\{\cdot_\lambda\}_{\lambda\in \Lambda},\{\vartriangleright_\lambda\}_{\lambda\in \Lambda})$. Then for all $\lambda\in \Lambda$ and $a\in G$, we have
\begin{eqnarray}
\label{unit-d-post-group1} a \vartriangleright_\lambda e_G&=&e_G,\\
\label{unit-d-post-group2} e_G \vartriangleright_\lambda a &=& a.
\end{eqnarray}
\end{lem}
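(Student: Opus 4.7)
The plan is to specialize each compatibility condition at the unit $e_G$ and then exploit either cancellation in the group $(G,\cdot_\lambda)$ (for \eqref{unit-d-post-group1}) or the invertibility hypothesis on $L^{\vartriangleright_\lambda}_a$ (for \eqref{unit-d-post-group2}). Neither identity should require a computation of more than two or three lines; the interesting feature is that the second equation is precisely where the distinction between weak dynamical post-groups and dynamical post-groups kicks in.

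For \eqref{unit-d-post-group1} I would substitute $b=c=e_G$ into \eqref{d-post-group1}. Because $(G,\Lambda,\{\cdot_\mu\})$ is a constant dynamical group, $e_G$ is a two-sided identity for every $\cdot_\mu$, and in particular $e_G \cdot_{\phi_G(\lambda,a)} e_G = e_G$. The left-hand side of \eqref{d-post-group1} then collapses to $a \vartriangleright_\lambda e_G$, while the right-hand side becomes $(a\vartriangleright_\lambda e_G)\cdot_\lambda (a\vartriangleright_\lambda e_G)$. Setting $u := a\vartriangleright_\lambda e_G$, we find $u = u\cdot_\lambda u$ in the group $(G,\cdot_\lambda)$, whence $u=e_G$ by left-cancellation.

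For \eqref{unit-d-post-group2} I would substitute $b=e_G$ into \eqref{d-post-group2}. Using \eqref{unit-d-post-group1} already proved, we have $a\vartriangleright_\lambda e_G = e_G$, so the left-hand side reduces to $(a\cdot_\lambda e_G)\vartriangleright_\lambda c = a\vartriangleright_\lambda c$, while the right-hand side is $a\vartriangleright_\lambda (e_G \vartriangleright_{\phi_G(\lambda,a)} c)$. The dynamical post-group hypothesis is that $L^{\vartriangleright_\lambda}_a$ is invertible, and hence injective; cancelling it on the left yields
\[
c \;=\; e_G \vartriangleright_{\phi_G(\lambda,a)} c.
\]
Specialising to $a=e_G$ and using $\phi_G(\lambda,e_G)=\lambda$ (a standard unit property) gives the desired $e_G \vartriangleright_\lambda c = c$.

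The one point that requires attention is the appeal to $\phi_G(\lambda,e_G)=\lambda$ in the last step: this is precisely the unit axiom of a dynamical group as recorded in Definition~\ref{dynamical-group}(ii), and it is natural to assume it for the dynamical set $\mathscr{G}$ underlying a dynamical post-group (it holds in every example listed, e.g.\ Example~\ref{d-post-group-R} and Example~\ref{ex-d-post-group}). If one prefers to avoid this assumption, the same specialisation argument with $a=b=e_G$ in \eqref{d-post-group2} combined with \eqref{unit-d-post-group1} and injectivity of $L^{\vartriangleright_\lambda}_{e_G}$ at least produces $e_G\vartriangleright_{\phi_G(\lambda,e_G)}c = c$ for all $\lambda,c$, so no further ingenuity is needed once the unit compatibility is in place. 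The only real technical ingredient throughout is the invertibility of $L^{\vartriangleright_\lambda}_a$, which is used exactly once.
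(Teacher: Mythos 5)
Your proof is correct and follows essentially the same route as the paper: substitute $b=c=e_G$ into \eqref{d-post-group1} and cancel in $(G,\cdot_\lambda)$ for the first identity, then specialize \eqref{d-post-group2} at the unit and cancel the invertible map $L^{\rhd_\lambda}_{e_G}$ for the second. The only difference is that you make explicit the appeal to $\phi_G(\lambda,e_G)=\lambda$, which the paper uses silently when it rewrites the right-hand side as $e_G\vartriangleright_\lambda(e_G\vartriangleright_\lambda c)$ after setting $a=b=e_G$; flagging that this identity is not literally among the stated axioms of a dynamical post-group (though it holds in all the paper's examples and is needed for Theorem~\ref{sub-adj-d-group}) is a fair and useful observation.
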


\begin{proof}
Taking  $b=c=e_G$ in \eqref{d-post-group1}, we have
$a\vartriangleright_\lambda e_G=(a \vartriangleright_\lambda e_G)\cdot_\lambda (a \vartriangleright_\lambda e_G)$,
which implies that $a \vartriangleright_\lambda e_G=e_G$. Taking $a=b=e_G$ in \eqref{d-post-group2}, we have
$e_G \vartriangleright_\lambda c=e_G \vartriangleright_\lambda (e_G \vartriangleright_\lambda c)$ for all $c\in G$. Since $e_G \vartriangleright_\lambda -$ is invertible, we have $e_G \vartriangleright_\lambda c=c$, which implies that   \eqref{unit-d-post-group2} holds.
\end{proof}

\begin{defi}
Let $(\GGG,\{\cdot_\lambda\}_{\lambda\in \Lambda},\{\vartriangleright_\lambda\}_{\lambda\in \Lambda})$ and $(\HHH,  \{\cdot'_\lambda\}_{\lambda\in \Lambda},\{\vartriangleright'_\lambda\}_{\lambda\in \Lambda})$ be two dynamical post-groups.
A {\bf homomorphism of dynamical post-groups} from $(\GGG,\{\cdot_\lambda\}_{\lambda\in \Lambda},\{\vartriangleright_\lambda\}_{\lambda\in \Lambda})$ to $(\HHH,  \{\cdot'_\lambda\}_{\lambda\in \Lambda},\{\vartriangleright'_\lambda\}_{\lambda\in \Lambda})$ is a morphism of dynamical sets $\Psi:\GGG \to \HHH$ such that for all $\lambda\in \Lambda, a,b\in G$,
\begin{eqnarray}
\label{homo-d-post-group2} \Psi_{\lambda}(a \cdot_\lambda b)&=& \Psi_{\lambda}(a)\cdot'_{\lambda}\Psi_{\lambda}(b),\quad \Psi_{\lambda}(a \vartriangleright_\lambda b)= \Psi_{\lambda}(a)\vartriangleright'_{\lambda}\Psi_{\phi_G(\lambda,a)}(b).
\end{eqnarray}
\end{defi}

Dynamical post-groups and their homomorphisms form a category {\bf DPG}.

\begin{thm}\label{sub-adj-d-group}
Let $(\GGG, \{\cdot_\lambda\}_{\lambda\in \Lambda},\{\vartriangleright_\lambda\}_{\lambda\in \Lambda})$ be a dynamical post-group. Define $\circ_\lambda:G\times G \to G$ by
\begin{eqnarray}\label{sub-adj}
a \circ_\lambda b =a \cdot_\lambda (a \vartriangleright_\lambda b), \quad \forall a,b\in G.
\end{eqnarray}
Then  $(\GGG,\{\circ_\lambda\}_{\lambda\in \Lambda})$ is a dynamical group with $e_G$ being the unit, and the inverse map $\bar{\cdot}^\lambda:G \to G$ given by
    $$ \bar{a}^\lambda:=(L^{\rhd_\lambda}_a)^{-1} (a^\lambda),\quad \forall a\in G, $$
  where $a^\lambda$ is the inverse of $a$ in the constant dynamical group $(G, \Lambda,\{\cdot_\lambda\}_{\lambda\in \Lambda})$.

Moreover, the left multiplication $L^{\rhd_\lambda}:G\to \Perm(G)$ defined by
  $$L^{\rhd_\lambda}(a)(b)=L^{\rhd_\lambda}_ab=a \vartriangleright_{\lambda} b, \quad \forall a, b\in G, $$
  is an action of the dynamical group $(\GGG,\{\circ_\lambda\}_{\lambda\in \Lambda})$ on the constant dynamical group $(G,\Lambda,\{\cdot_\lambda\}_{\lambda\in \Lambda})$.
\end{thm}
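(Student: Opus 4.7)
The plan is to verify in turn the three clauses of Definition \ref{dynamical-group} for $(\GGG,\{\circ_\lambda\})$ with unit $e_G$ and inverse $\bar{a}^\lambda:=(L^{\rhd_\lambda}_a)^{-1}(a^\lambda)$, and then the two compatibility conditions \eqref{action-1}--\eqref{action-2} of Definition \ref{weak-act} for $L^{\rhd_\lambda}$. Throughout, each step reduces, via the defining identity \eqref{sub-adj}, to applications of the two post-group identities \eqref{d-post-group1} and \eqref{d-post-group2}, together with the axioms of the constant dynamical group $(G,\{\cdot_\lambda\})$ and the morphism identity \eqref{phi-asso} needed to track how $\phi_G$ interacts with the new product.

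First I would check the associative law \eqref{product}. Unfolding $(a\circ_\lambda b)\circ_\lambda c$ via \eqref{sub-adj} gives $(a\cdot_\lambda(a\vartriangleright_\lambda b))\cdot_\lambda\bigl((a\cdot_\lambda(a\vartriangleright_\lambda b))\vartriangleright_\lambda c\bigr)$. The outer $\vartriangleright_\lambda c$ factor is rewritten by \eqref{d-post-group2} as $a\vartriangleright_\lambda(b\vartriangleright_{\phi_G(\lambda,a)}c)$; then the left distributivity \eqref{d-post-group1}, applied in reverse, folds the two $\vartriangleright_\lambda$ factors into $a\vartriangleright_\lambda\bigl(b\cdot_{\phi_G(\lambda,a)}(b\vartriangleright_{\phi_G(\lambda,a)}c)\bigr)=a\vartriangleright_\lambda(b\circ_{\phi_G(\lambda,a)}c)$, producing $a\circ_\lambda(b\circ_{\phi_G(\lambda,a)}c)$ after using associativity of $\cdot_\lambda$. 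The unit identities $a\circ_\lambda e_G=a$ and $e_G\circ_\lambda a=a$ are immediate from \eqref{unit-d-post-group1} and \eqref{unit-d-post-group2}.

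Next I would handle the inverse. The right-inverse identity is direct: $a\vartriangleright_\lambda\bar{a}^\lambda=a^\lambda$ by the defining formula, so $a\circ_\lambda\bar{a}^\lambda=a\cdot_\lambda a^\lambda=e_G$. For the left-inverse identity $\bar{a}^\lambda\circ_{\phi_G(\lambda,a)}a=e_G$, set $\mu=\phi_G(\lambda,a)$; applying \eqref{phi-asso} to $a\circ_\lambda\bar{a}^\lambda=e_G$ first yields $\phi_G(\mu,\bar{a}^\lambda)=\lambda$, which permits invoking the just-proved associativity to deduce
\[
(\bar{a}^\lambda\circ_\mu a)\circ_\mu\bar{a}^\lambda=\bar{a}^\lambda\circ_\mu(a\circ_\lambda\bar{a}^\lambda)=\bar{a}^\lambda.
\]
Composing on the right by a right inverse of $\bar{a}^\lambda$ in $(G,\circ_\mu)$ (which exists by the same formula applied one level up), and re-aligning indices via \eqref{phi-asso}, gives $\bar{a}^\lambda\circ_\mu a=e_G$.

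Finally, the action conditions are nearly formal: \eqref{action-1} is precisely \eqref{d-post-group1} in the notation $L^{\rhd_\lambda}(a)$, and \eqref{action-2}, after substituting \eqref{sub-adj}, becomes $(a\circ_\lambda b)\vartriangleright_\lambda c=a\vartriangleright_\lambda(b\vartriangleright_{\phi_G(\lambda,a)}c)$, which is exactly \eqref{d-post-group2}. The main obstacle is the left-inverse step: it is the dynamical analogue of the classical ``right inverse is two-sided'' argument and requires careful bookkeeping of how the indices $\lambda$ and $\mu=\phi_G(\lambda,a)$ propagate through each application of \eqref{phi-asso}; the remaining verifications are essentially routine unwindings of the two post-group identities.
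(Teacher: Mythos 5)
Your proposal is correct in outline and, for the associativity, unit, right-inverse and action clauses, coincides step for step with the paper's proof (unfold via \eqref{sub-adj}, apply \eqref{d-post-group2} then \eqref{d-post-group1} in reverse for associativity; \eqref{unit-d-post-group1}--\eqref{unit-d-post-group2} for the unit; and the observation that \eqref{action-1}--\eqref{action-2} are literally \eqref{d-post-group1}--\eqref{d-post-group2} for the action). The one place you genuinely diverge is the left-inverse identity $\bar{a}^\lambda\circ_{\phi_G(\lambda,a)}a=e_G$. You run the classical ``a right inverse in an associative unital structure is two-sided'' argument, which forces you to realign dynamical parameters via \eqref{phi-asso} for the \emph{new} product $\circ$ (e.g.\ to get $\phi_G(\phi_G(\lambda,a),\bar a^\lambda)=\phi_G(\lambda,a\circ_\lambda\bar a^\lambda)=\phi_G(\lambda,e_G)=\lambda$ before associating). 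Be aware that \eqref{phi-asso} for $\circ$ and the identity $\phi_G(\lambda,e_G)=\lambda$ are properties of the dynamical group being constructed, not axioms of a dynamical post-group as stated in Definition \ref{d-skew-brace}; they do not follow from \eqref{d-post-group1}--\eqref{d-post-group2}, so your route either needs them verified separately or added as hypotheses. The paper instead computes $a\vartriangleright_\lambda(\bar a^\lambda\circ_{\phi_G(\lambda,a)}a)$ directly: by \eqref{d-post-group1} and \eqref{d-post-group2} it collapses to $a^\lambda\cdot_\lambda\bigl((a\cdot_\lambda a^\lambda)\vartriangleright_\lambda a\bigr)=a^\lambda\cdot_\lambda a=e_G=a\vartriangleright_\lambda e_G$, and then the injectivity of $L^{\rhd_\lambda}_a$ (an explicit axiom of dynamical post-groups) gives the left-inverse identity with no appeal to \eqref{phi-asso}. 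The paper's variant is therefore more self-contained relative to the stated axioms; yours is the more ``abstract nonsense'' argument but carries an extra unverified dependency, which you should either discharge or avoid by switching to the direct computation.
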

The dynamical group $(\GGG,\{\circ_\lambda\}_{\lambda\in \Lambda})$ is called the {\bf sub-adjacent dynamical group} of the dynamical post-group
  $(\GGG,\{\cdot_\lambda\}_{\lambda\in \Lambda},\{\vartriangleright_\lambda\}_{\lambda\in \Lambda})$, and denoted by $\GGG_{\rhd}$.

\begin{proof}
 For all $\lambda\in \Lambda,a,b,c\in G$, we have
  \begin{eqnarray*}
  (a \circ_\lambda b)\circ_\lambda c
  &=& (a \circ_\lambda b)\cdot_{\lambda} ((a \circ_\lambda b)\vartriangleright_\lambda c)\\
  &=& a \cdot_{\lambda} (a \vartriangleright_\lambda b) \cdot_{\lambda} ((a \circ_\lambda b)\vartriangleright_\lambda c)\\
  &\overset{\eqref{d-post-group2}}{=}& a \cdot_{\lambda} (a \vartriangleright_\lambda b) \cdot_{\lambda} (a \vartriangleright_\lambda (b \vartriangleright_{\phi_G(\lambda,a)} c))\\
  &\overset{\eqref{d-post-group1}}{=}& a \cdot_{\lambda} \Big(a \vartriangleright_\lambda ( b \cdot_{\phi_G(\lambda,a)} (b \vartriangleright_{\phi_G(\lambda,a)} c))  \Big)\\
  &=& a \cdot_{\lambda} (a \vartriangleright_\lambda (b \circ_{\phi_G(\lambda,a)} c) )\\
  &=& a \circ_{\lambda} (b \circ_{\phi_G(\lambda,a)} c ),
  \end{eqnarray*}
  which implies that Condition {\rm(i)} in Definition \ref{dynamical-group} is satisfied.

  For all $\lambda\in \Lambda,a\in G$, we have
$$  a\circ_{\lambda}  e_G=a \cdot_{\lambda} (a \vartriangleright_\lambda e_G)\overset{\eqref{unit-d-post-group1}}{=}a \cdot_{\lambda}e_G=a;\quad
  e_G \circ_{\lambda} a=  e_G \cdot_{\lambda} (e_G \vartriangleright_{\lambda} a) \overset{\eqref{unit-d-post-group2}}{=} e_G \cdot_{\lambda} a=a,
$$
  which implies that Condition {\rm(ii)} in Definition \ref{dynamical-group} is satisfied.

  Since
  $$ a\circ_{\lambda} \bar{a}^\lambda=a \cdot_{\lambda} (a \vartriangleright_\lambda \bar{a}^\lambda)=a \cdot_{\lambda} \Big((L^{\rhd_\lambda}_a) (L^{\rhd_\lambda}_a)^{-1} (a^\lambda)\Big)=a \cdot_{\lambda} a^\lambda=e_G,$$
  and
  \begin{eqnarray*}
  a \vartriangleright_\lambda (\bar{a}^\lambda \circ_{\phi_G(\lambda,a)} a)
  &=& a \vartriangleright_\lambda ( \bar{a}^\lambda \cdot_{\phi_G(\lambda,a)}(\bar{a}^\lambda \vartriangleright_{\phi_G(\lambda,a)} a))\\
  &=& (a \vartriangleright_\lambda \bar{a}^\lambda)\cdot_{\lambda} (a \vartriangleright_\lambda (\bar{a}^\lambda \circ_{\phi_G(\lambda,a)} a))\\
  &=& a^\lambda \cdot_{\lambda} (e_G \vartriangleright_{\phi_G(\lambda,a)} a )\\
  &=& a^\lambda \cdot_{\lambda} a = e_G,
  \end{eqnarray*}
we obtain $\bar{a}^\lambda \circ_{\phi_G(\lambda,a)} a=e_G$ applying the fact that $a\vartriangleright_\lambda -$ is invertible for all $\lambda\in \Lambda,a\in G$.
Thus, Condition {\rm(iii)} in Definition \ref{dynamical-group} is satisfied. Therefore, $(\GGG,\{\circ_\lambda\}_{\lambda\in \Lambda})$ is a dynamical group.

  Eqs.\, \eqref{d-post-group1} and \eqref{d-post-group2} indicate that $L^{\rhd_\lambda}$ is an action of the dynamical group $(\GGG,\{\circ_\lambda\}_{\lambda\in \Lambda})$ on the constant dynamical group $(G,\Lambda,\{\cdot_\lambda\}_{\lambda\in \Lambda})$.
\end{proof}

\begin{pro}\label{homo-sub-d-group}
 Let $\Psi:\Lambda \times  G  \to  H$ be a homomorphism of dynamical post-groups from $  (\GGG,  \{\cdot_\lambda\}_{\lambda\in \Lambda},\{\vartriangleright_\lambda\}_{\lambda\in \Lambda}) $ to $ (\HHH, \{\cdot'_\lambda\}_{\lambda\in \Lambda},\{\vartriangleright'_\lambda\}_{\lambda\in \Lambda})$. Then $\Psi$ is a homomorphism of the sub-adjacent dynamical groups from $(\GGG,\{\circ_\lambda\}_{\lambda\in \Lambda})$ to $(\HHH, \{\circ'_\lambda\}_{\lambda\in \Lambda})$.
\end{pro}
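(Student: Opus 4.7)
The plan is to unfold the definition of the sub-adjacent product on both sides and let the two preservation identities in \eqref{homo-d-post-group2} do all the work; no new combinatorial input is needed beyond verifying the compatibility of the arguments of $\Psi$ with the structure map $\phi_G$.

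First, I will recall that $\Psi$ is, by assumption, a morphism of dynamical sets, so by \eqref{morphism-d-set} we have $\phi_H(\lambda,\Psi_\lambda(a))=\phi_G(\lambda,a)$ for all $\lambda\in\Lambda,\,a\in G$. This observation will ensure that the two expressions in the definition \eqref{homo-d-group} of a homomorphism of dynamical groups agree, so it suffices to check one of them.

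Next, using the definition \eqref{sub-adj} of the sub-adjacent product, I compute
\begin{eqnarray*}
\Psi_\lambda(a\circ_\lambda b)
&=& \Psi_\lambda\bigl(a\cdot_\lambda(a\vartriangleright_\lambda b)\bigr) \\
&=& \Psi_\lambda(a)\cdot'_\lambda\Psi_\lambda(a\vartriangleright_\lambda b) \\
&=& \Psi_\lambda(a)\cdot'_\lambda\bigl(\Psi_\lambda(a)\vartriangleright'_\lambda\Psi_{\phi_G(\lambda,a)}(b)\bigr) \\
&=& \Psi_\lambda(a)\circ'_\lambda\Psi_{\phi_G(\lambda,a)}(b),
\end{eqnarray*}
where the second equality uses the preservation of $\cdot_\lambda$ from \eqref{homo-d-post-group2}, the third equality uses the preservation of $\vartriangleright_\lambda$ from \eqref{homo-d-post-group2}, and the last equality again uses \eqref{sub-adj} applied in $(\HHH,\{\circ'_\lambda\}_{\lambda\in\Lambda})$. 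Combined with the observation on $\phi_H\circ\Psi=\phi_G$ from the first paragraph, this is exactly the condition \eqref{homo-d-group} for a homomorphism of dynamical groups.

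There is really no obstacle here: the proof is a one-line verification modulo bookkeeping, because both pieces of the sub-adjacent structure ($\cdot_\lambda$ and $\vartriangleright_\lambda$) are preserved separately by a post-group homomorphism. The only mild point to keep track of is making sure the second argument of $\vartriangleright'$ is evaluated at $\phi_G(\lambda,a)$, which matches $\phi_H(\lambda,\Psi_\lambda(a))$ precisely because $\Psi$ is a morphism of dynamical sets; this identifies the two equivalent forms of \eqref{homo-d-group}.
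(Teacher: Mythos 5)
Your proof is correct and follows exactly the same route as the paper: unfold $a\circ_\lambda b=a\cdot_\lambda(a\vartriangleright_\lambda b)$, apply the two preservation identities of \eqref{homo-d-post-group2} in turn, and reassemble via \eqref{sub-adj} in $\HHH$. The extra remark that $\phi_H(\lambda,\Psi_\lambda(a))=\phi_G(\lambda,a)$ reconciles the two forms of \eqref{homo-d-group} is a sensible bit of bookkeeping that the paper leaves implicit.
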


\begin{proof}
For all $a, b\in G$, we have
  \begin{eqnarray*}
    \Psi_{\lambda}(a \circ_\lambda b)&=&  \Psi_{\lambda}(a \cdot_\lambda (a\vartriangleright_\lambda b))\\
    &\overset{\eqref{homo-d-post-group2}}{=}& \Psi_{\lambda}(a) \cdot'_\lambda \Psi_{\lambda}(a\vartriangleright_\lambda b)\\
    &\overset{\eqref{homo-d-post-group2}}{=}& \Psi_{\lambda}(a) \cdot'_\lambda
    \Psi_{\lambda}(a) \vartriangleright'_\lambda \Psi_{\phi_G(\lambda,a)}(b)\\
    &=& \Psi_{\lambda}(a) \circ'_\lambda \Psi_{\phi_G(\lambda,a)}(b),
  \end{eqnarray*}
  which implies that $\Psi$ is a homomorphism of the sub-adjacent dynamical groups.
\end{proof}

\begin{rmk}
Let $(\GGG,\{\cdot_\lambda\}_{\lambda\in \Lambda},\{\vartriangleright_\lambda\}_{\lambda\in \Lambda})$ be a weak dynamical post-group. Then we can also define $\circ_\lambda:G\times G \to G$ by \eqref{sub-adj}. In this case, $(\GGG,\{\circ_\lambda\}_{\lambda\in \Lambda})$ is just a dynamical semi-group and the left multiplication $L^{\rhd_\lambda}:G\to \Map(G)$ is an action of the dynamical semi-group $(\GGG,\{\circ_\lambda\}_{\lambda\in \Lambda})$ on the constant dynamical group $(G,\Lambda,\{\cdot_\lambda\}_{\lambda\in \Lambda})$.
\end{rmk}

At the end of this subsection,  we display the close relationship between relative Rota-Baxter operators on dynamical groups and dynamical post-groups.

\begin{pro}\label{rRBO-d-post-group}
Let $\huaB:H \to G$ be a relative Rota-Baxter operator on a  dynamical group $(\GGG,\{\circ_\lambda\}_{\lambda\in \Lambda})$ with respect to the action $\{\Phi_\lambda:G \to \Perm(H)\}_{\lambda\in\Lambda}$. Define the structure map $\phi_{\huaB}:\Lambda\times H\to H$ and the multiplication $\vartriangleright^{\huaB}_\lambda:H \times H \to H$ by
\begin{eqnarray}
\phi_{\huaB}(\lambda,x):=\phi_G(\lambda,\huaB (x)),\quad x \vartriangleright^{\huaB}_\lambda y=\Phi_{\lambda} (\huaB (x))(y),\quad \forall \lambda\in \Lambda, x,y\in H.
\end{eqnarray}
Then $(H,\Lambda,\phi_{\huaB},\{\cdot_\lambda\}_{\lambda\in \Lambda},\{\vartriangleright^{\huaB}_\lambda\}_{\lambda\in \Lambda})$ is a dynamical post-group, whose sub-adjacent dynamical group is the descendant dynamical group of $\huaB$ given in Proposition \ref{pro-descendant}.
\end{pro}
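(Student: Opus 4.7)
The plan is to verify the three ingredients of a dynamical post-group in turn, using only the action axioms \eqref{action-1}--\eqref{action-2} together with the relative Rota-Baxter identity \eqref{rRBO}, and then to read off the sub-adjacent structure directly from the definitions.

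First, $(H,\Lambda,\{\cdot_\lambda\}_{\lambda\in\Lambda})$ is already a constant dynamical group by hypothesis, so Condition~(i) of Definition~\ref{d-skew-brace} holds for free. The only nontrivial data to check is that $\phi_\huaB$ together with $\vartriangleright^\huaB$ satisfies \eqref{d-post-group1} and \eqref{d-post-group2}, and that each left multiplication $L^{\vartriangleright^\huaB_\lambda}_x$ is a bijection. The latter is immediate: by definition $L^{\vartriangleright^\huaB_\lambda}_x = \Phi_\lambda(\huaB(x))$, and this lies in $\Perm(H)$ by the hypothesis that $\{\Phi_\lambda\colon G\to \Perm(H)\}_{\lambda\in\Lambda}$ is an action.

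For \eqref{d-post-group1}, I would unfold the definitions: since $\phi_\huaB(\lambda,x)=\phi_G(\lambda,\huaB(x))$, the equation to verify is
\[
\Phi_\lambda(\huaB(x))\bigl(y\cdot_{\phi_G(\lambda,\huaB(x))} z\bigr)
=\bigl(\Phi_\lambda(\huaB(x))y\bigr)\cdot_\lambda \bigl(\Phi_\lambda(\huaB(x))z\bigr),
\]
which is exactly \eqref{action-1} applied with $a=\huaB(x)$. For \eqref{d-post-group2}, the key step is to compute the left-hand side by first applying the relative Rota-Baxter identity \eqref{rRBO} to rewrite $\huaB(x\cdot_\lambda \Phi_\lambda(\huaB(x))y)=\huaB(x)\circ_\lambda\huaB(y)$, obtaining
\[
\bigl(x\cdot_\lambda(x\vartriangleright^\huaB_\lambda y)\bigr)\vartriangleright^\huaB_\lambda z
=\Phi_\lambda\bigl(\huaB(x)\circ_\lambda\huaB(y)\bigr)(z),
\]
and then to apply \eqref{action-2} to factor this as $\Phi_\lambda(\huaB(x))\,\Phi_{\phi_G(\lambda,\huaB(x))}(\huaB(y))(z)$, which is precisely $x\vartriangleright^\huaB_\lambda(y\vartriangleright^\huaB_{\phi_\huaB(\lambda,x)} z)$. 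This is the only step where the Rota-Baxter property is used, and it is the main (though still routine) point of the proof.

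Finally, for the sub-adjacent dynamical group, formula \eqref{sub-adj} gives
\[
x\circ^\huaB_\lambda y = x\cdot_\lambda (x\vartriangleright^\huaB_\lambda y)=x\cdot_\lambda \Phi_\lambda(\huaB(x))(y),
\]
which coincides with the descendant product \eqref{descendant-product}; the structure map $\phi_\huaB$ agrees with \eqref{descendant-map} by definition, so the sub-adjacent dynamical group constructed via Theorem~\ref{sub-adj-d-group} is exactly $\mathscr{H}_\huaB$ from Proposition~\ref{pro-descendant}, completing the proof.
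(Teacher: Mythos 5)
Your proof is correct and follows essentially the same route as the paper's: both reduce \eqref{d-post-group1} to the action axiom \eqref{action-1} and \eqref{d-post-group2} to the Rota-Baxter identity \eqref{rRBO} combined with \eqref{action-2}, with invertibility of $L^{\rhd_\lambda}_x=\Phi_\lambda(\huaB(x))$ coming from $\Phi_\lambda(\huaB(x))\in\Perm(H)$. The paper merely packages the same computation differently --- citing Proposition \ref{pro-descendant} to view $L^{\rhd_\lambda}$ as the pullback of the action $\Phi$ along the homomorphism $\widetilde{\huaB}$ and then invoking Theorem \ref{sub-adj-d-group} --- whereas your direct unfolding also makes explicit the identification of the sub-adjacent group with the descendant group, which the paper leaves as ``easily obtained.''
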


\begin{proof}
Let $\huaB:H \to G$ be a relative Rota-Baxter operator on a  dynamical group $(\GGG,\{\circ_\lambda\}_{\lambda\in \Lambda})$ with respect to the action $\{\Phi_\lambda:G \to \Perm(H)\}_{\lambda\in\Lambda}$. By Proposition \ref{pro-descendant},   $(H,\Lambda,\phi_{\huaB},\{\circ^{\huaB}_\lambda\}_{\lambda\in \Lambda})$ is a dynamical group and $\widetilde{\huaB}$ is a homomorphism of dynamical groups from $(H,\Lambda,\phi_{\huaB},\{\circ^{\huaB}_\lambda\}_{\lambda\in \Lambda})$ to $(\GGG,\{\circ_\lambda\}_{\lambda\in \Lambda})$. Since $\{\Phi_\lambda:G \to \Perm(H)\}_{\lambda\in\Lambda}$ is an action of a dynamical group $(\GGG,\{\circ_\lambda\}_{\lambda\in \Lambda})$ on a constant dynamical group $(H,\Lambda,\{\cdot_\lambda\}_{\lambda\in \Lambda})$, we deduce that $\{ (\Phi\circ \widetilde{\huaB})_{\lambda}:H  \to \Perm(H )\}_{\lambda\in \Lambda}$ is an action of the dynamical group $(H,\Lambda,\phi_{\huaB},\{\circ^{\huaB}_\lambda\}_{\lambda\in \Lambda})$ on the constant dynamical group $(H,\Lambda,\{\cdot_\lambda\}_{\lambda\in \Lambda})$. Moreover, for all $x\in H$, we have
\vspace{-.2cm}
$$L^{\rhd_\lambda}(x)=(\Phi \circ \widetilde{\huaB})_{\lambda}(x)=\Phi_{\lambda}\circ \widetilde{\huaB}_{\lambda} (x)=\Phi_{\lambda} (\huaB (x)),$$
which implies that is $L^{\rhd_\lambda}$ an action of $(H,\Lambda,\phi_{\huaB},\{\circ^{\huaB}_\lambda\}_{\lambda\in \Lambda})$ on $(H,\Lambda,\{\cdot_\lambda\}_{\lambda\in \Lambda})$. Therefore, we deduce that $(H,\Lambda,\phi_{\huaB},\{\cdot_\lambda\}_{\lambda\in \Lambda},\{\vartriangleright_\lambda\}_{\lambda\in \Lambda})$ is a dynamical post-group by Theorem \ref{sub-adj-d-group}. The other conclusion can be easily obtained.
\end{proof}

\begin{pro}\label{Id-rRBO}
	Let $(\GGG,\{\cdot_\lambda\}_{\lambda\in \Lambda},\{\vartriangleright_\lambda\}_{\lambda\in \Lambda})$ be a dynamical post-group. Then the identity map $\Id:G \to G$ is a
	relative Rota-Baxter operator on the sub-adjacent dynamical group $(\GGG,\{\circ_\lambda\}_{\lambda\in \Lambda})$ given in Theorem \ref{sub-adj-d-group} with respect to the action $L^{\rhd_\lambda}:  G \to \Perm(G)$.
\end{pro}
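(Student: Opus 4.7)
The plan is to check the relative Rota-Baxter identity \eqref{rRBO} directly from the definition of the sub-adjacent product \eqref{sub-adj}. First I would set up the data: take the constant dynamical group to be $(\mathscr{G},\{\cdot_\lambda\}_{\lambda\in\Lambda})$, the dynamical group to be its sub-adjacent $(\mathscr{G},\{\circ_\lambda\}_{\lambda\in\Lambda})$ as constructed in Theorem~\ref{sub-adj-d-group}, and the action to be the left multiplication $L^{\rhd_\lambda}:G\to\Perm(G)$. Theorem~\ref{sub-adj-d-group} already tells us that $L^{\rhd_\lambda}$ is a legitimate action of the sub-adjacent dynamical group on the constant dynamical group, so there is nothing to verify on this side.

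Then I would just substitute $\huaB=\Id$ and $\Phi_\lambda = L^{\rhd_\lambda}$ into the defining identity~\eqref{rRBO} of a relative Rota-Baxter operator. The right-hand side becomes
\begin{eqnarray*}
\Id\bigl(x\cdot_\lambda L^{\rhd_\lambda}(\Id(x))(y)\bigr) = x\cdot_\lambda(x\vartriangleright_\lambda y),
\end{eqnarray*}
while the left-hand side is $\Id(x)\circ_\lambda \Id(y) = x\circ_\lambda y$. By \eqref{sub-adj} these two expressions coincide, which establishes \eqref{rRBO}.

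There is essentially no obstacle here: the statement is a tautological reformulation of the definition of $\circ_\lambda$ in terms of $\cdot_\lambda$ and $\vartriangleright_\lambda$, and the only substantive content, namely that $L^\rhd$ satisfies the axioms \eqref{action-1}--\eqref{action-2} of an action, is already contained in Theorem~\ref{sub-adj-d-group}. The proof is therefore a one-line verification after the setup.
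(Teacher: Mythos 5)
Your proposal is correct and matches the paper's proof exactly: both simply substitute $\huaB=\Id$ and $\Phi_\lambda=L^{\rhd_\lambda}$ into \eqref{rRBO} and observe that the resulting identity $x\circ_\lambda y = x\cdot_\lambda(x\vartriangleright_\lambda y)$ is precisely the definition \eqref{sub-adj} of the sub-adjacent product, with the action axioms already supplied by Theorem~\ref{sub-adj-d-group}. Nothing further is needed.
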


\begin{proof}
	By \eqref{sub-adj}, we get
\vspace{-.2cm}
	$$ \Id(a)\circ_{\lambda} \Id(b)=\Id(a \cdot_{\lambda} L^{\rhd_\lambda}_a (b)),\quad \forall \lambda\in\Lambda,a,b\in G, $$
	which implies that $\Id:G \to G$ is a relative Rota-Baxter operator on $(\GGG,\{\circ_\lambda\}_{\lambda\in \Lambda})$ with respect to the action $L^{\rhd_\lambda}$.
\end{proof}
\vspace{-.3cm}

\subsection{Dynamical post-groups and braided dynamical groups}

Now we show that the category of dynamical post-groups is isomorphic to the category of braided dynamical groups. Consequently, we obtain that a dynamical post-group gives rise to a solution of the dynamical Yang-Baxter equation.
\vspace{-.2cm}

\begin{thm}\label{DPG-BDG}
Let $(\GGG,\{\cdot_\lambda\}_{\lambda\in \Lambda},\{\vartriangleright_\lambda\}_{\lambda\in \Lambda})$ be a dynamical post-group. Then $(\GGG_{\rhd},R_G)$ is a braided dynamical group, where $\GGG_\rhd=(G,\Lambda,\phi_G,\{\circ_\lambda\}_{\lambda\in \Lambda})$ is the sub-adjacent dynamical group, and $R_G:\Lambda \times G \times G \to G \times G$ is defined as follows:
\vspace{-.2cm}
\begin{eqnarray}\label{RG}
  R_G(\lambda,a,b)=(a \vartriangleright_\lambda b,\overline{(a \vartriangleright_\lambda b)}^{\lambda} \circ_{\phi_G(\lambda,a \vartriangleright_\lambda b)} (a \circ_{\lambda} b) ),\quad \forall \lambda \in \Lambda,~a,b\in G.
\end{eqnarray}

\vspace{-.2cm}
Moreover, let $\Psi$ be a homomorphism of dynamical post-groups from $(\GGG,\{\cdot_\lambda\}_{\lambda\in \Lambda},\{\vartriangleright_\lambda\}_{\lambda\in \Lambda})$ to $(\HHH,  \{\cdot'_\lambda\}_{\lambda\in \Lambda},\{\vartriangleright'_\lambda\}_{\lambda\in \Lambda})$. Then $\Psi$ is a homomorphism of braided dynamical groups from $(\GGG_\rhd,R_G)$ to $(\HHH_{\rhd'},R_H)$.
\end{thm}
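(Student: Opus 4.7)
My plan is to avoid verifying the six matched-pair axioms \eqref{mp-d-groups-1}--\eqref{mp-d-groups-6} directly, and instead exploit the two preceding results on relative Rota-Baxter operators. By Proposition \ref{Id-rRBO}, the identity map $\Id : G \to G$ is a relative Rota-Baxter operator on the sub-adjacent dynamical group $\GGG_\rhd$ with respect to the action $L^{\rhd_\lambda}$. I will feed this into Proposition \ref{rRBO-mp} to produce a matched pair of dynamical groups. The two ingredients of that output are the dynamical group $\GGG_\rhd$ (as the ``$G$''-side) and the descendant dynamical group $\mathscr{H}_\Id$ of $\Id$; a direct check using \eqref{descendant-product}--\eqref{descendant-map} and the definition \eqref{sub-adj} shows $\mathscr{H}_\Id = \GGG_\rhd$, so what we obtain is a matched pair $(\GGG_\rhd, \GGG_\rhd, \sigma)$. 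Substituting $\huaB=\Id$ and $\Phi_\lambda=L^{\rhd_\lambda}$ into the formulas of Proposition \ref{rRBO-mp} yields
\[
a \overset{\lambda}{\rightharpoonup} b = a \rhd_\lambda b, \qquad a \overset{\phi_G(\lambda,a)}{\leftharpoonup\joinrel\relbar\joinrel\relbar\joinrel\relbar} b = \overline{a\rhd_\lambda b}^{\,\lambda} \circ_{\phi_G(\lambda,a\rhd_\lambda b)} (a\circ_\lambda b),
\]
so $\sigma$ agrees with $R_G$ defined in \eqref{RG}. This settles condition~(i) of Definition~\ref{def-braid-d-group}.

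For condition~(ii), I will verify \eqref{braided-com} by a one-line computation: by the associative law \eqref{product} for the dynamical group $(\GGG,\{\circ_\lambda\})$ together with the defining property of the inverse,
\[
(a\rhd_\lambda b)\circ_\lambda\!\Bigl(\overline{a\rhd_\lambda b}^{\,\lambda} \circ_{\phi_G(\lambda,a\rhd_\lambda b)} (a\circ_\lambda b)\Bigr)
= \bigl((a\rhd_\lambda b)\circ_\lambda \overline{a\rhd_\lambda b}^{\,\lambda}\bigr)\circ_\lambda(a\circ_\lambda b)
= a\circ_\lambda b.
\]
Combined with the matched pair from the previous paragraph, this shows $(\GGG_\rhd,R_G)$ is a braided dynamical group.

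For the homomorphism assertion, Proposition \ref{homo-sub-d-group} already gives that $\Psi$ is a homomorphism of the sub-adjacent dynamical groups $\GGG_\rhd \to \HHH_{\rhd'}$, so it remains to check the intertwining condition \eqref{homo-braided-d-groups}, i.e. $(\Psi\times\Psi)_\lambda\circ R_{G,\lambda} = R_{H,\lambda}\circ (\Psi\times\Psi)_\lambda$. Unwinding the tensor product of morphisms in ${\bf DSet_{\Lambda}}$, the first coordinate of both sides reduces to the identity $\Psi_\lambda(a\rhd_\lambda b)=\Psi_\lambda(a)\rhd'_\lambda\Psi_{\phi_G(\lambda,a)}(b)$, which is part of \eqref{homo-d-post-group2}. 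For the second coordinate, I will apply $\Psi$ to $\overline{a\rhd_\lambda b}^{\,\lambda}\circ_{\phi_G(\lambda,a\rhd_\lambda b)}(a\circ_\lambda b)$, use that $\Psi$ preserves $\circ_\lambda$ (hence inverses) and that $\phi_G$ is intertwined with $\phi_H$ via \eqref{morphism-d-set}, and compare with the corresponding expression on the $H$-side. The main obstacle I anticipate is bookkeeping the dynamical parameters on the second slot (the $\phi_G(\lambda,a)$ shift in the tensor product and the $\phi_G(\lambda, a\rhd_\lambda b)$ shift inside the inverse), but these are forced by \eqref{morphism-d-set} together with the first-coordinate agreement already established, so no new compatibilities beyond those recorded in \eqref{homo-d-post-group2} and Proposition \ref{homo-sub-d-group} are needed.
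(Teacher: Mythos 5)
Your proposal is correct and follows essentially the same route as the paper's own proof: both invoke Proposition \ref{Id-rRBO} to realize $\Id$ as a relative Rota-Baxter operator, feed it into Proposition \ref{rRBO-mp} to obtain the matched pair $(\GGG_\rhd,\GGG_\rhd,R_G)$, verify \eqref{braided-com} by the same one-line associativity-plus-inverse computation, and establish the homomorphism claim via Proposition \ref{homo-sub-d-group} together with a coordinate-wise check of $(\Psi\times\Psi)_\lambda\circ R_{G,\lambda}=R_{H,\lambda}\circ(\Psi\times\Psi)_\lambda$ using \eqref{homo-d-group} and \eqref{homo-d-post-group2}. The only cosmetic difference is that you make explicit the identification $\mathscr{H}_{\Id}=\GGG_\rhd$, which the paper leaves implicit.
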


\begin{proof}
By Proposition \ref{Id-rRBO}, the identity map $\Id:G \to G$ is a
relative Rota-Baxter operator on the sub-adjacent dynamical group $(\GGG,\{\circ_\lambda\}_{\lambda\in \Lambda})$ with respect to the action $L^{\rhd_\lambda}:  (G \to \Perm(G)$. Moreover, by Proposition \ref{rRBO-mp}, $(\GGG_\rhd,\GGG_\rhd,R_G)$ is a matched pair, where
\vspace{-.3cm}
\begin{eqnarray*}
a\overset{\lambda}{\rightharpoonup} b&=&a \vartriangleright_\lambda b;\\
a\overset{\phi_G(\lambda,a)}{\leftharpoonup\joinrel\relbar\joinrel\relbar\joinrel\relbar} b&=&\overline{(a \vartriangleright_\lambda b)}^{\lambda} \circ_{\phi_G(\lambda,a \vartriangleright_\lambda b)} (a \circ_{\lambda} b), \quad \forall \lambda\in \Lambda,a,b\in G.
\end{eqnarray*}
Furthermore, we have
\vspace{-.3cm}
$$  (a\overset{\lambda}{\rightharpoonup} b)\circ_{\lambda} (a\overset{\phi_G(\lambda,a)}{\leftharpoonup\joinrel\relbar\joinrel\relbar\joinrel\relbar} b)=
(a \vartriangleright_\lambda b)\circ_{\lambda}\Big(\overline{(a \vartriangleright_\lambda b)}^{\lambda} \circ_{\phi_G(\lambda,a \vartriangleright_\lambda b)} (a \circ_{\lambda} b) \Big)=a \circ_{\lambda} b,$$
which implies that   \eqref{braided-com} holds. Therefore,  $(\GGG_\rhd,R_G)$ is a braided dynamical group and $R_G$ is a solution of the dynamical Yang-Baxter equation on the dynamical set $\GGG$ by Theorem \ref{braided-solution}.

Let $\Psi$ be a homomorphism  from $(\GGG,\{\cdot_\lambda\}_{\lambda\in \Lambda},\{\vartriangleright_\lambda\}_{\lambda\in \Lambda})$ to $(\HHH,  \{\cdot'_\lambda\}_{\lambda\in \Lambda},\{\vartriangleright'_\lambda\}_{\lambda\in \Lambda})$. By Proposition \ref{homo-sub-d-group}, $\Psi$ is a homomorphism of the corresponding sub-adjacent dynamical groups from $(\GGG,\{\circ_\lambda\}_{\lambda\in \Lambda})$ to $(\HHH, \{\circ'_\lambda\}_{\lambda\in \Lambda})$. Moreover, for all $\lambda\in \Lambda,a,b\in G$, we have
{\small
\begin{eqnarray*}
&&(\Psi\times \Psi)_{\lambda} \circ \sigma_{\lambda} (a,b)\\
&=& (\Psi\times \Psi)_{\lambda} \Big(a \vartriangleright_\lambda b,\overline{ a \vartriangleright_\lambda b}^{\lambda} \circ_{\phi_G(\lambda,a \vartriangleright_\lambda b)}      (a \circ_{\lambda} b) \Big)\\
&=& \Big( \Psi_{\lambda}(a \vartriangleright_\lambda b),\Psi_{\phi_G(\lambda,a \vartriangleright_\lambda b )} \Big( \overline{a \vartriangleright_\lambda b}^{\lambda} \circ_{\phi_G(\lambda,a \vartriangleright_\lambda b)} (a\circ_{\lambda} b) \Big)  \Big)\\
&\overset{\eqref{homo-d-group}\eqref{homo-d-post-group2}}{=}& \Big( \Psi_{\lambda}(a) \vartriangleright'_\lambda  \Psi_{\phi_G(\lambda,a)}(b),
\Psi_{\phi_G(\lambda,a \vartriangleright_\lambda b )}(\overline{a \vartriangleright_\lambda b}^{\lambda}) \circ'_{\phi_{H}(\lambda,\Psi_{\lambda}(a \vartriangleright_\lambda b))} (\Psi_{\lambda}(a \circ_{\lambda} b)) \Big)\\
&=& \Big( \Psi_{\lambda}(a) \vartriangleright'_\lambda  \Psi_{\phi_G(\lambda,a)}(b),\overline{\Psi_{\lambda}(a) \vartriangleright'_\lambda  \Psi_{\phi_G(\lambda,a)}(b)}^{\lambda} \circ'_{\phi_{H}(\lambda,\Psi_{\lambda}(a)\vartriangleright'_\lambda \Psi_{\phi_G(\lambda,a)}(b) )} (\Psi_{\lambda}(a) \circ'_{\lambda} \Psi_{\phi_G(\lambda,a)}(b) )\\
&=& \sigma'_{\lambda} (\Psi_{\lambda}(a),\Psi_{\phi_G(\lambda,a)}(b) )\\
&=& \sigma'_{\lambda} \circ (\Psi\times \Psi)_{\lambda} (a,b),
\end{eqnarray*}
}
which implies that $\Psi$ is a homomorphism  from $(\GGG_{\rhd},R_G)$ to $(\HHH_{\rhd'},R_H)$.
\end{proof}

\begin{cor}
Let $(\GGG,\{\cdot_\lambda\}_{\lambda\in \Lambda},\{\vartriangleright_\lambda\}_{\lambda\in \Lambda})$ be a dynamical post-group. Then
 $R_G$ is a solution of the dynamical Yang-Baxter equation on the dynamical set $\GGG=(G,\Lambda,\phi_G)$.
\end{cor}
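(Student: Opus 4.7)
The plan is to obtain the conclusion as an immediate consequence of the two preceding theorems in this section, chained together, with no new computation required. The corollary is essentially the composition of the passages
\[
\text{dynamical post-group} \;\longrightarrow\; \text{braided dynamical group} \;\longrightarrow\; \text{solution of DYBE},
\]
both of which have already been established.

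First, I would invoke Theorem~\ref{DPG-BDG}, which asserts that given a dynamical post-group $(\GGG,\{\cdot_\lambda\}_{\lambda\in\Lambda},\{\vartriangleright_\lambda\}_{\lambda\in\Lambda})$, the pair $(\GGG_{\rhd},R_G)$ is a braided dynamical group, where $\GGG_{\rhd}$ is the sub-adjacent dynamical group constructed in Theorem~\ref{sub-adj-d-group} and $R_G$ is the map defined explicitly by formula~\eqref{RG}. Since this has already been proved, there is nothing to redo here; I would simply cite it.

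Next, I would apply Theorem~\ref{braided-solution} to the braided dynamical group $(\GGG_{\rhd},R_G)$. That theorem says precisely that the braiding of a braided dynamical group is a non-degenerate solution of the dynamical Yang-Baxter equation on the underlying dynamical set. Applied to our situation, it yields that $R_G$ is a (non-degenerate) solution of the DYBE on the dynamical set $\GGG=(G,\Lambda,\phi_G)$. Since the dynamical set underlying $\GGG_{\rhd}$ is the same as $\GGG$ (only the group operation has been modified, the structure map $\phi_G$ being shared), the conclusion lands on the correct dynamical set, and the corollary follows.

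There is no real obstacle: the entire content of this corollary has been packaged into Theorems~\ref{DPG-BDG} and~\ref{braided-solution}, so the proof is one line long. If anything, the only subtlety worth flagging explicitly is that the non-degeneracy in Theorem~\ref{braided-solution} gives a slightly stronger statement than what the corollary claims, which is worth remarking on for the reader's benefit.
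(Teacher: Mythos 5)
Your proposal is correct and matches the paper's proof exactly: the paper also derives this corollary by citing Theorem~\ref{DPG-BDG} to get the braided dynamical group $(\GGG_{\rhd},R_G)$ and then Theorem~\ref{braided-solution} to conclude that $R_G$ solves the dynamical Yang-Baxter equation on $(G,\Lambda,\phi_G)$. Your remark that non-degeneracy comes for free is a fair observation but adds nothing essential.
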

\begin{proof}
  It follows from Theorem \ref{braided-solution} and Theorem \ref{DPG-BDG} directly.
\end{proof}

Since a relative Rota-Baxter operator on a dynamical group naturally gives rise to a dynamical post-group, a relative Rota-Baxter operator on a dynamical group naturally provides a solution of the dynamical Yang-Baxter equation.

\begin{cor}
Let $\huaB:H \to G$ be a relative Rota-Baxter operator on a  dynamical group $(\GGG,\{\circ_\lambda\}_{\lambda\in \Lambda})$ with respect to an action $\{\Phi_\lambda:G \to \Perm(H)\}_{\lambda\in\Lambda}$. Then $R_{\huaB}:\Lambda \times H \times H
\to H \times H $ defined by
\vspace{-.2cm}
\begin{eqnarray}\label{rRBO-solution}
\quad R_{\huaB}(\lambda,x,y)=\Big(\Phi_{\lambda} (\huaB (x))(y),\overline{\Phi_{\lambda} (\huaB (x))(y)}^{\lambda} \circ^{\huaB}_{\phi_{\huaB}(\lambda,\Phi_{\lambda} (\huaB (x))(y))} (x \circ^{\huaB}_{\lambda} y) \Big), \quad \forall \lambda\in \Lambda,x,y\in H,
\end{eqnarray}
is a solution of the dynamical Yang-Baxter equation on the dynamical group $(H,\Lambda,\phi_{\huaB},\{\circ^{\huaB}_\lambda\}_{\lambda\in \Lambda})$ given in Proposition \ref{pro-descendant}.
\end{cor}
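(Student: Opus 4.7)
The plan is to chain together three previously established results rather than verify the Yang--Baxter identity by hand. The key observation is that the formula \eqref{rRBO-solution} defining $R_{\huaB}$ is precisely the formula \eqref{RG} that produces the braiding $R_G$ from a dynamical post-group, once we identify the dynamical post-group induced by the relative Rota-Baxter operator $\huaB$.

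First, I would apply Proposition \ref{rRBO-d-post-group} to extract from $\huaB$ a dynamical post-group structure on $H$, namely $(H,\Lambda,\phi_{\huaB},\{\cdot_\lambda\}_{\lambda\in\Lambda},\{\vartriangleright^{\huaB}_\lambda\}_{\lambda\in\Lambda})$, where $x\vartriangleright^{\huaB}_\lambda y=\Phi_\lambda(\huaB(x))(y)$. Proposition \ref{rRBO-d-post-group} moreover identifies the sub-adjacent dynamical group of this post-group with the descendant dynamical group $(H,\Lambda,\phi_{\huaB},\{\circ^{\huaB}_\lambda\}_{\lambda\in\Lambda})$ from Proposition \ref{pro-descendant}.

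Second, I would invoke Theorem \ref{DPG-BDG} to pass from this dynamical post-group to a braided dynamical group. Substituting the post-group data into formula \eqref{RG}, the resulting braiding $R_H:\Lambda\times H\times H\to H\times H$ takes the explicit form
\begin{equation*}
R_H(\lambda,x,y)=\Bigl(x\vartriangleright^{\huaB}_\lambda y,\ \overline{x\vartriangleright^{\huaB}_\lambda y}^{\lambda}\circ^{\huaB}_{\phi_{\huaB}(\lambda,x\vartriangleright^{\huaB}_\lambda y)}(x\circ^{\huaB}_\lambda y)\Bigr),
\end{equation*}
where the inverses and products are taken in the descendant dynamical group. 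Because $x\vartriangleright^{\huaB}_\lambda y=\Phi_\lambda(\huaB(x))(y)$, this is literally equation \eqref{rRBO-solution}; hence $R_{\huaB}=R_H$.

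Third, I would apply Theorem \ref{braided-solution} to the braided dynamical group $(\mathscr{H}_{\huaB},R_H)$, which yields that $R_H$, and therefore $R_{\huaB}$, is a (non-degenerate) solution of the dynamical Yang--Baxter equation on the dynamical set underlying $(H,\Lambda,\phi_{\huaB},\{\circ^{\huaB}_\lambda\}_{\lambda\in\Lambda})$. The only step that requires any care -- and is the mildest possible obstacle -- is the bookkeeping identification of the two formulae for the braiding; everything nontrivial has been absorbed into the preceding propositions and theorems.
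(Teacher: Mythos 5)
Your proposal is correct and is essentially the paper's own argument: the paper derives this corollary by exactly the chain Proposition \ref{rRBO-d-post-group} $\to$ Theorem \ref{DPG-BDG} $\to$ Theorem \ref{braided-solution}, with formula \eqref{rRBO-solution} being the specialization of \eqref{RG} to the post-group $x\vartriangleright^{\huaB}_\lambda y=\Phi_\lambda(\huaB(x))(y)$ whose sub-adjacent group is the descendant dynamical group. Nothing is missing.
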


\vspace{-.3cm}
Let $(\GGG,\{\circ_\lambda\}_{\lambda\in \Lambda},R_G)$ be a braided dynamical group. Write
$R_G(a,b)=(a\overset{\lambda}{\rightharpoonup} b,a\overset{\phi_G(\lambda,a)}{\leftharpoonup\joinrel\relbar\joinrel\relbar\joinrel\relbar} b)$, and define $\vartriangleright_{\lambda}:G \times G \to G$ and $\cdot_{\lambda}:G \times G \to G$   by
\vspace{-.2cm}
$$  a \vartriangleright_{\lambda} b=a\overset{\lambda}{\rightharpoonup} b, \quad
 a \cdot_{\lambda} b=a \circ_{\lambda} ( \bar{a}^{\lambda} \overset{\phi_G(\lambda,a)}{\joinrel\relbar\joinrel\relbar\joinrel\rightharpoonup} b), $$
for all  $\lambda\in \Lambda$, $a,b\in G$. Here $\bar{a}^{\lambda}$ is the inverse of $a$ in the dynamical group $(G,\Lambda,\phi_{G},\{\circ_\lambda\}_{\lambda\in \Lambda})$.

\begin{pro}\label{BDG-DPG}
With above notations,   $(\GGG,\{\cdot_\lambda\}_{\lambda\in \Lambda},\{\vartriangleright_\lambda\}_{\lambda\in \Lambda})$ is a dynamical post-group.

Moreover, let $\Psi$ be a homomorphism of braided  dynamical groups from $(\GGG,\{\circ_\lambda\}_{\lambda\in \Lambda},R_G)$ to $(\HHH,\{\circ'_\lambda\}_{\lambda\in \Lambda},R_H)$. Then $\Psi$ is a homomorphism of the corresponding dynamical post-groups from $(\GGG,\{\cdot_\lambda\}_{\lambda\in \Lambda},\{\vartriangleright_\lambda\}_{\lambda\in \Lambda})$ to $(\HHH,  \{\cdot'_\lambda\}_{\lambda\in \Lambda},\{\vartriangleright'_\lambda\}_{\lambda\in \Lambda})$.
\end{pro}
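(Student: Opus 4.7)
The approach is to verify each condition in Definition~\ref{d-skew-brace} directly, using the matched-pair identities \eqref{mp-d-groups-1}--\eqref{mp-d-groups-7} (specialized to $\mathscr{H}=\mathscr{G}$ and $\cdot=\circ$) together with the braiding identity \eqref{braided-com}. A useful preliminary, obtained by feeding $(\bar a^\lambda,a)$ into \eqref{mp-d-groups-2} at parameter $\phi_G(\lambda,a)$ and then invoking \eqref{mp-d-groups-1}, is the cancellation identity $\bar a^\lambda\overset{\phi_G(\lambda,a)}{\rightharpoonup}(a\overset{\lambda}{\rightharpoonup} b)=b$. This immediately implies $a\cdot_\lambda(a\vartriangleright_\lambda b)=a\circ_\lambda b$, a key relation indicating that the sub-adjacent dynamical group of the dynamical post-group we are constructing recovers the original $(\GGG,\circ)$.

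First I would check that $(G,\{\cdot_\lambda\})$ is a constant dynamical group. The identity $a\overset{\lambda}{\rightharpoonup} e_G=e_G$ is deduced from \eqref{mp-d-groups-3} with $x=y=e_G$ (using \eqref{mp-d-groups-4}), from which the unit axioms $e_G\cdot_\lambda b=b$ and $a\cdot_\lambda e_G=a$ follow via \eqref{mp-d-groups-1}. For associativity $(a\cdot_\lambda b)\cdot_\lambda c=a\cdot_\lambda(b\cdot_\lambda c)$, I would unfold both sides, repeatedly apply the cancellation identity, \eqref{mp-d-groups-2}, the associativity of $\circ_\lambda$, and the braiding \eqref{braided-com}, reducing them to a common form. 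Inverses in $(G,\cdot_\lambda)$ are given explicitly by $a\overset{\lambda}{\rightharpoonup}\bar a^\lambda$, verified by direct computation using the cancellation identity.

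Next I would verify the two compatibility conditions. For \eqref{d-post-group2}, the cancellation identity collapses $a\cdot_\lambda(a\vartriangleright_\lambda b)$ to $a\circ_\lambda b$, so that \eqref{d-post-group2} becomes $(a\circ_\lambda b)\overset{\lambda}{\rightharpoonup} c=a\overset{\lambda}{\rightharpoonup}(b\overset{\phi_G(\lambda,a)}{\rightharpoonup} c)$, which is precisely \eqref{mp-d-groups-2}. For \eqref{d-post-group1}, I would expand $b\cdot_{\phi_G(\lambda,a)} c$ via the definition of $\cdot$, apply \eqref{mp-d-groups-3} to the outer $a\overset{\lambda}{\rightharpoonup}(-)$, combine the nested $\rightharpoonup$-actions via \eqref{mp-d-groups-2} (using the weight-zero relation \eqref{mp-d-groups-7} to match structure-map arguments), and rewrite the resulting $a\overset{\phi_G(\lambda,a)}{\leftharpoonup} b$ term via the braiding \eqref{braided-com}; the result matches $(a\vartriangleright_\lambda b)\cdot_\lambda(a\vartriangleright_\lambda c)$. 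The invertibility of $L^{\vartriangleright_\lambda}_a$ is part of the non-degeneracy of $R_G$ afforded by Theorem~\ref{braided-solution}.

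For the homomorphism assertion, unpacking \eqref{homo-braided-d-groups} componentwise yields the $\vartriangleright$-compatibility of $\Psi$; combining this with \eqref{homo-d-group} and the morphism-of-dynamical-sets relation $\phi_H(\lambda,\Psi_\lambda(a))=\phi_G(\lambda,a)$, and applying the result to the definition of $\cdot_\lambda$, yields the $\cdot$-compatibility. The main obstacle will be keeping the structure-map shifts $\phi_G(\lambda,a)$, $\phi_G(\phi_G(\lambda,a),b)=\phi_G(\lambda,a\circ_\lambda b)$, and the weight-zero relation \eqref{mp-d-groups-7} correctly aligned throughout the associativity proof for $\cdot_\lambda$ and in the derivation of \eqref{d-post-group1}; the bookkeeping is purely mechanical but delicate enough that the cancellation identity should be set up early and the parameter on each occurrence of $\rightharpoonup$ tracked explicitly.
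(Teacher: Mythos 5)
Your proposal is correct and follows essentially the same route as the paper's proof: it verifies the constant dynamical group axioms for $\cdot_\lambda$ and the two compatibility conditions \eqref{d-post-group1}--\eqref{d-post-group2} directly from the matched-pair identities \eqref{mp-d-groups-1}--\eqref{mp-d-groups-7} and the braiding \eqref{braided-com}, with \eqref{d-post-group2} reducing to \eqref{mp-d-groups-2} via $a\cdot_\lambda(a\vartriangleright_\lambda b)=a\circ_\lambda b$, and the homomorphism claim following from the componentwise form of \eqref{homo-braided-d-groups} combined with \eqref{homo-d-group}. Your explicit cancellation identity $\bar a^\lambda\overset{\phi_G(\lambda,a)}{\relbar\joinrel\relbar\joinrel\rightharpoonup}(a\overset{\lambda}{\rightharpoonup}b)=b$ is only used implicitly in the paper, and your appeal to the non-degeneracy from Theorem~\ref{braided-solution} for the invertibility of $L^{\rhd_\lambda}_a$ makes explicit a point the paper leaves tacit, but these are organizational rather than substantive differences.
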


\begin{proof}
For all $\lambda \in \Lambda,a,b,c\in G$, we  have
\vspace{-.2cm}
{\small
\begin{eqnarray*}
a\cdot_{\lambda} (b \cdot_{\lambda} c)
&=& a \circ_{\lambda} (\bar{a}^{\lambda}  \overset{\phi_G(\lambda,a)}{\relbar\joinrel\relbar\joinrel\relbar\joinrel\rightharpoonup} (b \cdot_{\lambda} c) )\\
&=& a \circ_{\lambda} \Big( (\bar{a}^{\lambda}  \overset{\phi_G(\lambda,a)}{\relbar\joinrel\relbar\joinrel\relbar\joinrel\rightharpoonup} b)
\cdot_{\phi_G(\lambda,a)} (\bar{a}^{\lambda}  \overset{\phi_G(\lambda,a)}{\relbar\joinrel\relbar\joinrel\relbar\joinrel\rightharpoonup} c)\Big)\\
&=& a \circ_{\lambda} \Big( (\bar{a}^{\lambda}  \overset{\phi_G(\lambda,a)}{\relbar\joinrel\relbar\joinrel\relbar\joinrel\rightharpoonup} b)
\circ_{\phi_G(\lambda,a)}  \Big(\overline{(\bar{a}^{\lambda}  \overset{\phi_G(\lambda,a)}{\relbar\joinrel\relbar\joinrel\relbar\joinrel\rightharpoonup} b)}^{\phi_G(\lambda,a)}\overset{\phi_G(\lambda,a \cdot_{\lambda} b )}
{\relbar\joinrel\relbar\joinrel\relbar\joinrel\rightharpoonup}(\bar{a}^{\lambda}  \overset{\phi_G(\lambda,a)}{\relbar\joinrel\relbar\joinrel\relbar\joinrel\rightharpoonup} c)\Big)\Big)\\
&=& \Big(a \circ_{\lambda} ( \bar{a}^{\lambda} \overset{\phi_G(\lambda,a)}{\joinrel\relbar\joinrel\relbar\joinrel\rightharpoonup} b)\Big)\circ_{\lambda} \Big( \Big(\overline{(\bar{a}^{\lambda}  \overset{\phi_G(\lambda,a)}{\relbar\joinrel\relbar\joinrel\relbar\joinrel\rightharpoonup} b)}^{\phi_G(\lambda,a)} \circ_{\phi_G(\lambda,a \cdot_{\lambda} b )}  \bar{a}^{\lambda} \Big)  \overset{\phi_G(\lambda,a \cdot_{\lambda} b )}{\relbar\joinrel\relbar\joinrel\relbar\joinrel\rightharpoonup} c  \Big)\\
&=& (a \cdot_{\lambda} b) \circ_{\lambda} \Big(  \overline{(a \cdot_{\lambda} b)}^{\lambda} \overset{\phi_G(\lambda,a \cdot_{\lambda} b )}{\relbar\joinrel\relbar\joinrel\relbar\joinrel\rightharpoonup} c  \Big)\\                             &=& (a\cdot_{\lambda} b) \cdot_{\lambda} c.
\end{eqnarray*}
}
\vspace{-.2cm}
By \eqref{mp-d-groups-4} and \eqref{mp-d-groups-5}, we can deduce that $a\overset{\lambda}{\rightharpoonup} e_G= e_G$ for all $\lambda\in \Lambda, a\in G$. Then we have
$$ a\cdot_{\lambda} e_G=a \circ_{\lambda}( \bar{a}^{\lambda} \overset{\phi_G(\lambda,a)}{\joinrel\relbar\joinrel\relbar\joinrel\rightharpoonup} e_G)=a, $$
\vspace{-.3cm}
$$ a \cdot_{\lambda}( a \overset{\lambda}{\rightharpoonup} \bar{a}^{\lambda})=
a \circ_{\lambda}\Big( \bar{a}^{\lambda} \overset{\phi_G(\lambda,a)}{\joinrel\relbar\joinrel\relbar\joinrel\rightharpoonup}(a \overset{\lambda}{\rightharpoonup} \bar{a}^{\lambda})\Big)\overset{\eqref{mp-d-groups-2}}{=}
a \circ_{\lambda} \Big((\bar{a}^{\lambda} \circ_{\phi_G(\lambda,a)} a)\overset{\phi_G(\lambda,a)}{\joinrel\relbar\joinrel\relbar\joinrel\rightharpoonup} \bar{a}^{\lambda} \Big)\overset{\eqref{mp-d-groups-1}}{=} a \circ_{\lambda} e_G=a. $$
Thus, $e_G$ is a right unit of the multiplication $\cdot_{\lambda}$ and $a \overset{\lambda}{\rightharpoonup} \bar{a}^{\lambda}$ is a right inverse of $a$ with respect to $\cdot_{\lambda}$ for all $\lambda \in \Lambda$.
Therefore,   $(G, \Lambda,\{\cdot_\lambda\}_{\lambda\in \Lambda})$ is a constant dynamical group.

Furthermore, for all $a,b,c\in G$, we have
\vspace{-.2cm}
{\small
\begin{eqnarray*}
a \vartriangleright_\lambda (b \cdot_{\phi_G(\lambda,a)} c)&=& a\overset{\lambda}{\rightharpoonup} (b \cdot_{\phi_G(\lambda,a)} c)\\
&=& a\overset{\lambda}{\rightharpoonup} \Big(b \circ_{\phi_G(\lambda,a)} (\bar{b}^{\phi_G(\lambda,a)}  \overset{\phi_G(\phi_G(\lambda,a),b)}{\joinrel\relbar\joinrel\relbar\joinrel\relbar\joinrel\relbar
\joinrel\rightharpoonup}  c) \Big)\\
&\overset{\eqref{mp-d-groups-3}}{=}& (a\overset{\lambda}{\rightharpoonup} b) \circ_{\lambda}
\Big( (a\overset{\phi_G(\lambda,a)}{\leftharpoonup\joinrel\relbar\joinrel\relbar\joinrel\relbar} b) \overset{\phi_G(\lambda,a\overset{\lambda}{\rightharpoonup} b)} {\relbar\joinrel\relbar\joinrel\relbar\joinrel\relbar\joinrel\rightharpoonup}  (\bar{b}^{\phi_G(\lambda,a)}  \overset{\phi_G(\lambda,a \circ_{\lambda}
b)}{\joinrel\relbar\joinrel\relbar\joinrel\relbar\joinrel\relbar
\joinrel\rightharpoonup}  c)  \Big)\\
&\overset{\eqref{braided-com}}{=}& (a\overset{\lambda}{\rightharpoonup} b) \circ_{\lambda}
\Big( (a\overset{\phi_G(\lambda,a)}{\leftharpoonup\joinrel\relbar\joinrel\relbar\joinrel\relbar} b) \overset{\phi_G(\lambda,a\overset{\lambda}{\rightharpoonup} b)} {\relbar\joinrel\relbar\joinrel\relbar\joinrel\relbar\joinrel\rightharpoonup}  (\bar{b}^{\phi_G(\lambda,a)}  \overset{\phi_G (\phi_G(\lambda,a\overset{\lambda}{\rightharpoonup} b),a\overset{\phi_G(\lambda,a)}{\leftharpoonup\joinrel\relbar\joinrel\relbar\joinrel\relbar} b)}{\joinrel\relbar\joinrel\relbar\joinrel\relbar\joinrel\relbar\joinrel\relbar\joinrel\relbar
\joinrel\relbar\joinrel\relbar\joinrel\relbar\joinrel\relbar\joinrel\relbar\joinrel\relbar\joinrel
\rightharpoonup}  c)  \Big)\\
&\overset{\eqref{mp-d-groups-2}}{=}& (a\overset{\lambda}{\rightharpoonup} b) \circ_{\lambda}
\Big( ((a\overset{\phi_G(\lambda,a)}{\leftharpoonup\joinrel\relbar\joinrel\relbar\joinrel\relbar} b) \circ_{\phi_G(\lambda,a\overset{\lambda}{\rightharpoonup} b)} \bar{b}^{\phi_G(\lambda,a)}) \overset{\phi_G(\lambda,a\overset{\lambda}{\rightharpoonup} b)} {\relbar\joinrel\relbar\joinrel\relbar\joinrel\relbar\joinrel\rightharpoonup} c   \Big)\\
&=& (a\overset{\lambda}{\rightharpoonup} b) \circ_{\lambda} \Big( \overline{(a\overset{\lambda}{\rightharpoonup} b)}^{\lambda}  \overset{\phi_G(\lambda,a\overset{\lambda}{\rightharpoonup} b)} {\relbar\joinrel\relbar\joinrel\relbar\joinrel\relbar\joinrel\rightharpoonup} c    \Big)\\
&=& (a\overset{\lambda}{\rightharpoonup} b) \cdot_{\lambda} (a\overset{\lambda}{\rightharpoonup} c)\\
&=& (a \vartriangleright_\lambda b) \cdot_{\lambda} (a \vartriangleright_\lambda c).
\end{eqnarray*}
}
Thus, we obtain that $a \vartriangleright_\lambda (b \cdot_{\phi_G(\lambda,a)} c)=(a \vartriangleright_\lambda b) \cdot_{\lambda} (a \vartriangleright_\lambda c).$

Moreover, for all  $\lambda \in \Lambda$ and $a,b,c\in G$, we have
\vspace{-.3cm}
\begin{eqnarray*}
a \vartriangleright_\lambda (b \vartriangleright_{\phi_G(\lambda,a)} c)&=&
a \overset{\lambda}{\rightharpoonup} (b\overset{\phi_G(\lambda,a)}{\relbar\joinrel\relbar\joinrel\relbar\joinrel\rightharpoonup}   c)\\
&\overset{\eqref{mp-d-groups-2}}{=}&  (a \circ_{\lambda} b) \overset{\lambda}{\rightharpoonup} c\\
&=& (a \circ_{\lambda} b) \vartriangleright_\lambda c\\
&=& (a \cdot_{\lambda} (a \vartriangleright_\lambda b)) \vartriangleright_\lambda c,
\end{eqnarray*}
which implies that $(\GGG,\{\cdot_\lambda\}_{\lambda\in \Lambda},\{\vartriangleright_\lambda\}_{\lambda\in \Lambda})$ is a dynamical post-group.

Let $\Psi$ be a homomorphism of braided  dynamical groups from $(\GGG,\{\circ_\lambda\}_{\lambda\in \Lambda},R_G)$ to $(\HHH,\{\circ'_\lambda\}_{\lambda\in \Lambda},R_H)$. Then we have
\vspace{-.3cm}
$$ (\Psi \times \Psi)_{\lambda} \circ \sigma_{\lambda}=\sigma'_{\lambda} \circ (\Psi \times \Psi)_{\lambda}.$$
That means that for all $\lambda\in \Lambda$ and $a,b\in G$, we have
\vspace{-.3cm}
\begin{eqnarray}
\label{homo-1} \Psi_{\lambda} (a \overset{\lambda}{\rightharpoonup} b)&=&\Psi_{\lambda} (a) \overset{\lambda}{\rightharpoonup} \Psi_{\phi_G(\lambda,a)} (b),\\
\label{homo-2} \Psi_{\phi_G(\lambda,a\overset{\lambda}{\rightharpoonup} b)} (a
\overset{\phi_G(\lambda,a)}{\leftharpoonup\joinrel\relbar\joinrel\relbar\joinrel\relbar} b)&=&
\Psi_{\lambda} (a) \overset{\phi_G(\lambda,\Psi_{\lambda} (a))}{\leftharpoonup\joinrel\relbar\joinrel\relbar\joinrel\relbar}\Psi_{\phi_G(\lambda,a)} (b).
\end{eqnarray}
Therefore, for all  $\lambda \in \Lambda$ and $a,b,c\in G$, we have
\vspace{-.3cm}
\begin{eqnarray*}
\Psi_{\lambda}(a \vartriangleright_\lambda b)= \Psi_{\lambda}(a \overset{\lambda}{\rightharpoonup} b)
\overset{\eqref{homo-1}}{=} \Psi_{\lambda}(a) \overset{\lambda}{\rightharpoonup} \Psi_{\phi_G(\lambda,a)}(b)
= \Psi_{\lambda}(a) \vartriangleright'_\lambda \Psi_{\phi_G(\lambda,a)}(b),
\end{eqnarray*}
and
\vspace{-.3cm}
{\small
\begin{eqnarray*}
\Psi_{\lambda}(a \cdot_\lambda b)&=& \Psi_{\lambda}(a \circ_{\lambda} (\bar{a}^{\lambda}  \overset{\phi_G(\lambda,a)}{\relbar\joinrel\relbar\joinrel\relbar\joinrel\rightharpoonup} b) )\\
&=& \Psi_{\lambda}(a) \circ'_{\lambda}  \Psi_{\phi_G(\lambda,a)}(\bar{a}^{\lambda}  \overset{\phi_G(\lambda,a)}{\relbar\joinrel\relbar\joinrel\relbar\joinrel\rightharpoonup} b)\\
&\overset{\eqref{homo-1}}{=}& \Psi_{\lambda}(a) \circ'_{\lambda} \Big( \Psi_{\phi_G(\lambda,a)}(\bar{a}^{\lambda})\overset{\phi_G(\lambda,a)}
{\relbar\joinrel\relbar\joinrel\relbar\joinrel\rightharpoonup}   \Psi_{\phi_G(\phi_G(\lambda,a),\bar{a}^{\lambda})}(b)   \Big)\\
&\overset{\eqref{homo-d-group}}{=}& \Psi_{\lambda}(a) \circ'_{\lambda} \Big(    \overline{\Psi_{\lambda}(a)}^{\lambda} \overset{\phi_H(\lambda,\Psi_{\lambda}(a))}{\relbar\joinrel\relbar\joinrel\relbar\joinrel\relbar
\joinrel\rightharpoonup} \Psi_{\lambda}(b) \Big)\\
&=& \Psi_{\lambda}(a) \cdot'_{\lambda} \Psi_{\lambda}(b),
\end{eqnarray*}
}
which implies that $\Psi$ is a homomorphism of the corresponding dynamical post-groups from $(\GGG,\{\cdot_\lambda\}_{\lambda\in \Lambda},\{\vartriangleright_\lambda\}_{\lambda\in \Lambda})$ to $(\HHH,  \{\cdot'_\lambda\}_{\lambda\in \Lambda},\{\vartriangleright'_\lambda\}_{\lambda\in \Lambda})$.
\end{proof}
\vspace{-.3cm}

\begin{thm}
Theorem \ref{DPG-BDG} gives a functor $\huaY:{\bf DPG} \to {\bf BDG}$ and Proposition \ref{BDG-DPG} defines a functor $\huaP:{\bf BDG} \to {\bf DPG}$ that is the inverse of $\huaY$, giving an isomorphism between the categories ${\bf DPG}$ and ${\bf BDG}$.
\end{thm}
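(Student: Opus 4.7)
The plan is to verify that $\huaY$ and $\huaP$ are functors and that they are mutually inverse. Functoriality on morphisms is essentially immediate: both constructions act as the identity on the underlying set-level morphism $\Psi:\Lambda\times G\to H$, only reinterpreting it as an arrow in the target category. Combined with Theorem~\ref{DPG-BDG} and Proposition~\ref{BDG-DPG} (which already guarantee that an arrow on one side is an arrow on the other), this yields preservation of identities and of composition. Hence the only nontrivial task is to check that $\huaP\circ\huaY=\Id_{\bf DPG}$ and $\huaY\circ\huaP=\Id_{\bf BDG}$ on objects.

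For $\huaP\circ\huaY=\Id_{\bf DPG}$, starting from a dynamical post-group $(\mathscr{G},\{\cdot_\lambda\},\{\vartriangleright_\lambda\})$, the functor $\huaY$ produces the sub-adjacent product $a\circ_\lambda b=a\cdot_\lambda(a\vartriangleright_\lambda b)$ together with $R_G$, whose first component by \eqref{RG} is $a\overset{\lambda}{\rightharpoonup} b=a\vartriangleright_\lambda b$. Thus $\huaP$ recovers $\vartriangleright'_\lambda=\vartriangleright_\lambda$. For the recovered addition I expand
$$a\cdot'_\lambda b=a\circ_\lambda(\bar a^\lambda\overset{\phi_G(\lambda,a)}{\rightharpoonup} b)=a\cdot_\lambda\bigl(a\vartriangleright_\lambda(\bar a^\lambda\vartriangleright_{\phi_G(\lambda,a)} b)\bigr),$$
then apply \eqref{d-post-group2} to rewrite the inner expression as $(a\cdot_\lambda(a\vartriangleright_\lambda\bar a^\lambda))\vartriangleright_\lambda b=(a\circ_\lambda\bar a^\lambda)\vartriangleright_\lambda b=e_G\vartriangleright_\lambda b=b$ via \eqref{unit-d-post-group2}, yielding $a\cdot'_\lambda b=a\cdot_\lambda b$.

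For $\huaY\circ\huaP=\Id_{\bf BDG}$, starting from $(\mathscr{G},\circ,R_G)$ the $\huaP$-produced operations satisfy $a\vartriangleright_\lambda b=a\overset{\lambda}{\rightharpoonup} b$ and $a\cdot_\lambda b=a\circ_\lambda(\bar a^\lambda\overset{\phi_G(\lambda,a)}{\rightharpoonup} b)$. The sub-adjacent product returned by $\huaY$ is
$$a\circ'_\lambda b=a\cdot_\lambda(a\vartriangleright_\lambda b)=a\circ_\lambda\bigl(\bar a^\lambda\overset{\phi_G(\lambda,a)}{\rightharpoonup}(a\overset{\lambda}{\rightharpoonup} b)\bigr),$$
which by \eqref{mp-d-groups-2} (applied with indices $\bar a^\lambda,a$, and using $\phi_G(\phi_G(\lambda,a),\bar a^\lambda)=\lambda$) collapses to $a\circ_\lambda\bigl((\bar a^\lambda\circ_{\phi_G(\lambda,a)} a)\overset{\phi_G(\lambda,a)}{\rightharpoonup} b\bigr)=a\circ_\lambda(e_G\overset{\phi_G(\lambda,a)}{\rightharpoonup} b)=a\circ_\lambda b$ via \eqref{mp-d-groups-1}. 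The first component of the recovered braiding matches by construction; for the second component I invoke the braided compatibility \eqref{braided-com} together with left cancellation in the dynamical group (i.e.\ $\bar x^\lambda\circ_{\phi_G(\lambda,x)}(x\circ_\lambda y)=y$) to conclude $a\overset{\phi_G(\lambda,a)}{\leftharpoonup} b=\overline{(a\overset{\lambda}{\rightharpoonup} b)}^\lambda\circ_{\phi_G(\lambda,a\overset{\lambda}{\rightharpoonup} b)}(a\circ_\lambda b)$, which is exactly the second component produced by \eqref{RG}.

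The main obstacle is purely symbolic bookkeeping: every intermediate expression carries a dynamical parameter that shifts under $\phi_G$, so one must track carefully that the parameters align at each step (for example, the identity $\phi_G(\phi_G(\lambda,a),\bar a^\lambda)=\lambda$ has to be invoked tacitly to match indices). No new structural ingredient is required beyond the matched-pair axioms, the braided compatibility \eqref{braided-com}, and the post-group compatibilities \eqref{d-post-group1}--\eqref{d-post-group2} already recorded in this section.
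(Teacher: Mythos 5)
Your proposal is correct and follows the same route as the paper: the only content is that the two round-trip composites are the identity on objects and morphisms, the morphism part being trivial since both functors keep the underlying map $\Psi$ unchanged. The paper's own proof merely asserts $(\huaP\huaY)(\GGG,\{\cdot_\lambda\},\{\vartriangleright_\lambda\})=(\GGG,\{\cdot_\lambda\},\{\vartriangleright_\lambda\})$ and its counterpart without computation, whereas you supply the explicit verifications (recovering $\cdot_\lambda$ via \eqref{d-post-group2} and \eqref{unit-d-post-group2}, recovering $\circ_\lambda$ via \eqref{mp-d-groups-1}--\eqref{mp-d-groups-2}, and recovering the second component of the braiding via \eqref{braided-com} and cancellation), all of which check out.
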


\begin{proof}
Let $(\GGG,\{\cdot_\lambda\}_{\lambda\in \Lambda},\{\vartriangleright_\lambda\}_{\lambda\in \Lambda})$ and $(\HHH,  \{\cdot'_\lambda\}_{\lambda\in \Lambda},\{\vartriangleright'_\lambda\}_{\lambda\in \Lambda})$ be two dynamical post-groups and $\Psi$ be a homomorphism of dynamical post-groups from
$(\GGG,\{\cdot_\lambda\}_{\lambda\in \Lambda},\{\vartriangleright_\lambda\}_{\lambda\in \Lambda})$ to $(\HHH,  \{\cdot'_\lambda\}_{\lambda\in \Lambda},\{\vartriangleright'_\lambda\}_{\lambda\in \Lambda})$. By Theorem \ref{DPG-BDG} and Proposition \ref{BDG-DPG}, we have
$$ (\huaP\huaY)(\GGG,\{\cdot_\lambda\}_{\lambda\in \Lambda},\{\vartriangleright_\lambda\}_{\lambda\in \Lambda})=(\GGG,\{\cdot_\lambda\}_{\lambda\in \Lambda},\{\vartriangleright_\lambda\}_{\lambda\in \Lambda}),\quad (\huaP\huaY)(\Psi)=\Psi.$$
Thus, $\huaP\huaY=\Id_{\bf DPG}$. Similarly, we also have $\huaY\huaP=\Id_{\bf BDG}$. Therefore,
the functor $\huaY$ is the inverse of the functor $\huaP$.
\end{proof}
\vspace{-.3cm}

\section{Dynamical skew-braces}\label{sec:d-skew-brace}
We first introduce the notion of a dynamical skew brace. Then we show that the category of dynamical skew braces is isomorphic to the category of dynamical post-groups, and therefore isomorphic to the category of braided dynamical groups.

In this section, as before, $\mathscr{G}=(G,\Lambda,\phi_G)$ is a dynamical set.

\begin{defi} A triple
$(\GGG,\{\cdot_\lambda\}_{\lambda\in \Lambda},\{\circ_\lambda\}_{\lambda\in \Lambda})$ is called a {\bf dynamical skew brace}, if
\begin{itemize}
\item[{\rm(i)}] $(G, \Lambda,\{\cdot_\lambda\}_{\lambda\in \Lambda})$ is a constant dynamical group;

\item[{\rm(ii)}] $(\GGG,\{\circ_\lambda\}_{\lambda\in \Lambda})$ is a dynamical group;

\item[{\rm(iii)}] $\cdot$ and $\circ$ satisfy the following compatibility condition:
\begin{equation}\label{d-skew-brace-equ}
  a \circ_{\lambda} (b \cdot_{\phi_{G}(\lambda,a)} c)=(a \circ_{\lambda} b)\cdot_{\lambda} a^{\lambda} \cdot_{\lambda} (a \circ_{\lambda} c), \quad \forall a,b,c\in G.
\end{equation}
\end{itemize}
A {\bf dynamical brace} is a dynamical skew brace $(\GGG,\{\cdot_\lambda\}_{\lambda\in \Lambda},\{\circ_\lambda\}_{\lambda\in \Lambda})$ in which all groups $(G,\cdot_\lambda)$ are abelian for all $\lambda\in \Lambda.$
\end{defi}

\begin{rmk}
This notion of a dynamical skew brace is different from the one given in \cite[Definition 4.9]{Ferri}. First, we generalize a group $(G,\cdot)$ used in \cite[Definition 4.9]{Ferri} to a constant dynamical group $(G, \Lambda,\{\cdot_\lambda\}_{\lambda\in \Lambda})$. Second, we use the dynamical group, in which there is a unit $e_G$ satisfying $a \circ_\lambda e_G=e_G \circ_\lambda a=a$, rather than the left quasi-group used in \cite{Ferri,Matsumoto}. Note that the condition $a \circ_\lambda e_G=e_G \circ_\lambda a=a$ is called the zero-symmetric condition in \cite[Definition 3.4]{Matsumoto}, which is    equivalent to the condition $\phi_G(\lambda,e_G)=\lambda$ (\cite[Definition 4]{Rump}).
\end{rmk}

\begin{lem}
Let $(\GGG,\{\cdot_\lambda\}_{\lambda\in \Lambda},\{\circ_\lambda\}_{\lambda\in \Lambda})$ be a dynamical skew brace. Then for all $\lambda\in \Lambda, a,b\in G$, we have
\vspace{-.3cm}
\begin{eqnarray}\label{equ-d-brace}
(a \circ_{\lambda} b)^{\lambda}= a^{\lambda} \cdot_{\lambda} (a \circ_{\lambda} b^{\phi_G(\lambda,a)}) \cdot_{\lambda} a^{\lambda}.
\end{eqnarray}
\end{lem}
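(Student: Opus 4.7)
The plan is to show that the right-hand side $a^{\lambda} \cdot_{\lambda} (a \circ_{\lambda} b^{\phi_G(\lambda,a)}) \cdot_{\lambda} a^{\lambda}$ is the $\cdot_\lambda$-inverse of $a \circ_\lambda b$; then uniqueness of inverses in the constant dynamical group $(G,\Lambda,\{\cdot_\lambda\}_{\lambda\in\Lambda})$ forces equality with $(a\circ_\lambda b)^\lambda$.

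First I would instantiate the compatibility condition \eqref{d-skew-brace-equ} at the triple $(a,b,c)$ with $c := b^{\phi_G(\lambda,a)}$, the $\cdot_{\phi_G(\lambda,a)}$-inverse of $b$. This yields
\[
a \circ_{\lambda} \bigl( b \cdot_{\phi_G(\lambda,a)} b^{\phi_G(\lambda,a)} \bigr) = (a \circ_{\lambda} b)\cdot_{\lambda} a^{\lambda} \cdot_{\lambda} (a \circ_{\lambda} b^{\phi_G(\lambda,a)}).
\]
By the definition of the inverse in the group $(G,\cdot_{\phi_G(\lambda,a)})$ the bracket on the left collapses to $e_G$, and using $a\circ_\lambda e_G = a$ from the unit axiom (part (ii) of Definition \ref{dynamical-group}) the left-hand side is simply $a$.

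Next I would post-compose (on the right) with $\cdot_\lambda\, a^\lambda$, so that the $a$ on the left becomes $a\cdot_\lambda a^\lambda = e_G$. This gives
\[
e_G = (a\circ_\lambda b) \cdot_\lambda \bigl( a^\lambda \cdot_\lambda (a\circ_\lambda b^{\phi_G(\lambda,a)}) \cdot_\lambda a^\lambda \bigr),
\]
exhibiting $a^\lambda \cdot_\lambda (a\circ_\lambda b^{\phi_G(\lambda,a)}) \cdot_\lambda a^\lambda$ as a right $\cdot_\lambda$-inverse of $a\circ_\lambda b$. Since inverses in the group $(G,\cdot_\lambda)$ are unique, the desired equality \eqref{equ-d-brace} follows.

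There is no real obstacle here; the argument is essentially a one-line manipulation of the defining axiom. The only point that requires a moment's care is keeping track of which $\lambda$-indexed product one is using at each step — in particular, that $b^{\phi_G(\lambda,a)}$ is the inverse of $b$ in the group $\cdot_{\phi_G(\lambda,a)}$ rather than $\cdot_\lambda$, which is exactly the index that appears on the middle slot of \eqref{d-skew-brace-equ}. This compatibility of indices is what makes the substitution legal.
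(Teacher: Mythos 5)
Your proof is correct and is essentially the paper's own argument: both specialize the compatibility condition \eqref{d-skew-brace-equ} at $c=b^{\phi_G(\lambda,a)}$, collapse the left side to $a$ via the unit axiom, and conclude by uniqueness of inverses in the group $(G,\cdot_\lambda)$. The only difference is cosmetic — the paper computes the product $(a\circ_\lambda b)\cdot_\lambda a^\lambda\cdot_\lambda(a\circ_\lambda b^{\phi_G(\lambda,a)})\cdot_\lambda a^\lambda$ directly and simplifies it to $e_G$, whereas you start from the identity and multiply on the right by $a^\lambda$.
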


\begin{proof}
For all $\lambda\in \Lambda, a,b\in G$, by \eqref{d-skew-brace-equ} and the fact that $(G, \Lambda,\{\cdot_\lambda\}_{\lambda\in \Lambda})$ is a constant dynamical group, we have
\vspace{-.3cm}
\begin{eqnarray*}
(a \circ_{\lambda} b) \cdot_{\lambda} a^{\lambda} \cdot_{\lambda} (a \circ_{\lambda} b^{\phi_G(\lambda,a)}) \cdot_{\lambda} a^{\lambda}
&=& \Big( a \circ_{\lambda} (b  \cdot_{\phi_G(\lambda,a)} b^{\phi_G(\lambda,a)})
\Big)\cdot_{\lambda} a^{\lambda}\\
&=& (a  \circ_{\lambda} e_G)\cdot_{\lambda} a^{\lambda}= e_G\\
&=& (a \circ_{\lambda} b) \cdot_{\lambda} (a \circ_{\lambda} b)^{\lambda},
\end{eqnarray*}
which implies that $(a \circ_{\lambda} b)^{\lambda}= a^{\lambda} \cdot_{\lambda} (a \circ_{\lambda} b^{\phi_G(\lambda,a)}) \cdot_{\lambda} a^{\lambda}.$
\end{proof}

\begin{defi}
Let $(\GGG,\{\cdot_\lambda\}_{\lambda\in \Lambda},\{\circ_\lambda\}_{\lambda\in \Lambda})$ and
$(\HHH,\{\cdot'_\lambda\}_{\lambda\in \Lambda},\{\circ'_\lambda\}_{\lambda\in \Lambda})$ be two
dynamical skew braces. A {\bf homomorphism of dynamical skew braces} from
$(\GGG,\{\cdot_\lambda\}_{\lambda\in \Lambda},\{\circ_\lambda\}_{\lambda\in \Lambda})$ to $(\HHH,\{\cdot'_\lambda\}_{\lambda\in \Lambda},\{\circ'_\lambda\}_{\lambda\in \Lambda})$ is a morphism of dynamical sets $\Psi:\GGG \to \HHH$ such that for all $\lambda\in \Lambda, a,b\in G,$
\vspace{-.1cm}
\begin{eqnarray}
\label{homo-d-skew-brace2} \Psi_{\lambda}(a \cdot_\lambda b)&=& \Psi_{\lambda}(a)\cdot'_{\lambda}\Psi_{\lambda}(b),\quad \Psi_{\lambda}(a \circ_\lambda b)= \Psi_{\lambda}(a)\circ'_{\lambda}\Psi_{\phi_G(\lambda,a)}(b).
\end{eqnarray}
\end{defi}

Dynamical skew braces and their homomorphisms form a category {\bf DSB}.

By Theorem \ref{sub-adj-d-group}, a dynamical post-group
$(\GGG,\{\cdot_\lambda\}_{\lambda\in \Lambda},\{\vartriangleright_\lambda\}_{\lambda\in \Lambda})$ gives rise to the sub-adjacent dynamical group $(\GGG,\{\circ_\lambda\}_{\lambda\in \Lambda})$. Moreover, it also gives rise to a dynamical skew brace.

\begin{pro}\label{dpg-dsb}
Let $(\GGG,\{\cdot_\lambda\}_{\lambda\in \Lambda},\{\vartriangleright_\lambda\}_{\lambda\in \Lambda})$ be a dynamical post-group. Then   $(\GGG,\{\cdot_\lambda\}_{\lambda\in \Lambda},\{\circ_\lambda\}_{\lambda\in \Lambda})$ is a dynamical skew brace, where $\{\circ_\lambda\}_{\lambda\in \Lambda}$ is given by \eqref{sub-adj}.

Moreover, let $\Psi:\GGG\to\HHH$ be a homomorphism of dynamical post-groups from $(\GGG,\{\cdot_\lambda\}_{\lambda\in \Lambda},\{\vartriangleright_\lambda\}_{\lambda\in \Lambda})$ to $ (\HHH,\{\cdot'_\lambda\}_{\lambda\in \Lambda},\{\vartriangleright'_\lambda\}_{\lambda\in \Lambda})$. Then $\Psi$ induces a homomorphism of the corresponding dynamical skew braces from $(\GGG,\{\cdot_\lambda\}_{\lambda\in \Lambda},\{\circ_\lambda\}_{\lambda\in \Lambda})$ to $(\HHH,\{\cdot'_\lambda\}_{\lambda\in \Lambda},\{\circ'_\lambda\}_{\lambda\in \Lambda})$.
\end{pro}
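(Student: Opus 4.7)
The plan is to verify the three defining conditions of a dynamical skew brace in order, then transfer the morphism claim to the already-established compatibility with $\circ$.

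First, I would observe that conditions (i) and (ii) in the definition of a dynamical skew brace come essentially for free. Condition (i), namely that $(G,\Lambda,\{\cdot_\lambda\}_{\lambda\in\Lambda})$ is a constant dynamical group, is part of the data of a dynamical post-group by Definition \ref{d-skew-brace}. Condition (ii), that $(\GGG,\{\circ_\lambda\}_{\lambda\in\Lambda})$ is a dynamical group with $\circ_\lambda$ defined by \eqref{sub-adj}, is exactly the content of Theorem \ref{sub-adj-d-group}. So the only substantive thing to check is the compatibility condition \eqref{d-skew-brace-equ}.

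The compatibility is a short calculation: expand each $\circ_\lambda$ via \eqref{sub-adj}, use the left distributivity \eqref{d-post-group1} of $\vartriangleright_\lambda$ over $\cdot_{\phi_G(\lambda,a)}$, and then insert $a^\lambda \cdot_\lambda a = e_G$ to split the product. Concretely,
\begin{eqnarray*}
a\circ_\lambda (b \cdot_{\phi_G(\lambda,a)} c)
&=& a \cdot_\lambda \bigl(a \vartriangleright_\lambda (b \cdot_{\phi_G(\lambda,a)} c)\bigr)\\
&\overset{\eqref{d-post-group1}}{=}& a \cdot_\lambda (a \vartriangleright_\lambda b) \cdot_\lambda (a \vartriangleright_\lambda c)\\
&=& \bigl(a \cdot_\lambda (a \vartriangleright_\lambda b)\bigr) \cdot_\lambda a^\lambda \cdot_\lambda \bigl(a \cdot_\lambda (a \vartriangleright_\lambda c)\bigr)\\
&=& (a \circ_\lambda b) \cdot_\lambda a^\lambda \cdot_\lambda (a \circ_\lambda c),
\end{eqnarray*}
which is precisely \eqref{d-skew-brace-equ}.

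For the functorial part, let $\Psi:\GGG\to\HHH$ be a homomorphism of dynamical post-groups. Compatibility with $\cdot$ and $\cdot'$ is already asserted in \eqref{homo-d-post-group2}. Compatibility with $\circ$ and $\circ'$ is exactly the content of Proposition \ref{homo-sub-d-group}, which writes $\Psi_\lambda(a\circ_\lambda b)$ in terms of $\cdot'_\lambda$ and $\vartriangleright'_\lambda$ by first applying the $\cdot$-compatibility and then the $\vartriangleright$-compatibility from \eqref{homo-d-post-group2}. Since $\Psi$ is already a morphism of dynamical sets (so $\phi_H(\lambda,\Psi_\lambda(a))=\phi_G(\lambda,a)$), both requirements of \eqref{homo-d-skew-brace2} are verified and $\Psi$ is a homomorphism of dynamical skew braces. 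There is no real obstacle here; the proof is a direct unwinding of definitions, with the only mildly delicate step being the insertion of $a^\lambda\cdot_\lambda a=e_G$ to reshape the right-hand side into the required form.
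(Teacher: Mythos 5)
Your proposal is correct and follows essentially the same route as the paper: the compatibility condition \eqref{d-skew-brace-equ} is verified by expanding $\circ_\lambda$ via \eqref{sub-adj}, applying \eqref{d-post-group1}, and inserting $a^\lambda\cdot_\lambda a=e_G$, exactly as in the paper's proof, while conditions (i) and (ii) and the morphism statement are delegated to Theorem \ref{sub-adj-d-group}, \eqref{homo-d-post-group2} and Proposition \ref{homo-sub-d-group}. No gaps.
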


\begin{proof}
By the definition of the sub-adjacent dynamical group $(\GGG,\{\circ_\lambda\}_{\lambda\in \Lambda})$, we have
\vspace{-.2cm}
\begin{eqnarray*}
a \circ_{\lambda} (b \cdot_{\phi_G(\lambda,a)} c)
&=& a \cdot_{\lambda} (a \vartriangleright_\lambda (b \cdot_{\phi_G(\lambda,a)} c))\\
&\overset{\eqref{d-post-group1}}{=}& a \cdot_{\lambda} (a \vartriangleright_\lambda b) \cdot_{\lambda} (a \vartriangleright_\lambda c)\\
&=& a \cdot_{\lambda} (a \vartriangleright_\lambda b) \cdot_{\lambda}  (a^{\lambda} \cdot_{\lambda} a) \cdot_{\lambda} (a \vartriangleright_\lambda c)\\
&=& (a \circ_{\lambda} b)\cdot_{\lambda} a^{\lambda} \cdot_{\lambda} (a \circ_{\lambda} c), \quad \forall \lambda\in \Lambda, a,b,c\in G,
\end{eqnarray*}
which implies that $(\GGG,\{\cdot_\lambda\}_{\lambda\in \Lambda},\{\circ_\lambda\}_{\lambda\in \Lambda})$ is a dynamical skew brace.

Moreover, by Proposition \ref{homo-sub-d-group}, we can obviously deduce that $\Psi$ is a homomorphism of dynamical skew braces from $(\GGG,\{\cdot_\lambda\}_{\lambda\in \Lambda},\{\circ_\lambda\}_{\lambda\in \Lambda})$ to $(\HHH,\{\cdot'_\lambda\}_{\lambda\in \Lambda},\{\circ'_\lambda\}_{\lambda\in \Lambda})$.
\end{proof}

\begin{pro}\label{dsb-dpg}
Let $(\GGG,\{\cdot_\lambda\}_{\lambda\in \Lambda},\{\circ_\lambda\}_{\lambda\in \Lambda})$ be a dynamical skew brace. Define a multiplication $\vartriangleright_\lambda:G \times G \to G$ by
\vspace{-.2cm}
$$a \vartriangleright_\lambda b := a^{\lambda} \cdot_{\lambda} (a \circ_\lambda b),\quad\forall a,b\in G, $$
where $a^{\lambda}$ is the inverse of $a$ in $(G,\Lambda,\{\cdot_\lambda\}_{\lambda\in \Lambda})$. Then
$(\GGG, \{\cdot_\lambda\}_{\lambda\in \Lambda},\{\vartriangleright_\lambda\}_{\lambda\in \Lambda})$ is a dynamical post-group.

Let $\Psi:\GGG\to\HHH$ be a homomorphism of dynamical skew braces from $ (\GGG,\{\cdot_\lambda\}_{\lambda\in \Lambda},\{\circ_\lambda\}_{\lambda\in \Lambda})$ to $
(\HHH,  \{\cdot'_\lambda\}_{\lambda\in \Lambda},\{\circ'_\lambda\}_{\lambda\in \Lambda})$. Then $\Psi$ is a homomorphism of dynamical post-groups from $(\GGG,\{\cdot_\lambda\}_{\lambda\in \Lambda},\{\vartriangleright_\lambda\}_{\lambda\in \Lambda})$ to
$(\HHH,\{\cdot'_\lambda\}_{\lambda\in \Lambda},\{\vartriangleright'_\lambda\}_{\lambda\in \Lambda})$.
\end{pro}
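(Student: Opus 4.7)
The plan is to verify directly the three defining requirements of a dynamical post-group (Definition \ref{d-skew-brace}) for the triple $(\GGG,\{\cdot_\lambda\},\{\vartriangleright_\lambda\})$, and then check compatibility with homomorphisms. Condition (i) is inherited for free, since $(G,\Lambda,\{\cdot_\lambda\})$ is already a constant dynamical group by hypothesis. Invertibility of $L^{\rhd_\lambda}_a$ is immediate from the formula $a \vartriangleright_\lambda b = a^\lambda \cdot_\lambda (a \circ_\lambda b)$: this map is the composition of the left multiplication $b\mapsto a\circ_\lambda b$, which is a bijection in the dynamical group $(\GGG,\{\circ_\lambda\})$, with the left multiplication $x\mapsto a^\lambda \cdot_\lambda x$, a bijection in the group $(G,\cdot_\lambda)$. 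Explicitly, the inverse sends $c$ to $\bar a^\lambda \circ_{\phi_G(\lambda,a)} (a \cdot_\lambda c)$.

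The heart of the argument is verifying the two compatibility identities \eqref{d-post-group1} and \eqref{d-post-group2}. For \eqref{d-post-group1}, I would expand
\[
a \vartriangleright_\lambda (b \cdot_{\phi_G(\lambda,a)} c) = a^\lambda \cdot_\lambda \big(a \circ_\lambda (b \cdot_{\phi_G(\lambda,a)} c)\big),
\]
then apply the skew brace identity \eqref{d-skew-brace-equ} to replace $a \circ_\lambda (b \cdot_{\phi_G(\lambda,a)} c)$ with $(a\circ_\lambda b)\cdot_\lambda a^\lambda \cdot_\lambda (a\circ_\lambda c)$, and regroup in the constant dynamical group $(G,\cdot_\lambda)$ to recover $(a \vartriangleright_\lambda b)\cdot_\lambda (a \vartriangleright_\lambda c)$.

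For \eqref{d-post-group2}, the key observation is the identity $a \cdot_\lambda (a \vartriangleright_\lambda b) = a\circ_\lambda b$, which reduces the left-hand side to $(a\circ_\lambda b)\vartriangleright_\lambda c = (a\circ_\lambda b)^\lambda \cdot_\lambda ((a\circ_\lambda b)\circ_\lambda c)$, and then to $(a\circ_\lambda b)^\lambda \cdot_\lambda (a \circ_\lambda (b \circ_{\phi_G(\lambda,a)} c))$ by the dynamical associativity of $\circ$. For the right-hand side, I would substitute the definition of $b \vartriangleright_{\phi_G(\lambda,a)} c$, then invoke \eqref{d-skew-brace-equ} once more to expand $a \circ_\lambda (b^{\phi_G(\lambda,a)} \cdot_{\phi_G(\lambda,a)} (b\circ_{\phi_G(\lambda,a)} c))$, and finally apply the inverse formula \eqref{equ-d-brace} to identify $a^\lambda \cdot_\lambda (a \circ_\lambda b^{\phi_G(\lambda,a)}) \cdot_\lambda a^\lambda$ with $(a\circ_\lambda b)^\lambda$. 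I expect the bookkeeping of dynamical parameters in this second identity to be the main subtlety, since both $\lambda$ and $\phi_G(\lambda,a)$ appear, and one must carefully invoke the associative law \eqref{product} to align the two sides; this is purely mechanical but error-prone.

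The homomorphism statement is then a short calculation: for $\Psi$ preserving $\cdot$ and $\circ$ (in the dynamical sense \eqref{homo-d-skew-brace2}),
\[
\Psi_\lambda(a\vartriangleright_\lambda b) = \Psi_\lambda(a^\lambda)\cdot'_\lambda \Psi_\lambda(a\circ_\lambda b) = \Psi_\lambda(a)^\lambda \cdot'_\lambda \big(\Psi_\lambda(a)\circ'_\lambda \Psi_{\phi_G(\lambda,a)}(b)\big) = \Psi_\lambda(a)\vartriangleright'_\lambda \Psi_{\phi_G(\lambda,a)}(b),
\]
using that $\Psi$ preserves inverses in each constant group $(G,\cdot_\lambda)$ and that $\phi_G(\lambda,a) = \phi_H(\lambda,\Psi_\lambda(a))$ by \eqref{morphism-d-set}. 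Combined with the trivial preservation of $\cdot$, this yields the required homomorphism of dynamical post-groups.
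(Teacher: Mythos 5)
Your proposal is correct and follows essentially the same route as the paper's proof: condition (i) is inherited, the two compatibility identities are verified by expanding with the skew-brace identity \eqref{d-skew-brace-equ} together with the inverse formula \eqref{equ-d-brace}, and the homomorphism claim is the same short computation using \eqref{homo-d-skew-brace2}. The only cosmetic difference is that you obtain bijectivity of $L^{\rhd_\lambda}_a$ by writing it as a composition of two left multiplications, whereas the paper checks surjectivity and injectivity separately — but the witness $c\mapsto \bar a^{\lambda}\circ_{\phi_G(\lambda,a)}(a\cdot_\lambda c)$ is identical.
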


\begin{proof}
Given any $\lambda\in \Lambda, a,c \in G$, there exists an element $b:=\bar{a}^{\lambda} \circ_{\phi_G(\lambda,a)} (a \cdot_{\lambda} c) $ such that
{\small
\begin{eqnarray*}
a \vartriangleright_\lambda b &=& a^{\lambda} \cdot_{\lambda} (a \circ_\lambda b)\\
&=& a^{\lambda} \cdot_{\lambda} \Big(a \circ_\lambda (\bar{a}^{\lambda} \circ_{\phi_G(\lambda,a)} (a \cdot_{\lambda} c)) \Big)\\
&=& a^{\lambda} \cdot_{\lambda} \Big( (a\circ_\lambda \bar{a}^{\lambda})\circ_\lambda (a \cdot_{\lambda} c) \Big)\\
&=& a^{\lambda} \cdot_{\lambda} (a \cdot_{\lambda} c)\\
&=& c,
\end{eqnarray*}
}
which implies $a\vartriangleright_\lambda -$ is surjective. Moreover, if $a\vartriangleright_\lambda b = a\vartriangleright_\lambda c,$ which is equivalent to $a \circ_\lambda b=a \circ_\lambda c$, then we have
{\small
\begin{eqnarray*}
b&=& (\bar{a}^{\lambda} \circ_{\phi_G(\lambda,a)} a) \circ_{\phi_G(\lambda,a)} b\\
&=& \bar{a}^{\lambda} \circ_{\phi_G(\lambda,a)} (a \circ_{\phi_G(\phi_G(\lambda,a),\bar{a}^{\lambda})} b)\\
&=& \bar{a}^{\lambda} \circ_{\phi_G(\lambda,a)} (a \circ_\lambda b)\\
&=& \bar{a}^{\lambda} \circ_{\phi_G(\lambda,a)} (a \circ_\lambda c)\\
&=& (\bar{a}^{\lambda} \circ_{\phi_G(\lambda,a)} a) \circ_{\phi_G(\lambda,a)} c\\
&=& c,
\end{eqnarray*}
which implies that $a\vartriangleright_\lambda -$ is injective. We also have
\begin{eqnarray*}
a\vartriangleright_\lambda (b \cdot_{\phi_{G}(\lambda,a)} c)
&=& a^{\lambda} \cdot_{\lambda} (a  \circ_{\lambda} (b \cdot_{\phi_{G}(\lambda,a)} c ))\\
&=&  a^{\lambda} \cdot_{\lambda} (a \circ_{\lambda} b) \cdot_{\lambda} a^{\lambda} \cdot_{\lambda} (a \circ_{\lambda} c)\\
&=&(a \vartriangleright_\lambda b)\cdot_\lambda (a \vartriangleright_\lambda c),
\end{eqnarray*}
}
and
{\small
\begin{eqnarray*}
(a \circ_{\lambda} b) \vartriangleright_\lambda c
&=& (a \circ_{\lambda} b)^{\lambda} \cdot_{\lambda} ((a \circ_{\lambda} b) \circ_{\lambda} c)\\
&\overset{\eqref{equ-d-brace}}{=}&  a^{\lambda} \cdot_{\lambda} (a \circ_{\lambda} b^{\phi_G(\lambda,a)}) \cdot_{\lambda} a^{\lambda} \cdot_{\lambda} ((a \circ_{\lambda} b) \circ_{\lambda} c)\\
&=&  a^{\lambda} \cdot_{\lambda} (a \circ_{\lambda} b^{\phi_G(\lambda,a)}) \cdot_{\lambda} a^{\lambda} \cdot_{\lambda} (a \circ_{\lambda} (b \circ_{\phi_G(\lambda,a)} c))\\
&=&  a^{\lambda} \cdot_{\lambda} \Big(a  \circ_{\lambda} (b^{\phi_G(\lambda,a)} \cdot_{\phi_G(\lambda,a)}  (b\circ_{\phi_G(\lambda,a)} c ) ) \Big)\\
&=&  a^{\lambda} \cdot_{\lambda} (a \circ_{\lambda} (b \vartriangleright_{\phi_G(\lambda,a)}c))\\
&=&  a \vartriangleright_{\lambda} (b \vartriangleright_{\phi_G(\lambda,a)}c),
\end{eqnarray*}
}
for all $\lambda\in \Lambda, a,b,c \in G$. Therefore,
$(\GGG, \{\cdot_\lambda\}_{\lambda\in \Lambda},\{\vartriangleright_\lambda\}_{\lambda\in \Lambda})$ is a dynamical post-group.

For the second conclusion, for all $\lambda\in \Lambda, a,b \in G$, we have
{\small
\begin{eqnarray*}
\Psi_{\lambda} (a \vartriangleright_\lambda b)
&=& \Psi_{\lambda} (a^{\lambda} \cdot_\lambda (a \circ_{\lambda} b))\\
&=& \Psi_{\lambda} (a^{\lambda}) \cdot'_{\lambda}  \Psi_{\lambda} (a \circ_{\lambda} b)\\
&\overset{\eqref{homo-d-skew-brace2}}{=}& \Psi_{\lambda} (a)^{\lambda} \cdot'_{\lambda} (\Psi_{\lambda}(a) \circ'_{\lambda} \Psi_{\phi_G(\lambda,a)}(b) )\\
&=& \Psi_{\lambda} (a) \vartriangleright'_\lambda   \Psi_{\phi_G(\lambda,a)}(b),
\end{eqnarray*}
}
which implies that $\Psi$ is a homomorphism of the corresponding dynamical post-groups from $(\GGG,\{\cdot_\lambda\}_{\lambda\in \Lambda},\{\vartriangleright_\lambda\}_{\lambda\in \Lambda})$ to $(\HHH,\{\cdot'_\lambda\}_{\lambda\in \Lambda},\{\vartriangleright'_\lambda\}_{\lambda\in \Lambda})$.
\end{proof}

\begin{thm}
Proposition \ref{dpg-dsb} gives a functor $\frkF:{\bf DPG} \to {\bf DSB}$ and Proposition \ref{dsb-dpg} gives a functor $\frkG:{\bf DSB} \to {\bf DPG}$. Moreover, they give an isomorphism between the categories ${\bf DPG}$ and  ${\bf DSB}$.
\end{thm}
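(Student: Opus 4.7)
The plan is to show that the assignments on objects given by Propositions \ref{dpg-dsb} and \ref{dsb-dpg} are mutually inverse, and that both functors act as the identity on morphisms, so that the two underlying categories are literally the same up to relabeling of the auxiliary operation.

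First I would verify $\frkG\circ\frkF=\Id_{\bf DPG}$. Starting from a dynamical post-group $(\GGG,\{\cdot_\lambda\},\{\vartriangleright_\lambda\})$, the functor $\frkF$ produces the sub-adjacent dynamical skew brace with $a\circ_\lambda b=a\cdot_\lambda(a\vartriangleright_\lambda b)$, and then $\frkG$ produces a new operation $\vartriangleright'_\lambda$ defined by
\[
a\vartriangleright'_\lambda b=a^\lambda\cdot_\lambda(a\circ_\lambda b)=a^\lambda\cdot_\lambda a\cdot_\lambda(a\vartriangleright_\lambda b)=a\vartriangleright_\lambda b,
\]
where we used only the group axioms of the constant dynamical group $(G,\Lambda,\{\cdot_\lambda\})$. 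Hence $\vartriangleright'_\lambda=\vartriangleright_\lambda$ and the post-group is unchanged.

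Next I would verify $\frkF\circ\frkG=\Id_{\bf DSB}$. Starting from a dynamical skew brace $(\GGG,\{\cdot_\lambda\},\{\circ_\lambda\})$, the functor $\frkG$ produces a post-group with $a\vartriangleright_\lambda b=a^\lambda\cdot_\lambda(a\circ_\lambda b)$, and then $\frkF$ produces a new operation $\circ'_\lambda$ defined by
\[
a\circ'_\lambda b=a\cdot_\lambda(a\vartriangleright_\lambda b)=a\cdot_\lambda a^\lambda\cdot_\lambda(a\circ_\lambda b)=a\circ_\lambda b,
\]
so the skew brace is unchanged. At this point I would note that the constant group operation $\{\cdot_\lambda\}$ is preserved on the nose by both functors, which is what makes the two substitutions above legal.

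Finally, for morphisms, both $\frkF$ and $\frkG$ leave the underlying map $\Psi:\Lambda\times G\to H$ untouched; Propositions \ref{dpg-dsb} and \ref{dsb-dpg} already show that a homomorphism of one structure is automatically a homomorphism of the other with respect to the transported operation. Therefore $\frkG\circ\frkF$ and $\frkF\circ\frkG$ are the identity on morphisms as well. I do not expect any real obstacle here: the whole argument is a direct group-theoretic substitution, since the compatibility axioms \eqref{d-post-group1}, \eqref{d-post-group2} and \eqref{d-skew-brace-equ} were precisely engineered so that the change of variables $a\vartriangleright_\lambda b\leftrightarrow a^\lambda\cdot_\lambda(a\circ_\lambda b)$ is an involution. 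The only minor care required is to keep track of the dynamical parameter $\lambda$ and the structure map $\phi_G$, which is inherited unchanged by both functors.
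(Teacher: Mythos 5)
Your proposal is correct and follows essentially the same route as the paper: the object-level assignments are mutually inverse because $a^\lambda\cdot_\lambda a=e_G$ and $a\cdot_\lambda a^\lambda=e_G$ in the constant dynamical group, and both functors are the identity on the underlying morphisms, with well-definedness delegated to Propositions \ref{dpg-dsb} and \ref{dsb-dpg}. In fact your write-up spells out the cancellation computations that the paper's proof leaves implicit.
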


\begin{proof}
Let $(\GGG,\{\cdot_\lambda\}_{\lambda\in \Lambda},\{\vartriangleright_\lambda\}_{\lambda\in \Lambda})$ and $(\HHH,\{\cdot'_\lambda\}_{\lambda\in \Lambda},\{\vartriangleright'_\lambda\}_{\lambda\in \Lambda})$ be two dynamical post-groups, and $\Psi$ be a homomorphism  from $(\GGG,\{\cdot_\lambda\}_{\lambda\in \Lambda},\{\vartriangleright_\lambda\}_{\lambda\in \Lambda})$ to $(\HHH,\{\cdot'_\lambda\}_{\lambda\in \Lambda},\{\vartriangleright'_\lambda\}_{\lambda\in \Lambda})$. By Proposition \ref{dpg-dsb} and \ref{dsb-dpg}, we have $\frkG\frkF(\GGG,\{\cdot_\lambda\}_{\lambda\in \Lambda},\{\vartriangleright_\lambda\}_{\lambda\in \Lambda})=(\GGG,\{\cdot_\lambda\}_{\lambda\in \Lambda},\{\vartriangleright_\lambda\}_{\lambda\in \Lambda})$ and $\frkG\frkF(\Psi)=\Psi$. Thus, $\frkG\frkF=\Id$. Similarly, we can check that $\frkF\frkG=\Id$. Therefore, the functor $\frkF$ is the inverse of the functor $\frkG$. The proof is completed.
\end{proof}

\begin{ex}{\rm
Let $(\mathbb{R},+,\cdot)$ be the field of real numbers. Then $(\mathbb{R},\Lambda=\mathbb{R},\phi,\{\cdot_{\lambda}\}_{\lambda\in \mathbb{R}},\{\circ_{\lambda}\}_{\lambda\in \mathbb{R}})$ is a dynamical skew brace, where $\cdot_{\lambda}:=+$ for all $\lambda\in \mathbb{R}$ and $\circ_{\lambda}:\mathbb{R} \times \mathbb{R} \to \mathbb{R}$, $\phi:\mathbb{R} \times \mathbb{R}\to \mathbb{R}$ are given in Example \ref{ex-d-group-R}.
In fact, this dynamical skew brace is equivalent to the dynamical post-group given in  Example \ref{d-post-group-R}.
}\end{ex}

\begin{ex}{\rm
Let $G=(\mathbb{Z}_3,+)$ be a cyclic group of order $3$ and let $\Lambda=\{\lambda_1,\lambda_2,\lambda_3\}$. Then $(G=\mathbb{Z}_3, \Lambda=\{\lambda_1,\lambda_2,\lambda_3\},\phi,\{\cdot_{\lambda_i}:=+ \}_{\lambda_i \in \Lambda},\{\circ_{\lambda_i}\}_{\lambda_i\in \Lambda})$ is a dynamical skew brace, where $\circ_{\lambda_i}$ is given in Example \ref{ex-d-group}. Actually, this dynamical skew brace is equivalent to the dynamical post-group given in Example \ref{ex-d-post-group}.
}\end{ex}

By Theorem \ref{DPG-BDG} and Proposition \ref{dsb-dpg}, we have the following conclusion.

\begin{cor}\label{d-skew-brace-solution}
Let $(\GGG,\{\cdot_\lambda\}_{\lambda\in \Lambda},\{\circ_\lambda\}_{\lambda\in \Lambda})$ be a dynamical skew brace. Define $R_G:\Lambda \times G \times G \to G \times G$ by
\begin{eqnarray*}
  R_G(\lambda,a,b)=(a^\lambda \cdot_{\lambda} (a \circ_\lambda b),\overline{a^\lambda \cdot_{\lambda} (a \circ_\lambda b)}^{\lambda} \circ_{\phi_G(\lambda,a^\lambda \cdot_{\lambda} (a \circ_\lambda b))} (a \circ_{\lambda} b) ),\quad \forall \lambda \in \Lambda,~a,b\in G.
\end{eqnarray*}
Then $R_G$ is a non-degenerate solution of the dynamical Yang-Baxter equation on the dynamical set $\GGG=(G,\Lambda,\phi_G)$.
\end{cor}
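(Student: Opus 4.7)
The plan is to obtain this corollary as a direct composition of three results already established in the paper: Proposition~\ref{dsb-dpg} (skew brace $\Rightarrow$ post-group), Theorem~\ref{DPG-BDG} (post-group $\Rightarrow$ braided dynamical group), and Theorem~\ref{braided-solution} (braided dynamical group $\Rightarrow$ non-degenerate solution of the DYBE). So no new computation is needed beyond matching formulas at the interface.

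First, starting from the given dynamical skew brace $(\GGG,\{\cdot_\lambda\}_{\lambda\in\Lambda},\{\circ_\lambda\}_{\lambda\in\Lambda})$, I would apply Proposition~\ref{dsb-dpg} to produce the dynamical post-group $(\GGG,\{\cdot_\lambda\}_{\lambda\in\Lambda},\{\vartriangleright_\lambda\}_{\lambda\in\Lambda})$ with
\[
a \vartriangleright_\lambda b := a^\lambda \cdot_\lambda (a \circ_\lambda b), \qquad \forall \lambda\in\Lambda,\ a,b\in G.
\]
Next, I would observe that the sub-adjacent dynamical group of this post-group (in the sense of Theorem~\ref{sub-adj-d-group}) recovers $\{\circ_\lambda\}_{\lambda\in\Lambda}$, since
\[
a \cdot_\lambda (a \vartriangleright_\lambda b) = a \cdot_\lambda a^\lambda \cdot_\lambda (a \circ_\lambda b) = a \circ_\lambda b.
\]
This matching is the only place where the particular shape of the formula in the statement comes into play.

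Then, applying Theorem~\ref{DPG-BDG} to this post-group produces a braided dynamical group $(\GGG_\rhd, R_G)$ whose braiding is given by \eqref{RG}; substituting $a \vartriangleright_\lambda b = a^\lambda \cdot_\lambda (a \circ_\lambda b)$ into \eqref{RG} yields exactly the formula for $R_G$ stated in the corollary. Finally, Theorem~\ref{braided-solution} guarantees that $R_G$ is a non-degenerate solution of the dynamical Yang-Baxter equation on the dynamical set $\GGG=(G,\Lambda,\phi_G)$, completing the argument.

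There is no real obstacle here; the only nontrivial verification is the identification of the sub-adjacent product with $\circ_\lambda$, and this is a one-line calculation as above. All deeper content (the compatibility checks that make $R_G$ solve the DYBE, and the verification that $\sigma$ is a bijection) has already been absorbed into the three cited results, so the proof is essentially a diagram-chase along the isomorphisms ${\bf DSB}\cong{\bf DPG}\cong{\bf BDG}$ followed by Theorem~\ref{braided-solution}.
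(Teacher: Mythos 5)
Your proposal is correct and follows exactly the route the paper takes: the paper derives this corollary directly from Proposition~\ref{dsb-dpg} and Theorem~\ref{DPG-BDG} (with non-degeneracy coming from Theorem~\ref{braided-solution}), and your formula-matching via $a \vartriangleright_\lambda b = a^\lambda \cdot_\lambda (a\circ_\lambda b)$ is the same identification implicit there. No gaps.
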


\noindent
{\bf Acknowledgements.}  This research is supported by NSFC (12271265, 12261131498, 12471060, W2412041) and
the Fundamental Research Funds for the Central Universities and Nankai Zhide Foundation.

\smallskip
\noindent
{\bf Declaration of interests. } The authors have no conflicts of interest to disclose.

\smallskip
\noindent
{\bf Data availability. } Data sharing is not applicable as no data were created or analyzed.

\vspace{-.2cm}

\end{document}